\newcommand{\argmax}{{\sf argmax}}
\newcommand{\argmin}{{\sf argmin}}
\newcolumntype{P}[1]{>{\centering\arraybackslash}p{#1}}
\newcolumntype{M}[1]{>{\centering\arraybackslash}m{#1}}
\newtheorem{problem}{Problem}
\newtheorem{theorem}{Theorem}[section]
\newcommand{\NoOfCategories}{m}
\newcommand{\NoOfItmes}{N}
\def\DEBUG{true}
\def\rem#1{{\marginpar{\raggedright\scriptsize #1}}}
\newcommand{\devr}[1]{\rem{\textcolor{red}{$\bullet$ #1}}}
 \newcommand{\arir}[1]{\rem{\textcolor{blue}{$\bullet$ #1}}}
\newcommand{\anar}[1]{\rem{\textcolor{green}{$\bullet$ #1}}}
  \newcommand{\devr}[1]{}
  \newcommand{\anar}[1]{}
  \newcommand{\arir}[1]{}
\newcommand{\PTAS}{\textsc{PTAS}}
\newcommand{\FPTAS}{\textsc{FPTAS}}
\newcommand{\NP}{\textsc{NP}}
\newcommand{\PP}{\textsc{P}}
\newcommand{\eps}{\varepsilon}
\newcommand{\OPT}{{\sf OPT}}
\begin{document}
\title{Group Fairness for Knapsack Problems}

\author{Deval Patel \\IISc\\ \href{mailto:devalpatel@iisc.ac.in}{devalpatel@iisc.ac.in} \and Arindam Khan\\IISc\\ \href{mailto:arindamkhan@iisc.ac.in}{arindamkhan@iisc.ac.in}\and Anand Louis\\IISc\\ \href{mailto:anandl@iisc.ac.in}{anandl@iisc.ac.in}}
\date{}
\maketitle

\begin{abstract}
	
We study the knapsack problem with {\em group fairness} constraints. The input of the problem consists of a knapsack of bounded capacity and a set of items, each item belongs to a particular category and has an associated {\em weight} and {\em value}. The goal of this problem is to select a subset of items such that all categories are {\em fairly} represented, the total weight of the selected items does not exceed the capacity of the knapsack, and the total value is maximized. We study the fairness parameters such as the bounds on the total value of items from each category, the total weight of items from each category, and the total number of items from each category. We give approximation algorithms for these problems. These \textit{fairness} notions could also be extended to the min-knapsack problem.
The \textit{fair} knapsack problems encompass various important problems, such as participatory budgeting, fair budget allocation, advertising. 
\end{abstract}

\section{Introduction}
The knapsack (also known as {\em max knapsack}) problem is a classical packing problem. The input of the max-knapsack problem consists of a bounded capacity knapsack and a set of items, each having an associated {\em weight} and {\em value}. The objective is to select a subset of items such that the total weight of the selected items does not exceed the capacity of the knapsack and the total value is maximized. The {\em min knapsack}  problem is an extensively studied variant, where the input consists of a set of items, each having an associated {\em weight} and {\em value}, along with the lower bound on the packing value. The goal of this problem is to select a subset of items such that the total value of the selected items is at least the given bound, and the total weight is minimized. 
Max knapsack and min knapsack problems represent the prototypical packing and covering problems, respectively. \par

From the practical viewpoint, the knapsack problem models many prominent industrial problems such as budgeting, cargo packing, resource allocation, assortment planning, etc.~\cite{Keller}. The knapsack problem and its variants are  
special cases of many salient optimization problems, e.g., the generalized assignment problem, the packing and covering linear programs, etc.
They are also key subproblems in the solution of more complex problems, such as the \textit{cutting stock problem} \cite{Vanderbeck}. 

In this work, we consider the notion of \textit{group fairness} in knapsack problems. In this setting, each item belongs to a \textit{category}, and our goal is to compute a subset of items such that each category is \textit{fairly} represented, in addition to the total weight of the subset not exceeding the capacity, and the total value of the subset being maximized. We study various notions of \textit{group fairness}, such as the number of items from each category, the total value of items from each category, and the total weight of items from each category. \par

In recent years, a considerable amount of research \cite{TsangWRTZ19,JosephKMR16,Ali_junaid} has been focused on group fairness, i.e., to ensure that the algorithms are not  biased towards or against any  specific group in the population.  Here, the two key questions are: how to formalize the notion of fairness and how to design efficient algorithms that conform to these formalization.
One such formalization, disparate impact (DI) doctrine \cite{Chouldechova17,Chierichetti0LV17}, posits that any group must have approximately equal or proportional representation in the solution provided by the algorithm. Using this doctrine, Bera et al. \cite{BeraCFN19} introduced a notion of fairness in clustering problems where they deem a solution to be fair if it satisfies two properties: (a) {\em restricted dominance}, which upper bounds the fraction of items from a category, and (b) {\em minority protection}, which asserts a lower bound on the fraction of items from a category.  
They use LP based iterative rounding algorithm and the solution obtained by this algorithm might violate the {\em fairness} constraints by some additive amount. Similar group fairness notions as studied by Bera et al. \cite{BeraCFN19} for clustering problem, have little been studied for resource allocation problems \cite{Benabbou}. Resource allocation problems have been mostly studied under individual fairness and strategic viewpoint \cite{Segal-HaleviS18,BeiLMS19,Khamse-AshariLK18,Luss99,BenabbouCEZ19}. \par

These group fairness notions seem to arise naturally in many practical applications. One such scenario is the case of a server serving multiple clients. These clients can be viewed as categories. The clients have a set of jobs to be processed by the server. Each job has some specific resource requirement, which can be viewed as the weight of the item, and a parameter denoting importance, which can be viewed as the associated value. Since the server has limited amount of available resources, which can be viewed as the capacity of the knapsack, it has to select a subset of jobs that can be processed using available resources. It is expected that the selected subset is {\em fair} for each client.\par

The knapsack problem with fairness constraints can also be used to model the problem of \textit{fair} budget allocation in governing bodies. A government may have various project proposals related to different sectors such as agriculture, education, defense, etc. These sectors can be viewed as the categories. Each project has some cost and value (indicating social impact), which can be viewed as the associated weight and the value of an item,  respectively. A government wants to allocate its budget, which can be viewed as the capacity of the knapsack, such that each sector is \textit{fairly} addressed and the total social impact is maximized. The {\em fair max-knapsack} problem with group value constraint is well suited to model this scenario.

Similarly, we can also model the notion of \textit{participatory budgeting}, which has been proposed recently to take into account the preferences of various stakeholders in the organizations' budget \cite{Goel_A,Aziz_H,Shaprio}. In this process, each stakeholder provides a subset of projects, which can be viewed as items, it prefers. The cost of a project could be viewed as the weight of an item. The number of stakeholders preferring a project could be viewed as the value of an item. An organization could be further divided into subparts, which can be viewed as the categories. The stakeholders associated with a particular subpart would have a preference for specific projects. Any budgeting which does not consider such preferences might lead to an allocation that discriminates against some subparts of the organization. The {\em fair knapsack} problems described in this work are well suited to model this scenario. There are many other applications of our problems in advertising \cite{gollapudi2014fair}, web search \cite{inoue2016selecting}, etc. \par

If all items have identical weights, then the knapsack problem with value lower bound considered in this paper, is a special case of the committee election problem considered by Bredereck et al. \cite{Bredereck2017MultiwinnerEW}. Chekuri and Kumar \cite{Amit_Chekuri} studied maximum covering problems with group budget constraints. In this problem, we are given a collection of sets $\mathcal{S}=\{S_1,S_2, \dots, S_m\}$ where each $S_i$ is a subset of a given ground set $X$. We are also given a partition of $\mathcal{S}$ into $l$ groups. The goal is to pick $k$ sets from $\mathcal{S}$ such that at most one set is selected from each group and the cardinality of their union is maximized. Chekuri and Kumar \cite{Amit_Chekuri} show that the greedy algorithm gives constant factor approximation to this problem. This problem is analogous to one of our problem that has upper bound on number of items as fairness constraint.
\par
There are also some practical heuristics for the knapsack problems (for e.g. \cite{Keller,Meinolf_Sellmann}, etc.), but they do not consider {\em fairness} constraints. It is not straightforward to extend the techniques in these heuristics for knapsack problems with fairness constraints.

\subsection{Our contributions}
We study the following three notions of {\em group fairness} for the knapsack problem.

\begin{itemize}
	\item{\em Bound on the number of items. (\Cref{prob:number_max}) }Given a set of items, each belonging to  one of $\NoOfCategories$ categories, and a range for each category, the problem is to find a subset that maximizes the total value, such that the number of items from each category is in the given range, and the total weight of the subset is at most the capacity of the knapsack.

	\item{\em Bound on weight. (\Cref{prob:weight_max_knapsack}) }Given a set of items, each belonging to one of $\NoOfCategories$ categories, and upper and lower bounds for weight from each category, the problem is to find a subset that maximizes the total value, such that the total weight of items from each category is in between the given bounds, and the total weight of the subset is at most the capacity of the knapsack. 
	
	\item{\em Bound on value. (\Cref{prob:value_max_knapsack}) }Given a set of items, each belonging to one of $\NoOfCategories$ categories, and upper and lower bounds for value from each category, the problem is to find a subset that maximizes the total value, such that the total value of items from each category is in between the given bounds, and the total weight of the subset is at most the capacity of the knapsack.

\end{itemize}

We also study similar {\em fairness} notions in the min-knapsack problem which are defined analogously. We give approximation algorithms for these problems. For any input error parameter $\eps>0$, the running time of our algorithms are polynomial in the input size and $\frac{1}{\eps}$.
Our algorithms output a solution having the total value at least $(1-\eps)$ times the value of the optimal solution (for min-knapsack the weight is at most $(1+\eps)$ times optimal weight), but the solution produced by some of our algorithms might violate the fairness and/or the knapsack constraints by a small amount (multiplicative factor of $\pm \eps$). For these cases with violations, we show that it is even NP-hard to obtain a feasible solution without violating any constraint. Thus violations are necessary for these cases. We summarize the results of our algorithms in Table \ref{Final_res_table}. \par

\begin{table}
	\begin{tabular}{|M{2.8cm}|M{4cm}|M{4cm}|M{4.2cm}|}
		\hline
		\multicolumn{1}{|c|}{Knapsack Problem} & Number of items from category & Total value from category & Total weight from category \\ 
		\hline
		\multirow{2}{*}{Max-Knapsack}   & $(1-\eps,0,0)$    & $(1-\eps,\eps,0)$ & $(1,\eps,\eps)$\\ 
		 & $Theorem$ $\ref{the:alg_number_max}$ & $Theorem$ $\ref{the:alg_value_max_knapsack}$ & $Theorem$  $\ref{the:weight_max_knapsack}$\\ 
		[5pt]\hline
		\multirow{2}{*}{Min-Knapsack}  & $(1+\eps,0,0)$   & $(1,\eps,\eps)$&   $(1+\eps,\eps,0)$\\
		& $Theorem$ $\ref{the:number_min}$ &$Theorem$ $\ref{the:value_min}$ & $Theorem$ $\ref{the:weight_min}$\\ 
		\hline
	\end{tabular}
	\caption{Each entry in the table is represented by a triplet. The first entry in the triplet indicates the approximation ratio achieved by our algorithm. The second entry in the triplet indicates the fractional amount by which the output of the algorithm might violate the fairness constraints. The third entry in the triplet indicates the fractional amount by which the output of the algorithm might violate the knapsack constraint. Reference to the theorem describing respective algorithm is given below the triplet.}
	\label{Final_res_table}
\end{table}	

\paragraph{Proof overview:}
There is a dynamic programming (DP) based algorithm to solve the classical knapsack problem that runs in pseudo-polynomial time. By rounding the values (or weights) such that the rounded value belongs to a small set (if the problem has $n$ items to pack, then the value is rounded to the nearest multiple of $\eps/n$), the DP technique will give an ${\FPTAS}^{\ref{PTAS_def}}$ \cite{ibarra1975fast,lawler1979fast}. At a high level, we also use multi-level DP.  First, we compute bundles of different values of items from the same category. Next, we select one bundle from each category using a different DP table. \par

\subsection{Related Work}\label{related_fairKnapsack}

\paragraph{Max knapsack problem:}
The classical knapsack problem was one of Karp's 21 \NP-complete problems. It is one of the fundamental problems in optimization and approximation algorithms.  $\FPTAS$ for the knapsack problem is known \cite{ibarra1975fast,lawler1979fast}. $\PTAS$\footnote{\label{PTAS_def}A polynomial-time approximation scheme $(\PTAS)$ is an algorithm which takes an instance of an optimization problem and a parameter $\eps > 0$ and, in polynomial time of input size for fixed $\eps$, produces a solution that is within a factor $1 + \eps$ of being optimal (or $1-\eps$ for maximization problems).\\
	A fully polynomial-time approximation scheme $(\FPTAS)$ is a $\PTAS$ with running time polynomial in both input size and $\frac{1}{\eps}$.

} for the multiple knapsack problem is also known \cite{Chekuri,Jansen}. The knapsack problem has also been studied under the multidimensional setting \cite{Frieze, GalvezGHI0W17}. Another variant of the knapsack problem is the submodular knapsack problem, where the value function is submodular \cite{Kulik,Sviridenko,Lee}. The knapsack problem has also been well-studied under an online setting \cite{Albers0L19}. For different variants of the knapsack and related bin packing problems, we refer the readers to \cite{ChristensenKPT17,khan2015approximation}.
\par

\paragraph{Min-knapsack problem: } The min-knapsack problem is a frequently encountered problem in the optimization. The problem admits an \FPTAS~\cite{csirik1991heuristics,padberg1985valid}. As the problem appears as a key substructure of numerous other optimization problems, the polyhedral study of this problem has led to the development of important tools, such as the knapsack cover inequalities and there is a rich connection of the problem with extension complexity  and sum-of-squares hierarchy \cite{BazziFHS17i,KurpiszLM14,FaenzaMMS18}. 

\paragraph{Class-constrained knapsack and fair knapsack: } To the best of our knowledge, the fairness notions for knapsack problems studied in this work have not been studied previously. The closest model to our model that has been studied before is the class-constrained multiple knapsack problem \cite{Shachnai}. This problem restricts the maximum number of classes from which items could be packed in a knapsack.  
The algorithm of the class constrained knapsack problem \cite{Shachnai} uses two levels of dynamic programming, and is similar in spirit to our algorithm

Fairness notions for knapsack under multi-agent valuations have also been studied \cite{Fluschnik}. 
They have several approaches to aggregate voters' preferences: individually best, diverse knapsack and fair knapsack. 
The main difference is that our model ensures fairness through constraints, while their model ensures fairness through objective function. The objective function used in their model is the Nash welfare function \cite{Nash}.
The diverse knapsack model in \cite{Fluschnik} only ensures one representative item from each category, whereas our model allows different lower and upper bounds for different groups. We  focus on approximation schemes for our problems, whereas  \cite{Fluschnik} focused on the parameterized complexity and the complexity of the problem under some restricted special domains.\par
Ajay et al.~\cite{Anand_Nimbodkar} study some notion of group fairness for online/offline matching problem. Their notions of fairness are similar to ours in some respect.

\section{Preliminaries}\label{preliminaries}

\paragraph{Notations.}Let the set $\{1, 2, \dots, r\}$ be denoted by $[r]$. Let $\NoOfCategories$ denote the number of categories. The category number is denoted from the set $[\NoOfCategories]$. Each input item belongs to some category. Let $V_i$ denote the set of all items from category $i$, $\forall i \in [\NoOfCategories]$. For $i\in [\NoOfCategories]$, if $V_i$ has $k$ items, then $V_i:=[k]$, i.e. an item in $V_i$ is represented by a number in $[k]$. Also, assume that $w_{j}^{(i)}$ and $v_{j}^{(i)}$ are the weight and the value of item $j\in V_i$, respectively, $\forall i\in [\NoOfCategories]$. Also let $\max \{ \mid V_i\mid \mid i \in [\NoOfCategories] \}=n$. Let $\NoOfItmes$ be the total number of items. By our notations, $\NoOfItmes \leq n\NoOfCategories$.\par

All the algorithms presented in this paper use dynamic programming tables. The entry in the dynamic programming table might represent the total weight or value of a subset of items. The respective subsets can be obtained by maintaining the required references in the tables. Details of it are trivial and hence they are not discussed in this paper.\par

We will use following abbreviations for the problems discussed in this chapter.

\paragraph{Max knapsack problems:} Abbreviations for max knapsack problems: \\
$BN^{\max}$ : Bound (upper and lower both) on number of items in max knapsack (\Cref{prob:number_max}).\\
$BV^{\max}$ : Bound (upper and lower both) on value in max knapsack (\Cref{prob:value_max_knapsack}).\\
$BW^{\max}$ : Bound (upper and lower both) on weight in max knapsack (\Cref{prob:weight_max_knapsack}).

\paragraph{Min knapsack problems:} Abbreviations for min knapsack problems: \\
$BN^{\min}$ : Bound (upper and lower both) on number of items in min knapsack (\Cref{prob:number_min}).\\
$BV^{\min}$ : Bound (upper and lower both) on value in min knapsack (\Cref{prob:value_min}).\\
$BW^{\min}$ : Bound (upper and lower both) on weight in min knapsack (\Cref{prob:weight_min}).\\

\begin{problem}[max-knapsack]\label{classical_knapsack}
Given a set of items, each with a weight and a value, along with a maximum weight capacity $W$, find a subset of items maximizing the total value, such that the total weight of the subset is at most $W$.
\end{problem}

\begin{problem}[min-knapsack]\label{classical_min_knapsack}
	Given a set of items, each with a weight and a value, along with a lower bound $M$, find a subset of items minimizing  the total weight, such that the total value of the subset is at least $M$.
\end{problem}

\begin{problem}[subset sum]\label{subsetSum2}
	Given a set $I$ of non-negative rationals such that $\sum_{a \in I}a =1$, find a subset $S \subseteq I$ such that $\sum_{a \in S}a =\frac{1}{2}$.
\end{problem}

All of the above problems are known $\NP$-hard problems.

\section{Algorithms for fair max-knapsack}\label{Fair_max_knapsack}
We study fair max-knapsack problems for various notions of fairness and give an algorithm for each of them. We consider various parameters for fairness such as the number of items from each category, bound on the total value of items, and the total weight of items from each category.

\subsection{Fairness based on number of items}

\begin{problem}[$BN^{\max}$]\label{prob:number_max}
Given a set of items, each belonging to  one of $\NoOfCategories$ categories, and numbers $l_{i}^{n}$ and $u_{i}^{n}$ for each category $i, \forall i \in [\NoOfCategories]$, the problem is to find a subset that maximizes the total value, such that the number of items from category $i$ is between $l_{i}^{n}$ and $u_{i}^{n}$, $\forall i\in [\NoOfCategories]$, and the total weight of the subset is at most the capacity of the knapsack $B$. 	
\end{problem}

\begin{theorem}\label{the:alg_number_max}
For any $\eps>0$, there exists a $(1-\eps)$-approximation algorithm for $BN^{\max}$ (Problem \ref{prob:number_max}) with running time $O\left(\frac{n \NoOfItmes^3}{{\eps}^2}\right)$.
\end{theorem}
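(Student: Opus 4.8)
The plan is to use the two-level dynamic programming strategy outlined in the proof overview, combined with value rounding to keep the table sizes polynomial. First I would handle the $m=1$ case (or more precisely, the single-category subproblem), which already requires care: within category $i$ we need, for each possible count $c \in \{l_i^n, \ldots, u_i^n\}$ and each achievable (rounded) value level, the minimum total weight of a subset of $V_i$ of size exactly $c$ with that value. This is itself a DP over the items of $V_i$ with state $(\text{item index}, \text{count so far}, \text{rounded value so far})$; the count dimension has size $O(n)$ and, after rounding values to multiples of $\eps \cdot (\text{scale})/N$, the value dimension has size $O(N/\eps)$. So for each category we build a table $A_i[c][\hat v] = $ minimum weight, and the work is $O(|V_i| \cdot n \cdot N/\eps)$ per category, hence $O(nN \cdot nN/\eps) = O(n^2 N^2/\eps)$ summed crudely — I will need to be a little more careful with the accounting to land on the claimed $O(nN^3/\eps^2)$, presumably by choosing the rounding granularity relative to $\OPT$ (a guessed power-of-two, costing an extra $O(\log(\cdot))$ or $O(N/\eps)$ factor) rather than relative to the sum of all values.

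Next I would combine across categories. Treat each category $i$ as offering a menu of ``bundles,'' one per feasible rounded value level $\hat v$ (each bundle being the min-weight subset of the allowed size range achieving value $\ge \hat v$, read off from $A_i$). Now run a second DP over categories $i = 1, \ldots, m$ with state $(\text{category index}, \text{total rounded value selected so far})$, storing the minimum total weight; at category $i$ we iterate over which bundle (which $\hat v$-level) to take. The value dimension is again $O(N/\eps)$ after rounding, so this second stage costs $O(m \cdot (N/\eps)^2) = O(mN^2/\eps^2)$, and since $m \le N$ this is within the claimed bound. Finally, among all table entries whose accumulated weight is $\le B$, output the one of maximum accumulated value, and reconstruct the item set via stored back-pointers (the paper already says it suppresses these bookkeeping details).

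The correctness argument has two pieces. Feasibility: the per-category DP only ever records subsets with count in $[l_i^n, u_i^n]$, and the cross-category DP sums weights, so any returned solution respects every count constraint and the budget $B$. Approximation: fix an optimal solution $\OPT$; restricted to category $i$ it has some true value $v_i^*$ and weight $w_i^*$ with the right count; rounding down to the nearest granularity loses at most (granularity) per category, hence at most $O(\eps)$ fraction of $\sum_i v_i^* = \OPT$ in total, while the recorded min-weight bundle at that rounded level has weight $\le w_i^*$, so the weights still sum to $\le B$. Therefore the DP finds a feasible solution of value $\ge (1-\eps)\OPT$. The main obstacle I anticipate is purely the rounding calibration: getting a clean $(1-\eps)$ bound (not $(1-\eps)$ per category, which would compound to $(1-\eps)^m$) requires rounding each category's value to an \emph{additive} grid of width $\eps \cdot \OPT / N$ (or $\eps V_{\max}/N$ after a standard guess of $\OPT$ up to a factor of $2$), and then verifying that the total additive loss $N \cdot (\eps \OPT/N) = \eps \OPT$ is absorbed — together with checking that this grid choice is exactly what produces the $N^3/\eps^2$ running time rather than something worse. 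I would also double-check the edge case where some category's feasible count range forces inclusion of items that alone exceed $B$, in which case the instance is simply infeasible and the DP correctly reports no valid entry.
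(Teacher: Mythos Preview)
Your proposal is correct and follows essentially the same two-level DP as the paper: a per-category table $\mathcal{A}(i,j,v,t)$ storing the minimum weight of a size-$t$ subset of the first $j$ items of $V_i$ with rounded value $v$, followed by a cross-category table $\mathcal{B}(i,v)$ combining these bundles, with the answer read off as the largest $v$ with $\mathcal{B}(m,v)\le B$. The one small calibration difference is that the paper avoids guessing $\OPT$: it first discards every item not appearing in any feasible solution, takes $v_{\max}$ over the survivors (so $v_{\max}\le\OPT$), and rounds to multiples of $\eps v_{\max}/N$, which directly gives the additive loss $\le \eps v_{\max}\le \eps\,\OPT$ without an outer search loop.
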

\begin{proof}
The algorithm is described in Algorithm \ref{alg:fairness_based_number}.The algorithm starts by rounding the values of all items so that the rounded values lie in a small range (Step \ref{rounding_number_max}). Then it creates bundles having different cardinality and total rounded value from items in $V_i$, $\forall i \in [\NoOfCategories]$. It uses the dynamic programming table $\mathcal{A}$ for this (Step \ref{A_number_max}). After that the algorithm combines bundles from all categories to obtain the final solution using the dynamic programming table $\mathcal{B}$ (Step \ref{B_number_max}). The following describes formal proof of the algorithm.\par	
	
	\RestyleAlgo{boxruled}
	\begin{algorithm}
		\caption{Algorithm for $BN^{\max}$ (Problem \ref{prob:number_max})} \label{alg:fairness_based_number}
		\textbf{Input:} The sets $V_{i}$ of items, $0\leq l_{i}^{n}\leq u_{i}^{n}$ for $i\in[\NoOfCategories]$, $B$ the capacity of knapsack and $\eps>0$. \\
		\textbf{Output:}  $S$ having the total value at least $1-\eps$ times the optimal value of $BN^{\max}$ (Problem \ref{prob:number_max}), such that $l_{i}^{n} \leq \mid S \cap V_i\mid \leq u_{i}^{n}$, $\forall i \in [\NoOfCategories]$ and the total weight of $S$ is at most the knapsack weight $B$.\\
		\begin{enumerate}
			
			\item \label{Remove item_number_max} Remove all items $j\in V_i$, $\forall i\in [\NoOfCategories]$ that do not come under any feasible solution. We can check whether an item comes under a feasible solution by checking the weight of least weight feasible subset containing the item is less than or equal to $B$. Let $v_{\max}$ be the maximum value of the remaining items.
			\item \label{rounding_number_max}$\forall i\in [\NoOfCategories]$, $\forall j \in  V_i$, let
			\[ {v}_{j}^{(i)}{'}:= \left \lfloor \frac{v_{j}^{(i)}}{\eps v_{\max}}\NoOfItmes \right\rfloor.\]
			
			\item \label{A_number_max} Let $\mathcal{A}(i,j,v,t)$, $\forall i \in [\NoOfCategories]$, $\forall j \in  V_i $, $\forall t \in V_i \cup \{0\}$, $\forall v \in \left[\left \lceil\frac{\NoOfItmes^2}{\eps}\right \rceil\right]\cup \{0\}$, be the dynamic \\programming table constructed in the following way,
			
			\begin{enumerate}

				\item \label{A1_number_max}$\forall i\in [\NoOfCategories]$,\\ $\mathcal{A}(i,1,{v}^{(i)}_{1}{'},1):=w^{(i)}_{1}$. $\mathcal{A}(i,1,0,0)=0$. $\mathcal{A}(i,1,{v{'}},.):=\infty$ for $v{'} \in \left[\left \lceil\frac{\NoOfItmes^2}{\eps}\right \rceil\right] \setminus \{v_{1}^{(i)'}\}$. 
				
				\item  \label{AI_number_max} $\forall i\in [\NoOfCategories]$ and $\forall j \in V_i\setminus\{1\}$, $\forall t \in V_i \cup \{0\}$, $\forall v \in \left[\left \lceil\frac{\NoOfItmes^2}{\eps}\right \rceil\right]$,
				
				If $v<v_{j}^{(i)}$ or $t=0$, then $\mathcal{A}(i,j,v,t)= \mathcal{A}(i,j-1,v,t) $.\\
				Else ,
			\begin{equation*}
							\begin{split}
				\mathcal{A}(i,j,v,t):=\min &\left \{ \mathcal{A}(i,j-1,v-{v}_{j}^{(i)}{'},t-1)+w_{j}^{(i)}, \right.\\ &\left.  \mathcal{A}(i,j-1,v,t) \mid 0 \leq {v}_{j}^{(i)}{'} \leq v  \right\}.
				\end{split}
				\end{equation*}

			\end{enumerate}

			\item \label{B_number_max}  Let $\mathcal{B}(i,v)$, $\forall i\in[\NoOfCategories]$, $\forall v \in \left[\left \lceil\frac{\NoOfItmes^2}{\eps}\right \rceil\right] \cup \{0\}$ be another the dynamic programming table.
			
			\begin{enumerate}

			\item \label{B1_number_max} $\forall v \in \left[\left \lceil\frac{\NoOfItmes^2}{\eps}\right \rceil\right] \cup \{0\}$ ,
			\[\mathcal{B}(1,v):=\min\{\mathcal{A}(1,\mid V_1\mid,v,t)\mid \; l_{1}^{n}\leq t\leq u_{1}^{n}\}.\] 
			
			\item\label{BI_number_max} For $i \in [\NoOfCategories]\setminus\{1\}$ and $v \in \left[\left \lceil\frac{\NoOfItmes^2}{\eps}\right \rceil\right] \cup \{0\}$,
			
			\begin{equation*}
			\begin{split}
			\mathcal{B}(i,v):= \min &\left\{ \mathcal{B}(i-1,v_r)+\mathcal{A}(i,\mid V_i \mid,v-v_r,t) \mid l_{i}^{n}\leq t\leq u_{i}^{n} \right. \\
			& \left. v_r \leq v \; \& \; v_{r} \in \left[\left \lceil\frac{\NoOfItmes^2}{\eps}\right \rceil\right] \cup \{0\} \right\} .
			\end{split}
			\end{equation*}
				
			\end{enumerate}

			\item \label{output_number_max}	Output $S$ in the following way,
			\[ \argmax_{v} \{ \mathcal{B}(\NoOfCategories,v) \mid \mathcal{B}(\NoOfCategories,v) \leq B \}. \]
			\end{enumerate}
\end{algorithm}

	\paragraph{Property of $\mathcal{A}$. }We claim that $\mathcal{A}(i,j,v,t)$, $\forall i \in [\NoOfCategories]$,$\forall t\in V_i \cup \{0\}$, $j\in V_i$, $\forall v \in \left[ \left \lceil  \frac{\NoOfItmes^2}{\eps} \right \rceil \right] \cup \{0\}$, denotes the weight of a minimum weight subset of cardinality $t$ from first $j$ items of $V_i$ having the total rounded value $v$. Let $S{'}$ be the subset satisfying above property for the entry $\mathcal{A}(i,j,v,t)$, $\forall i \in [\NoOfCategories]$, $\forall j \in V_i\setminus \{1\}$, $\forall t \in V_i\cup \{0\}$, $\forall v \in \left[ \left \lceil  \frac{\NoOfItmes^2}{\eps} \right \rceil \right] \cup \{0\}$.
	
	\begin{itemize}
	    \item If $j\in S{'}$, then $\mathcal{A}(i,j,v,t)$ is the sum of weight of $j$ ($w_{j}^{(i)}$) and the weight of minimum weight subset from first $j-1$ items of $V_i$ having cardinality $t-1$ and the rounded value $v-v_{j}^{(i)}{'}$ ($\mathcal{A}(i,j-1,v-v_{j}^{(i)}{'},t-1)$), which is equal to $\mathcal{A}(i,j-1,v-v_{j}^{(i)}{'},t-1)+w_{j}^{(i)}$.
	    \item If $j\notin S{'}$, then $\mathcal{A}(i,j,v,t)$ is equal to the weight of minimum weight subset from first $j-1$ items of $V_i$ having cardinality $t$ and the rounded value $v$, which is equal to $\mathcal{A}(i,j-1,v,t)$.  
	\end{itemize}
	 The recursion in Step \ref{AI_number_max} captures both of above possibilities. The Step \ref{A1_number_max} initializes base cases for the recursion in Step \ref{AI_number_max}. The entry of $\mathcal{A}$ also corresponds to the respective subset. The entry of $\mathcal{A}$ could be $\infty$, which indicates no subset. We use the notation $\mathcal{A}(i,j,v,t)$ to indicate both the entry and the subset.
	 
	\paragraph{Property of $\mathcal{B}$. }We claim that $\mathcal{B}(i,v)$, $\forall i \in [\NoOfCategories]$, $\forall v \in \left[ \left \lceil  \frac{\NoOfItmes^2}{\eps} \right \rceil \right]\cup \{0\}$, denotes the weight of a minimum weight subset of $\cup_{j=1}^{i} V_j$ having the total rounded value $v$, such that fairness constraints are satisfies for the subset for all the categories up to $i$. Let $S{'}$ be the subset of $\cup_{j=1}^{i} V_j$ satisfying above property for $\mathcal{B}(i,v)$, for some $ i \in [\NoOfCategories]\setminus\{1\}$ and $v \in \left[ \left \lceil  \frac{\NoOfItmes^2}{\eps} \right \rceil \right]\cup \{0\}$. If $\sum_{j\in S{'}\cap V_i}v_{j}^{(i)}{'}=v_r$, then $\mathcal{B}(i,v)$ is the sum of weights of minimum weight subset of $\cup_{j=1}^{i-1}V_j$ having the total rounded value $v-v_r$ that satisfies fairness condition for all categories up to $i-1$ ($\mathcal{B}(i-1,v-v_r)$), and the minimum weight subset of $V_i$ having total rounded value $v_r$ that satisfies fairness condition for category $i$ ($\mathcal{A}(i,\mid V_i\mid,v_r,t)$ such that $l_{i}^{n}\leq t \leq u_{i}^{n}$). The recursion in Step \ref{BI_number_max} captures this for all possible $v_r$. The entry of $\mathcal{B}$ also corresponds to the respective subset. The entry of $\mathcal{B}$ could be $\infty$, which indicates no subset. We use the notation $\mathcal{B}(i,v)$ to indicate both the entry and the subset.  \par

 By the property of $\mathcal{B}$, the total rounded value of $S$ (Step \ref{output_number_max}) is at least the total rounded value of any optimal solution. By \Cref{lema:number_max}, we can show that the total value of $S$ is at least $(1-\eps)$ times the optimal solution. 
	
	\paragraph{Running time analysis:}The size of the table $\mathcal{A}$ is $O\left(\frac{n\NoOfItmes^3}{\eps}\right)$, and to fill each entry in the table $\mathcal{A}$, we require $O(1)$ time. The size of the table $\mathcal{B}$ is $O\left(\frac{\NoOfItmes^2 \NoOfCategories}{\eps}\right)$, and to fill each entry in the table $\mathcal{B}$, we require $O\left(\frac{n\NoOfItmes^2}{\eps}\right)$ time. So, the total time required is $O\left(\frac{n\NoOfItmes^3}{\eps^2}\right)$.
\end{proof}

\begin{theorem}\label{lema:number_max}
	If $\OPT$ is the optimal objective value of $BN^{\max}$ (Problem \ref{prob:number_max}), then the total value of a set returned by Algorithm \ref{alg:fairness_based_number} is at least $(1-\eps)\OPT$.
\end{theorem}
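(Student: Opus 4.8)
The plan is to show that the rounding step (Step~\ref{rounding_number_max}) loses only a $(1-\eps)$ factor in the objective, and that the dynamic programming correctly finds the best rounded solution. These two facts combine to give the claim.

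First I would fix an optimal solution $\OPT$ (abusing notation to also denote its value) and let $S^*$ be the set of items it selects. Since the preprocessing in Step~\ref{Remove item_number_max} only removes items that cannot appear in any feasible solution, $S^*$ survives and remains feasible for the rounded instance, because the fairness constraints and the knapsack constraint depend only on the cardinalities and weights, which rounding does not touch. Now set $\alpha := \frac{\eps v_{\max}}{\NoOfItmes}$, so that $v_j^{(i)\prime} = \lfloor v_j^{(i)}/\alpha \rfloor$, hence $\alpha v_j^{(i)\prime} \le v_j^{(i)} \le \alpha v_j^{(i)\prime} + \alpha$. The key point is that the DP over tables $\mathcal{A}$ and $\mathcal{B}$ (whose correctness is established by the ``Property of $\mathcal{A}$'' and ``Property of $\mathcal{B}$'' paragraphs) returns, among all feasible subsets whose total \emph{rounded} value is at most the relevant bound, one that is actually weight-minimized for each rounded-value level; consequently the output $S$ from Step~\ref{output_number_max} has total rounded value at least that of $S^*$, i.e. $\sum_{\text{items in }S} v_j^{(i)\prime} \ge \sum_{\text{items in }S^*} v_j^{(i)\prime}$, while still respecting all fairness bounds and the capacity $B$.

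Next I would convert the inequality on rounded values back to real values. On the one hand, $\mathrm{val}(S) = \sum_{S} v_j^{(i)} \ge \alpha \sum_{S} v_j^{(i)\prime} \ge \alpha \sum_{S^*} v_j^{(i)\prime}$. On the other hand, $\sum_{S^*} v_j^{(i)\prime} \ge \sum_{S^*} (v_j^{(i)}/\alpha - 1) = \OPT/\alpha - |S^*|$. Combining, $\mathrm{val}(S) \ge \OPT - \alpha |S^*| = \OPT - \eps v_{\max} |S^*| / \NoOfItmes \ge \OPT - \eps v_{\max}$, since $|S^*| \le \NoOfItmes$. Finally, because every remaining item individually lies in some feasible solution, the optimum packs at least one item, so $\OPT \ge v_{\max}$; therefore $\mathrm{val}(S) \ge \OPT - \eps v_{\max} \ge (1-\eps)\OPT$, as desired.

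The only genuinely delicate point is justifying that the DP output $S$ dominates $S^*$ in rounded value while remaining feasible — this rests entirely on the stated correctness properties of $\mathcal{A}$ and $\mathcal{B}$, so I would phrase that step as an appeal to those paragraphs rather than re-deriving them. I should also double-check two edge cases: that $v_{\max} > 0$ (otherwise all surviving items have value $0$ and the statement is trivial), and that the index range $\left[\left\lceil \NoOfItmes^2/\eps \right\rceil\right]$ used for $v$ in the tables is large enough to contain the total rounded value of $S^*$, which holds since each $v_j^{(i)\prime} \le \NoOfItmes/\eps$ and there are at most $\NoOfItmes$ items. With those observations in place the proof is a short chain of inequalities.
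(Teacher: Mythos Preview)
Your proof is correct and follows essentially the same route as the paper: both arguments use the rounding bounds $\alpha v_j^{(i)\prime} \le v_j^{(i)} \le \alpha v_j^{(i)\prime} + \alpha$ (with $\alpha = \eps v_{\max}/\NoOfItmes$), the fact that the DP output maximizes total rounded value among feasible sets, and the observation $v_{\max} \le \OPT$ from Step~\ref{Remove item_number_max} to conclude $\mathrm{val}(S) \ge \OPT - \eps v_{\max} \ge (1-\eps)\OPT$. Your additional sanity checks (that the rounded-value index range is wide enough and that $v_{\max} > 0$) are welcome; just tighten the justification of $\OPT \ge v_{\max}$ to say directly that the item achieving $v_{\max}$ lies in some feasible solution, rather than going through ``the optimum packs at least one item.''
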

\begin{proof}
	Let $O \subseteq \cup_{i=1}^{\NoOfCategories} V_{i}$ be the set of items in an optimal solution and $S \subseteq \cup_{i=1}^{\NoOfCategories} V_{i}$ be the set of items selected by Algorithm \ref{alg:fairness_based_number}. Since $S$ is an optimal solution of rounded value in Algorithm \ref{alg:fairness_based_number}, the total rounded value of $S$ is more than the total rounded value of $O$.  
	\begin{equation}\label{eq:optimal_rounded_sol_max}
	\sum_{i=1}^{\NoOfCategories} \sum_{j \in V_i \cap O} {v}_{j}^{(i)}{'} \leq \sum_{i=1}^{\NoOfCategories} \sum_{j \in V_i \cap S} {v}_{j}^{(i)}{'}.
	\end{equation}

	Because of the rounding in Step \ref{rounding_number_max}, we have following inequalities $\forall i\in [\NoOfCategories]$ and $\forall j\in  V_i $,
	\begin{equation}\label{eq:rounded_val_eq_max}
	\frac{\eps v_{\max}{v}_{j}^{(i)}{'}}{\NoOfItmes} \leq v_{j}^{(i)} \leq \frac{\eps v_{\max} ({v}_{j}^{(i)}{'}+1)}{\NoOfItmes} = \frac{\eps v_{\max} {v}_{j}^{(i)}{'}}{\NoOfItmes}+\frac{\eps v_{\max}}{\NoOfItmes}. 
	\end{equation}
	
	So we get,
	\begin{align*}
	\sum_{i=1}^{\NoOfCategories} \sum_{j \in V_i \cap O} v_{j}^{(i)} & \leq \frac{\eps v_{\max}}{\NoOfItmes} \left(\sum_{i=1}^{\NoOfCategories} \sum_{j\in V_i \cap O} ({v}_{j}^{(i)}{'}+1) \right) & \textrm{(using Inequality \ref{eq:rounded_val_eq_max})}\\
	& \leq \frac{\eps v_{\max}}{\NoOfItmes} \left(\sum_{i=1}^{\NoOfCategories} \sum_{j\in V_i \cap O} {v}_{j}^{(i)}{'} \right)+\eps v_{\max} & \textrm{$\left(\sum_{i=1}^{\NoOfCategories} \sum_{j\in V_i \cap O}1\leq \NoOfItmes\right)$}\\
			& \leq \frac{\eps v_{\max}}{\NoOfItmes} \left(\sum_{i=1}^{\NoOfCategories}  \sum_{j\in V_i \cap S} {v}_{j}^{(i)}{'} \right)+ \eps v_{\max} &\textrm{(using Inequality \ref{eq:optimal_rounded_sol_max})}\\
    & \leq \left(\sum_{i=1}^{\NoOfCategories}  \sum_{j\in V_i \cap S} v_{j}^{(i)} \right)+ \eps v_{\max}.&\textrm{(using Inequality \ref{eq:rounded_val_eq_max})}
	\end{align*}
	
Since Step \ref{Remove item_number_max} of the Algorithm \ref{alg:fairness_based_number} discards all the items which doesn't come in any feasible solution, we know that $v_{\max}\leq \OPT$. So,
	\[ \left(\sum_{i=1}^{\NoOfCategories} \sum_{j\in V_i \cap S} v_{j}^{(i)} \right) \geq (1-\eps)\OPT .\]
\end{proof}

\subsection{Fairness based on bound on value}\label{sec_lowerBound_VALMaxknapsack}
\begin{problem}[$BV^{\max}$]\label{prob:value_max_knapsack}
	Given a set of items $V$, each belonging to one of $\NoOfCategories$ categories, a lower bound $l_{i}^{v}\geq 0$ and an upper bound $u_{i}^{v}$, such that $u_{i}^{v}\geq l_{i}^{v},$ for each category $i$, the goal is to find a subset that maximizes the total value, such that the total value of items from the category $i$ is in between $l_{i}^{v}$ and $u_{i}^{v}$, $\forall i\in[\NoOfCategories]$, and the total weight of the subset is at most the knapsack capacity $B$. 	
\end{problem}

We prove that it is $\NP$-hard to obtain the feasible solution of an instance of $BV^{\max}$ (Problem \ref{prob:value_max_knapsack}). 

\begin{theorem}\label{hardness_result_lowebound_value}
	There is no polynomial time algorithm that outputs feasible solution of $BV^{\max}$ (Problem \ref{prob:value_max_knapsack}), assuming $\PP\neq \NP$.
\end{theorem}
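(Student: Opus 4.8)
The plan is to reduce from the subset sum problem (\Cref{subsetSum2}), which is $\NP$-hard by assumption. Given an instance $I=\{a_1,\dots,a_k\}$ of non-negative rationals with $\sum_{a\in I}a=1$, I would build an instance of $BV^{\max}$ (\Cref{prob:value_max_knapsack}) with a single category ($\NoOfCategories=1$): for every $a_t\in I$ create one item of value $v_t:=a_t$ and weight $w_t:=0$, set the knapsack capacity $B:=0$, and set the category bounds $l_1^{v}:=u_1^{v}:=\tfrac12$. (If one prefers integer data, first multiply every $a_t$ by twice the least common multiple of the denominators of the $a_t$'s; this keeps the target an integer and does not change which subsets meet the bounds.) The construction is clearly computable in polynomial time.

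Next I would verify the correspondence. In this instance every subset of items has total weight $0\le B$, so the knapsack constraint is vacuous and a subset $S$ is feasible for $BV^{\max}$ exactly when $l_1^{v}\le \sum_{t\in S}v_t\le u_1^{v}$, i.e.\ when $\sum_{t\in S}a_t=\tfrac12$. Thus feasible solutions of the constructed $BV^{\max}$ instance are precisely the solutions of the given subset sum instance, and vice versa. Consequently, a polynomial-time algorithm that, given any instance of $BV^{\max}$, outputs a feasible solution when one exists (and reports infeasibility otherwise) would decide subset sum in polynomial time, contradicting $\PP\neq\NP$; hence no such algorithm exists.

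The only point requiring care is arranging that "$\sum_{t\in S}a_t=\tfrac12$" is the \emph{exact} feasibility condition, which is why the reduction uses a degenerate range $l_1^{v}=u_1^{v}$; this is also the reason the hardness persists even though the maximization objective is irrelevant here — all the difficulty lies in simultaneously meeting the lower and upper value bounds. I do not anticipate a genuine obstacle: the argument is a direct, gap-free reduction, and the rational-to-integer rescaling and the handling of the (trivial) weight constraint are routine.
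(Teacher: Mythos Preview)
Your reduction is correct and follows essentially the same approach as the paper: a direct reduction from the subset sum problem (\Cref{subsetSum2}) using a single category, forcing feasibility to coincide with the condition $\sum_{a\in S}a=\tfrac12$. The only cosmetic difference is in the parameters used to pin down exactness: the paper sets $l_1^v=\tfrac12$, $u_1^v=1$, $B=\tfrac12$ with each item's weight equal to its value (so the knapsack capacity supplies the upper bound), whereas you set $l_1^v=u_1^v=\tfrac12$ with all weights zero and make the knapsack constraint vacuous; both choices yield the identical feasibility condition and the argument goes through unchanged.
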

\begin{proof}
Given an instance of subset sum (\Cref{subsetSum2}) with the  set $I$, we will construct an instance of $BV^{\max}$ (Problem \ref{prob:value_max_knapsack}). Let $\NoOfCategories=1$, $l_{1}^{v}=\frac{1}{2}$, $u_{1}^{v}=1$, $B=\frac{1}{2}$. $V_1$ is the set of items in the instance of Problem \ref{prob:value_max_knapsack}.  The items in $V_1$ correspond to the numbers in $I$. An item corresponding to some $a\in I$ has a value and a weight equal to $a$. This proves that if we have an algorithm for $BV^{\max}$ (\Cref{prob:value_max_knapsack}) that doesn't violate the fairness constraint, then we can solve subset sum. But assuming $\PP\neq \NP$, this is not possible.
\end{proof} 

Theorem \ref{hardness_result_lowebound_value} implies that there does not exists polynomial time algorithm for $BV^{\max}$ (Problem \ref{prob:value_max_knapsack}). We give here an algorithm for $BV^{\max}$ (\Cref{prob:value_max_knapsack}) that might violates fairness constraints for a category by small amount.
\begin{theorem}\label{the:alg_value_max_knapsack}
	For any $\eps>0$, there exists an algorithm for $BV^{\max}$ (\Cref{prob:value_max_knapsack}) that outputs a set $S$ having the total value at least $1-\eps$ times the optimal value of $BV^{\max}$ (\Cref{prob:value_max_knapsack}), such that $(1-\eps)l_{k}^{v} \leq \sum_{r \in S \cap V_k} v^{(k)}_r \leq (1+\eps)u_{k}^{v}$, $\forall k \in [\NoOfCategories]$, and the total weight of $S$ is at most $B$. The running time of the algorithm is $O \left( \frac{n^2\NoOfCategories \log^3 \NoOfCategories \log_{1+\eps}^{3} \left(\frac{\NoOfItmes v_{\max}}{v_{\min}}\right)}{\eps}\right)$, where
	$v_{\min}:=\min \left\{ v_{j}^{(i)} \mid i \in [\NoOfCategories] \; \& \; j \in  V_i \; \& \; v_{j}^{(i)}>0  \right\}$. and $v_{\max}:=\max$ $ \left\{ v^{(i)}_{j} \mid i \in [\NoOfCategories] \; \& \; j \in V_i   \right\}$.
\end{theorem}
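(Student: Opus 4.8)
The plan is to mimic the two-level dynamic programming approach used for $BN^{\max}$, but now the quantity we must track per category is the total \emph{value} of the selected items rather than their cardinality, and the objective value is itself this same quantity — so we need to discretize values carefully. First I would preprocess as in Algorithm \ref{alg:fairness_based_number}: discard every item that lies in no feasible solution (checked by a min-weight feasibility computation), and note that after this step $v_{\max}\le\OPT$. Because value now plays two roles (it is constrained per category and it is the objective), a single additive $\eps/\NoOfItmes$-style rounding will not simultaneously control the multiplicative violation of the lower bound $l_k^v$ and the approximation ratio; this is why the bound in the statement is $(1-\eps)l_k^v$ on one side and $(1+\eps)u_k^v$ on the other. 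The natural fix is a \emph{geometric} bucketing of values: round each $v_j^{(i)}$ down to the nearest power of $(1+\eps')$ for a suitable $\eps'=\Theta(\eps)$, so that the number of distinct rounded values is $O\!\left(\log_{1+\eps}\frac{\NoOfItmes v_{\max}}{v_{\min}}\right)$, which explains the $\log_{1+\eps}^3(\cdot)$ factors in the running time (one log factor from the inner-DP value axis, one from the cross-category combination axis, and one from the range of attainable totals).

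Next I would build, for each category $i$, an inner DP table $\mathcal{A}_i(j,v)$ recording the minimum total weight of a subset of the first $j$ items of $V_i$ whose rounded total value equals $v$ (ranging over the $O(n\log_{1+\eps}(\cdot))$ many attainable rounded sums), exactly parallel to table $\mathcal{A}$ in the number-constrained case but with the cardinality coordinate $t$ deleted and the value coordinate carrying both the per-category constraint and the objective contribution. Then a cross-category table $\mathcal{B}(i,v)$ would combine one "bundle" per category: $\mathcal{B}(i,v)$ is the minimum total weight of a selection from $\bigcup_{j\le i}V_j$ of rounded total value $v$ such that each category $j\le i$ has its rounded per-category value in the (rounded, slightly relaxed) window $[l_j^v,u_j^v]$. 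The recursion is $\mathcal{B}(i,v)=\min_{v_r}\{\mathcal{B}(i-1,v-v_r)+\mathcal{A}_i(|V_i|,v_r)\}$ over those $v_r$ lying in the admissible rounded window for category $i$. Finally output the selection attaining $\argmax_v\{v:\mathcal{B}(\NoOfCategories,v)\le B\}$. The running-time bookkeeping then gives the stated $O\!\left(\frac{n^2\NoOfCategories\log^3\NoOfCategories\,\log_{1+\eps}^{3}(\NoOfItmes v_{\max}/v_{\min})}{\eps}\right)$ bound (the $\log^3\NoOfCategories$ presumably arising from the range of cross-category rounded sums interacting with the number of categories).

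For correctness I would argue two things. First, feasibility: because each $v_j^{(i)}$ is rounded \emph{down} by at most a $(1+\eps')$ factor, a category whose \emph{true} value lies in $[l_k^v,u_k^v]$ has rounded value in $[l_k^v/(1+\eps'),u_k^v]$, and conversely a subset the algorithm accepts with rounded per-category value in the relaxed window has true per-category value within $(1\pm\eps)$ of $[l_k^v,u_k^v]$; choosing $\eps'$ appropriately and relaxing the window used in the DP by the corresponding factor makes the optimal solution $O$ a legal state while keeping every accepted state within $(1-\eps)l_k^v$ and $(1+\eps)u_k^v$. Second, the approximation guarantee: since $O$ is a feasible DP state of some rounded value $v_O$ and the algorithm returns a state of rounded value $\ge v_O$ subject to the weight bound $B$, the returned set has true value at least $v_O/(1+\eps')$-ish, and using $v_{\max}\le\OPT$ together with the at-most-$\NoOfItmes$-items accounting (as in Theorem \ref{lema:number_max}) this is at least $(1-\eps)\OPT$. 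I expect the main obstacle to be getting the geometric rounding to play well with \emph{sums}: unlike additive rounding, rounding each item down to a power of $(1+\eps')$ does not make the rounded sum a clean function of the true sum, so the set of attainable rounded partial sums must itself be tracked as the DP state space (rather than a contiguous integer range), and one must check that this state space stays polynomially bounded and that the windows $[l_k^v,u_k^v]$ can be matched to it without losing more than a $(1\pm\eps)$ factor — this is the delicate part of both the correctness proof and the running-time analysis.
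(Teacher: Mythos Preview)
Your proposal has the right high-level shape (per-category bundles, then cross-category combination), but it is missing the two specific ideas that make the paper's argument go through, and the obstacle you flag at the end is exactly where your plan breaks.

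First, the cross-category combination in the paper is \emph{not} the sequential recursion $\mathcal{B}(i,v)=\min_{v_r}\{\mathcal{B}(i-1,v-v_r)+\mathcal{A}_i(|V_i|,v_r)\}$. With geometric bucketing, each combination step introduces a multiplicative $(1+\eps')$ slack, so a linear chain over $\NoOfCategories$ categories compounds to $(1+\eps')^{\NoOfCategories}$; controlling this would force $\eps'=\Theta(\eps/\NoOfCategories)$ and destroy the stated running time. The paper instead places the categories at the leaves of a perfect binary tree and combines children via $\mathcal{X}(i,j)=\min\{\mathcal{X}(2i,j')+\mathcal{X}(2i+1,j''):\ (1+\eps')^{j}\le (1+\eps')^{j'}+(1+\eps')^{j''}\}$. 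The depth is $\log_2\NoOfCategories$, so setting $\eps'=(1+3\eps/8)^{1/(\log_2\NoOfCategories+1)}-1$ keeps the total slack at $1+O(\eps)$; this is precisely the source of the $\log^3\NoOfCategories$ factor (one $\log\NoOfCategories$ from $1/\eps'$ in the size of each axis, cubed because each tree entry ranges over pairs $(j',j'')$). Your parenthetical guess for the $\log^3\NoOfCategories$ factor is not the actual mechanism.

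Second, the paper does \emph{not} round individual item values to powers of $(1+\eps')$ and then try to track sums of such powers --- as you correctly worry, that state space is not polynomially bounded. Instead, for each category $i$ and each geometric target $v=v_{\min}(1+\eps/8)^{j}$ it solves a separate sub-problem (Theorem~\ref{the:sub_value_max_knapsack}): find a minimum-weight subset of $V_i$ with total value in $[(1-\eps)v,(1+\eps)v]$. This sub-problem is solved by \emph{additive} rounding relative to that specific $v$ (scale by $n/(\eps v)$ and floor), giving an $O(n^2/\eps)$ DP whose correctness mirrors the min-knapsack analysis. The resulting table $\mathcal{W}(i,j)$ has only $O(\log_{1+\eps}(\NoOfItmes v_{\max}/v_{\min}))$ columns per category, and the fairness window $[l_i^v,u_i^v]$ is enforced by restricting which $j''$ are admissible when initializing the leaves of the tree. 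So the ``delicate part'' you anticipate is handled by decoupling: additive rounding inside each category for each target, geometric indexing of the targets, and a logarithmic-depth tree to merge categories.
 Without these two pieces your outline does not yet yield either the correctness or the running-time bound.
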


Towards proving this theorem, we first study the following sub-problem.

\begin{problem}\label{prob:sub_value_max_knapsack}
	Given $v>0$, $\eps>0$ and a set $V{'} = [n]$ of items, with item $i \in V{'}$ having the value ${v}_{i}{'}$ and the weight ${w}_{i}{'}$, compute a subset $S \subseteq V{'}$ minimizing $\sum_{i \in S} {w}_{i}{'}$, such that $(1-\eps)v \leq \sum_{i \in S} {v}_{i}{'} \leq (1+\eps)v$.
\end{problem}
Problem \ref{prob:sub_value_max_knapsack} looks similar to the min-knapsack problem but it is different from it in the following way. The total value of output of min-knapsack problem is required to be more than the given lower bound, while the total value of the output of Problem \ref{prob:sub_value_max_knapsack} is required to be lying in a small range. We will use  Algorithm \ref{alg:minimum_weight_given_v} for Problem \ref{prob:sub_value_max_knapsack} to obtain bundles of items in $V_i$, $\forall i \in [\NoOfCategories]$, such that the total value of different bundles are in different required ranges.
\begin{theorem}\label{the:sub_value_max_knapsack}
	For any $\eps>0$ and $v>0$, there exists an algorithm for Problem \ref{prob:sub_value_max_knapsack} that outputs a subset $S$ having the total weight at most the optimal weight of Problem \ref{prob:sub_value_max_knapsack}, and $(1-3\eps)v \leq \sum_{i \in S} {v}_{i}{'} \leq (1+3\eps)v$. The running time of the algorithm is $O\left(\frac{n^2}{\eps} \right)$ \end{theorem}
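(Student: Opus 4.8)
The plan is to solve Problem~\ref{prob:sub_value_max_knapsack} by a single scaled dynamic program, exactly as in the \FPTAS\ for min-knapsack but with the target value confined to a window rather than a half-line. First I would preprocess: since every feasible subset has total value at most $(1+\eps)v$ and all values are nonnegative, any item $i$ with ${v}_{i}{'}>(1+\eps)v$ lies in no feasible solution and can be discarded. For each surviving item set
\[
\hat{v}_{i}:=\left\lfloor \frac{n}{\eps v}\,{v}_{i}{'}\right\rfloor ,
\]
so that $\hat{v}_{i}$ is an integer in $\{0,1,\dots,M\}$ with $M:=\lfloor n(1+\eps)/\eps\rfloor=O(n/\eps)$, and $\tfrac{\eps v}{n}\hat{v}_{i}\le{v}_{i}{'}<\tfrac{\eps v}{n}(\hat{v}_{i}+1)$. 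Because any subset of $V{'}$ contains at most $n$ items, for every $S\subseteq V{'}$ we get
\[
\frac{\eps v}{n}\sum_{i\in S}\hat{v}_{i}\;\le\;\sum_{i\in S}{v}_{i}{'}\;<\;\frac{\eps v}{n}\sum_{i\in S}\hat{v}_{i}+\eps v .
\]
In particular, a subset whose scaled value $\tfrac{\eps v}{n}\sum_{i\in S}\hat{v}_{i}$ exceeds $(1+\eps)v$ already has true value $>(1+\eps)v$, so it is enough to track integer totals in $\{0,\dots,M\}$.

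Next I would build the table $T(i,t)=$ the minimum total weight of a subset of the first $i$ items whose rounded values sum to exactly $t$, for $t\in\{0,\dots,M\}$, with $T(0,0)=0$, $T(0,t)=\infty$ for $t>0$, and the standard recursion $T(i,t)=\min\{\,T(i-1,t),\ T(i-1,t-\hat{v}_{i})+{w}_{i}{'}\,\}$ (the second branch only for $t\ge\hat{v}_{i}$). This table has $O(n\cdot M)=O(n^{2}/\eps)$ entries, each filled in $O(1)$ time, which gives the claimed running time. The algorithm then outputs the subset attaining
\[
\min\Big\{\,T(n,t)\ :\ \big\lceil n(1-2\eps)/\eps\big\rceil\le t\le\big\lfloor n(1+\eps)/\eps\big\rfloor\,\Big\},
\]
that is, the lightest recorded subset whose scaled value lies in $[(1-2\eps)v,(1+\eps)v]$; if this set of indices is empty it reports infeasibility.

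For correctness I would verify two points using the displayed inequality. \emph{Feasibility of the output $S$:} its true value is at least $\tfrac{\eps v}{n}\sum_{i\in S}\hat v_i\ge(1-2\eps)v$ and at most $\tfrac{\eps v}{n}\sum_{i\in S}\hat v_i+\eps v\le(1+2\eps)v$, hence $(1-3\eps)v\le\sum_{i\in S}{v}_{i}{'}\le(1+3\eps)v$. \emph{Weight optimality:} if $O$ is an optimal solution of Problem~\ref{prob:sub_value_max_knapsack}, then applying the displayed inequality to $O$ together with $(1-\eps)v\le\sum_{i\in O}{v}_{i}{'}\le(1+\eps)v$ shows that $t_{O}:=\sum_{i\in O}\hat v_i$ lies in the search range; therefore $T(n,t_{O})\le\sum_{i\in O}{w}_{i}{'}$, which equals the optimal weight, and so the subset returned weighs at most the optimum.

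The dynamic-programming skeleton here is just the textbook scaled min-knapsack DP, so I expect the only delicate part to be the error bookkeeping around the floors and ceilings: one must check that the search interval for $t$ is simultaneously wide enough to contain the rounded total of the true optimum and narrow enough that every subset landing in it is $3\eps$-feasible, and one must dispose of the degenerate cases ($v=0$, or the instance admitting no feasible subset at all).
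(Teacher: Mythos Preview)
Your proposal is correct and essentially identical to the paper's own proof: the paper's Algorithm~\ref{alg:minimum_weight_given_v} discards items of value $>(1+\eps)v$, rounds $v_i''=\lfloor n v_i'/(\eps v)\rfloor$, builds the same exact-value min-weight DP table, and outputs the lightest entry whose rounded total falls in a window around $n/\eps$. The only cosmetic difference is that the paper uses the slightly wider search window $\big[\lfloor n(1-2\eps)/\eps\rfloor,\lfloor n(1+2\eps)/\eps\rfloor\big]$ (and a correspondingly larger table), whereas your tighter window $\big[\lceil n(1-2\eps)/\eps\rceil,\lfloor n(1+\eps)/\eps\rfloor\big]$ already suffices and in fact yields the sharper output guarantee $(1-2\eps)v\le\sum_{i\in S}v_i'\le(1+2\eps)v$.
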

\begin{proof}
The algorithm for the theorem is described in Algorithm \ref{alg:minimum_weight_given_v}. We give the proof of correctness of the algorithm below.\par

	\RestyleAlgo{boxruled}
	\begin{algorithm}[!t]
		\caption{Algorithm for Problem \ref{prob:sub_value_max_knapsack}} \label{alg:minimum_weight_given_v}
		\textbf{ Input:} $v>0$, $\eps>0$, a set $V{'} = [n]$ of items, with item $i \in V{'}$ having value ${v}_{i}{'}$ and weight ${w}_{i}{'}$.\\
		\textbf{ Output:}  A subset $S \subseteq V{'}$ having total weight at most the optimal weight for Problem \ref{prob:sub_value_max_knapsack}, such that $(1-4\eps)v \leq \sum_{i \in S} {v}_{i}{'} \leq (1+4\eps)v$.\\
		\begin{enumerate}
				\item 	Remove all items from $V{'}$ having value more than $(1+\eps)v$.
				\item \label{rounding_sub_value_max} For each item $i \in V{'}$,
			let $ {v}_{i}{''} := \left \lfloor \frac{n {v}_{i}{'}}{\eps v}  \right \rfloor.$
				\item Let $\mathcal{H}(i,v{''})$ for $ v{''} \in \left[ \left \lceil \frac{(1+2\eps)n}{\eps} \right \rceil \right]\cup \{0\}$ and $i \in [n]$, be the dynamic programming \\table constructed in the following way.
				
				\begin{enumerate}
				
				\item\label{lower_bound_val_H_maxknapsack_H1} $\mathcal{H}(1,{v}_{1}{''})={w}_{1}{'}$. $\mathcal{H}(1,0)=0$. $\mathcal{H}(1,{v{''}})=\infty$, $\forall v{''} \in \left[ \left \lceil \frac{(1+2\eps)n}{\eps} \right \rceil \right]\setminus \{ {v}_{1}{''}\}$. 
				\item \label{lower_bound_val_H_maxknapsack_H}
			 $\forall i\in[n]\setminus\{1\}$, $\forall v{''} \in \left[ \left \lceil \frac{(1+2\eps)n}{\eps} \right \rceil \right] $,\\ If $v{''}< v_{i}^{''}$, then $\mathcal{H}(i,v{''})=\mathcal{H}(i-1,v{''})$. \\
			Else,
			\[\mathcal{H}(i,v{''}) := \min\{\mathcal{H}(i-1,v{''}),\mathcal{H}(i-1,v{''}-v_{i}{''})+{w}_{i}{'} \; \mid \; {v}_{i}{''}\leq v{''}  \}. 
			\]			
				\end{enumerate}
				
				\item \label{lower_bound_val_output_step_subalg_max_knapsack} Output the subset $S$ as follows,
				\[ \min \left \{ \mathcal{H}(n,v{''}) \mid \left \lfloor \frac{(1-2\eps)n}{\eps} \right \rfloor \leq v{''} \leq \left \lfloor \frac{(1+2\eps)n}{\eps} \right \rfloor  \right\} . \]
			
			\end{enumerate}
			
	\end{algorithm}
	We claim that the entry $\mathcal{H}(i,v{''})$,$\forall i\in [n]$, $\forall v{''} \in \left[ \left \lceil \frac{(1+2\eps)n}{\eps} \right \rceil \right]\cup \{0\}$, indicates the weight of the least weight subset from the first $i$ items of $V{'}$ having the total rounded value equal to $v{''}$. Let $S{'}$ denote the subset satisfying above property for the entry $\mathcal{H}(i,v{''})$,$\forall i\in [n]\setminus\{1\}$, $\forall v{''} \in \left[ \left \lceil \frac{(1+2\eps)n}{\eps} \right \rceil \right]\cup \{0\}$. We can have two cases as below.
	\begin{itemize}
	    \item If $i\in S{'}$, then $\mathcal{H}(i,v{''})$ is equal to the sum of the weight of $i$ ($w_{i}{'}$), and the weight of minimum weight subset from first $i-1$ items having the total rounded value $v{''}-v_{i}{''}$ ($\mathcal{H}(i-1,v{''}-v_{i}{''})$).
	 \item If $i \notin S{'}$, then $\mathcal{H}(i,v{''})$ is equal to the weight of minimum weight subset from first $i-1$ items having the total rounded value $v{''}$, i.e. $\mathcal{H}(i,v{''})=\mathcal{H}(i-1,v{''})$.
	 \end{itemize}
	The recursion in Step \ref{lower_bound_val_H_maxknapsack_H} captures both of above possibilities. Step \ref{lower_bound_val_H_maxknapsack_H1} does necessary initialization for the recursion in Step \ref{lower_bound_val_H_maxknapsack_H}. Let $O$ denote the set of items in an optimal solution of Problem \ref{prob:sub_value_max_knapsack} and $S$ be the set output by Step \ref{lower_bound_val_output_step_subalg_max_knapsack} of Algorithm \ref{alg:minimum_weight_given_v}.  \begin{subequations}
	 \begin{align}
	    \left \lfloor \frac{(1-2\eps)n}{\eps} \right \rfloor  \leq \sum_{i \in S} {v}_{i}{''} & \leq \left \lfloor \frac{(1+2\eps)n}{\eps} \right \rfloor \\
	\Rightarrow \frac{(1-2\eps)n}{\eps}-1  \leq \sum_{i \in S} {v}_{i}{''} & \leq  \frac{(1+2\eps)n}{\eps}. \label{S_range}
	 \end{align}
	 \end{subequations}
	 Using the definition of ${v}_{i}{''}$ of Step \ref{rounding_sub_value_max}, we obtain from the left inequality of \eqref{S_range},
	  \begin{align*}
         (1-2\eps)v-\frac{\eps v}{n} & \leq \sum_{i \in S} {v}_{i}{'}\\ \Rightarrow (1-3\eps)v  & \leq \sum_{i \in S} {v}_{i}{'} .
    \end{align*}
    	 From right inequality of \eqref{S_range},
    \begin{align*}
        \sum_{i \in S} ({v}_{i}{''}+1) &\leq \frac{(1+2\eps)n}{\eps}+n \\
        \Rightarrow \sum_{i \in S} {v}_{i}{'} &\leq (1+3\eps)v. &\textrm{(Definition of ${v}_{i}{''}$)}
    \end{align*}

	Now, we will prove that $\left \lfloor \frac{(1-2\eps)n}{\eps} \right \rfloor  \leq \sum_{i \in O} {v}_{i}{''} \leq \left \lfloor \frac{(1+2\eps)n}{\eps} \right \rfloor$. Since $S$ is the least weight subset in the previous range, the above claim will imply that the total weight of $S$ is at most the total weight of $O$. We know that $(1-\eps)v  \leq \sum_{i \in O} {v}_{i}{'}  \leq  (1+\eps) v$. Using the definition of ${v}_{i}{''}$ in Step \ref{rounding_sub_value_max}, we get 
	\begin{align*}
	\frac{n(1-\eps)}{\eps} -n & \leq \sum_{i \in O} {v}_{i}{''}  \leq \frac{(1+\eps)n}{\eps}+n \\
	\frac{n(1-2\eps)}{\eps} & \leq \sum_{i \in O} {v}_{i}{''}  \leq \frac{(1+2\eps)n}{\eps}\\
	\left\lfloor\frac{n(1-2\eps)}{\eps}\right\rfloor & \leq \sum_{i \in O} {v}_{i}{''}  \leq \left\lfloor\frac{(1+2\eps)n}{\eps}\right\rfloor.
	\end{align*}
	The last inequality is true by the fact that ${v}_{i}{''}$ is an integer, $\forall i \in V{'}$.
	Note that the algorithm returns $\infty$ if it does not find a subset in Step \ref{lower_bound_val_output_step_subalg_max_knapsack}. This is because of the initialization in Step \ref{lower_bound_val_H_maxknapsack_H1}.
	
	\paragraph{Running time analysis.} The size of the table $\mathcal{H}$ is $O\left(\frac{n^2}{\eps}\right)$, and to fill each entry in the table $\mathcal{H}$, we require $O(1)$ time. So, the total running time of the algorithm is $O\left(\frac{n^2}{\eps}\right)$.
	
\end{proof}

\begin{proof}[Proof of Theorem \ref{the:alg_value_max_knapsack}]
The algorithm for theorem is described in Algorithm \ref{alg:fairness_based_value}.
The algorithm creates bundles of items from $V_i$, $\forall i \in [\NoOfCategories]$, such that the total value of each bundle is in different ranges using Theorem \ref{the:sub_value_max_knapsack} (Step \ref{W_max}). It stores these bundles in the table $\mathcal{W}$. After that the algorithm combines bundles from all categories in divide and conquer fashion to obtain the final solution using the dynamic programming table $\mathcal{X}$ (Step \ref{X_max}). The total value of each bundle is represented by some power of $\left(1+\eps{'}\right)$ in the tables $\mathcal{W}$ and $\mathcal{X}$. So, the algorithm might lose at most $\left(1+\eps{'} \right)$ fraction of total value in one iteration of Step \ref{X_max}. The total fraction of value lost in the calculation after combining the bundles from all the categories by divide and conquer fashion in Step \ref{X_i_max} is at most $\left(1+\eps{'} \right)^{O(\log_{2}\NoOfCategories)}$. This is at most $\left(1+\eps\right)$ because of the choice of $\eps{'}$ in Step \ref{round_eps_value_max}. We describe the formal proof of the algorithm below.\par
	
		\RestyleAlgo{boxruled}
	\begin{algorithm}[!t]
		\caption{Algorithm for $BV^{\max}$ (Problem \ref{prob:value_max_knapsack}) } \label{alg:fairness_based_value}
		\textbf{ Input:} The sets of items: $V_{i}$ for all $i\in [m]$, $0 \leq l_{i}^{v}\leq u_{i}^{v}$ for all $ i\in[\NoOfCategories]$, the knapsack capacity $B$ and $\eps>0$. \\
		\textbf{ Output:} $S$ having the total value at least $1-\eps$ times the optimal value of $BV^{\max}$ (Problem \ref{prob:value_max_knapsack}), such that $(1-\eps)l_{i}^{v} \leq \sum_{r \in S \cap V_i} v_{r}^{(i)}\leq (1+\eps)u_{i}^{v}$, $\forall i \in [\NoOfCategories]$, and the total weight of $S$ is at most $B$.\\
		\begin{enumerate}
			\item \label{round_eps_value_max}	Let $\eps{'}=\left(1+\frac{3\eps}{8}\right)^{\frac{1}{\log_{2}\NoOfCategories+1}}-1$.  Also let $W_{range}:=\left[ \left \lceil \log_{\left(1+\frac{\eps}{8}\right)} \left(\frac{nv_{\max}}{v_{\min}}\right) \right \rceil\right]\cup \{0\}$  and
 			$X_{range}:=\left[ \left \lceil \log_{1+\eps{'}} \left(\frac{\NoOfItmes v_{\max}}{v_{\min}}\right) \right \rceil \right] \cup \{-\log_{2}\NoOfCategories, \dots, -1,0\}$.

			\item \label{W_max} Let $\mathcal{W}(i,j), \forall i \in [\NoOfCategories]$, $\forall j \in W_{range}$, be the table, where the entry $\mathcal{W}(i,j)$ indicates the weight \\of a subset of $V_i$ that is obtained by Theorem \ref{the:sub_value_max_knapsack} by setting $V_i$ as $V{'}$, $v_{\min}\left(1+\frac{\eps}{8}\right)^j$ as $v$, and $\frac{\eps}{6}$ \\as $\eps$ in Theorem \ref{the:sub_value_max_knapsack}.
			
			\item \label{X_max}Let $\mathcal{X}(i,j)$, $\forall i\in [2\NoOfCategories-1]$, $\forall j \in X_{range}$, be the DP table constructed as follows.
			
			\begin{enumerate}
			\item \label{X_1_max}
			$\forall i \in [\NoOfCategories]$, $\forall j\in X_{range}$,
			\begin{equation*}
			\begin{split}
			\mathcal{X}(\NoOfCategories-1+i,j):=\min & \left \{\mathcal{W}(i,j{''}) \mid  \;
		   \left(1+\frac{\eps}{8}\right)^{j{''}}\geq \left(1+\eps{'}\right)^{j}  \; \& \; v_{\min}\left(1+\frac{\eps}{8}\right)^{j{''}+1}   \right.\\ &\left. \geq l_{i}^{v} \& \; v_{\min}\left(1+\frac{\eps}{8}\right)^{j{''}-1}\leq u_{i}^{v}\; \&  \;  j{''} \in W_{range}   \right\} .
			\end{split}
			\end{equation*}
			
			 If the set satisfying above condition is empty, then set $\mathcal{X}(i,j)$ to $\infty$.
			\item \label{X_i_max} $\forall i\in[\NoOfCategories-1]$, $\forall j{'},j{''} \in X_{range}$, we have
			
			\begin{equation*}
			\begin{split}
			\mathcal{X}(i,j) := \min & \left\{ \mathcal{X}(2i,j{'})+\mathcal{X}(2i+1,j{''}) \mid \; \left(1+\eps{'}\right)^{j} \leq \left(1+\eps{'}\right)^{j{'}}+\left(1+\eps{'}\right)^{j{''}} \right\} .
			\end{split}
			\end{equation*}
			If the set satisfying above condition is empty, then set $\mathcal{X}(i,j)$ to $\infty$.
			
			\end{enumerate}

		\item \label{Output_lower_value_max} Output the subset $S$ in the following way, 
		\[ \argmax_{j} \{ \mathcal{X}(1,j) \mid \mathcal{X}(1,j) \leq B \} . \]	
		\end{enumerate}
		
	\end{algorithm}

\paragraph{Properties of $\mathcal{W}$.}We claim that the entry $\mathcal{W}(i,j)$,  $\forall i\in[\NoOfCategories]$, $j \in W_{range}$, indicates the weight of the subset of $V_i$ that satisfies the two properties listed below. The entry of $\mathcal{W}$ also corresponds to respective subset. The entry of $\mathcal{W}$ could be $\infty$, which indicates no subset. We use the notation $\mathcal{W}(i,j)$ to indicate both the entry and the subset. 
\begin{enumerate}
	\item \label{W_lower_value_max_1} If the entry $\mathcal{W}(i,j)$ is finite, then the total value of the corresponding subset is in between \\ $\left(1-\frac{\eps}{2}\right)\left(1+\frac{\eps}{8}\right)^j v_{\min}$ and $\left(1+\frac{\eps}{2}\right)\left(1+\frac{\eps}{8}\right)^{j}v_{\min}$.
	\item \label{W_lower_value_max_2} The total weight of the subset corresponding to $\mathcal{W}(i,j)$ is at most the total weight of any subset of $V_i$ having the total value in between $v_{\min}\left(1-\frac{\eps}{6}\right)\left(1+\frac{\eps}{8}\right)^{j}$ and $v_{\min}\left(1+\frac{\eps}{6}\right)$ $\left(1+\frac{\eps}{8}\right)^{j}$ (Since $\frac{\eps}{8}< \frac{\eps}{6}$, the total weight of $\mathcal{W}(i,j)$ will be less than or equal to the total weight of any subset  of $V_i$ having the total value in between $\left(1+\frac{\eps}{8}\right)^j v_{\min}$ and $\left(1+\frac{\eps}{8}\right)^{j+1}v_{\min}$).
\end{enumerate}

The table $\mathcal{W}$ is created in Step \ref{W_max} of the algorithm. This step uses Theorem \ref{the:sub_value_max_knapsack}. We get both above properties because of the guarantee of Theorem \ref{the:sub_value_max_knapsack}.

Let $\mathcal{T}$ be a perfect binary tree with $\NoOfCategories$ leaf nodes. For simplicity, assume that $\NoOfCategories$ is power of $2$. Although, we can prove the same result by slight modification of the proof when $\NoOfCategories$ is not power of $2$. The total number of nodes in $\mathcal{T}$ will be $2\NoOfCategories-1$.  Each node in $\mathcal{T}$ could be represented by an index number from $1$ to $2\NoOfCategories-1$ with root at index $1$.
The node at an index $i$ has an left child at an index $2i$ and right child at an index $2i+1$, $\forall i\in[\NoOfCategories-1]$. Let the leaf node at an index $(\NoOfCategories-1)+i$ represent the category $i$, $\forall i\in [\NoOfCategories]$. Let $\mathcal{T}(i)$ denote the set of categories represented by the leaves of sub tree rooted at node $i$. Specifically, $\mathcal{T}(\NoOfCategories-1+i)=\{i\}$, $\forall i\in[m]$. Also, $\mathcal{T}(i)=\mathcal{T}(2i)\cup\mathcal{T}(2i+1)$, $\forall i\in [\NoOfCategories-1]$.
\paragraph{Properties of $\mathcal{X}$.}We claim that the entry $\mathcal{X}(i,j)$,  $\forall i\in[2\NoOfCategories-1]$, $\forall j \in X_{range}$, indicates the weight of a subset of $\cup_{k\in \mathcal{T}(i)} V_{k}$ that satisfies three properties mentioned below. The entry of $\mathcal{X}$ also corresponds to the  respective subset. The entry of $\mathcal{X}$ could be $\infty$, which indicates no subset. We use the notation $\mathcal{X}(i,j)$ to indicate both the entry and the subset. 
\begin{enumerate}
	\item \label{X_lower_value_max_first} If the entry $\mathcal{X}(i,j)$ is finite, then
	\[\sum_{k\in \mathcal{T}(i)} \sum_{r \in \mathcal{X}(i,j)\cap V_{k}} v^{(k)}_{r} \geq \left(1-\frac{\eps}{2}\right)\left(1+\eps{'}\right)^{j}v_{\min}\].
	\item \label{X_lower_value_max_second}
	If the entry $\mathcal{X}(i,j)$ is finite, then $\forall k \in \mathcal{T}(i)$, \[\left(1-\frac{\eps}{2}\right)\left(1+\frac{\eps}{8}\right)^{-1}l_{k}^{v}\leq  \sum_{r \in  \mathcal{X}(i,j)\cap V_k} v_{r}^{(k)}  \leq \left(1+\frac{\eps}{2}\right)\left(1+\frac{\eps}{8}\right)u_{k}^{v}.\]
	
	\item \label{X_lower_value_max_third} Let $i$ be a node of $\mathcal{T}$, $\forall i \in [2\NoOfCategories-1]$, having the distance $t$ from leaves, $ t\in\left[\log_{2}\NoOfCategories\right]\cup \{0\}$.  For all $O{'} \subseteq \cup_{k\in \mathcal{T}(i)} V_{k}$ having the total value at least $\left(1+\eps{'}\right)^{j+t}v_{\min}$, and $ l_{k}^{v} \leq \sum_{r \in  O{'}\cap V_{k}}v_{r}^{(k)}\leq u_{k}^{v}$, $\forall k \in \mathcal{T}(i)$, the total weight of the subset  $\mathcal{X}(i,j)$ is at most the total weight of $O{'}$. 
\end{enumerate}
In Property \ref{X_lower_value_max_third}, any considered set $O{'}$ will have value at least $v_{\min}$. Thus in Property \ref{X_lower_value_max_third}, $t+j \ge 0$, i.e. $j \ge -\log_{2}\NoOfCategories$.
Hence, the minimum value in $X_{range}$ has been set to $-\log_{2}\NoOfCategories$ (Step \ref{round_eps_value_max}).\par 
Steps \ref{X_1_max} of the algorithm chooses the subset $\mathcal{W}(i,j{''})$ that satisfies the inequalities $\left(1+\frac{\eps}{8}\right)^{j{''}+1}v_{\min}\geq l_{i}^{v}$ and $\left(1+\frac{\eps}{8}\right)^{j{''}-1}v_{\min}\leq u_{i}^{v}$. By Property \ref{W_lower_value_max_1} of $\mathcal{W}$, the total value of $\mathcal{W}(i,j{''})$ is in between $\left(1-\frac{\eps}{2}\right)$ $\left(1+\frac{\eps}{8}\right)^{-1}l_{i}^{v}$ and $\left(1+\frac{\eps}{2}\right)\left(1+\frac{\eps}{8}\right)u_{i}^{v}$. This proves that the subset corresponding to finite entry $\mathcal{X}(i,j)$, $\forall i\in [2\NoOfCategories-1]$, $\forall  j \in X_{range}$, satisfies Property \ref{X_lower_value_max_second}. The subset corresponding to finite entry $\mathcal{X}(\NoOfCategories-1+i,j)$, $\forall i \in[\NoOfCategories]$ $\forall j \in  X_{range}$, will satisfy Property \ref{X_lower_value_max_first}. This is true because of Property \ref{W_lower_value_max_1} of $\mathcal{W}$ and the condition $\left(1+\frac{\eps}{8}\right)^{j{''}}\geq \left(1+\eps{'}\right)^{j}$ in Step \ref{X_1_max} for selecting subset $\mathcal{W}(i,j{''})$. Because of the condition $\left(1+\eps{'}\right)^{j}\leq (1+\eps{'})^{j{'}}+ (1+\eps{'})^{j{''}}$ in Step \ref{X_i_max} of the algorithm, the subset corresponding to finite entry $\mathcal{X}(i,j)$, $\forall i\in [\NoOfCategories-1]$, $\forall j \in X_{range}$, will satisfy Property \ref{X_lower_value_max_first}. \par

We prove that the nodes in $\mathcal{T}$ will satisfy Property \ref{X_lower_value_max_third} by induction. In the base case, we prove that Property \ref{X_lower_value_max_third} is satisfied for all leaves. Let $O{''}$ be any subset of $V_i$ that satisfies the fairness bounds such that the total value of $O{''}$ is at least $v_{\min}\left(1+\eps{'} \right)^{j}$. Let $j{''}\in W_{range}$ such that the total value of $O{''}$ is in between $v_{\min}\left(1+\frac{\eps}{8}\right)^{j{''}-1}$ and $v_{\min}\left(1+\frac{\eps}{8}\right)^{j{''}}$. By Property \ref{W_lower_value_max_2} of $\mathcal{W}$, the total weight of $\mathcal{W}(i,j{''})$ is at most the total weight of $O{''}$. Since $O{''}$ satisfies the fairness bounds, the conditions $v_{\min}\left(1+\frac{\eps}{8}\right)^{j{''}+1}\geq l_{i}^{v}$ and $v_{\min}\left(1+\frac{\eps}{8}\right)^{j{''}-1}\leq u_{i}^{v}$ in Step \ref{X_1_max} are also satisfied. So, the subset $\mathcal{W}(i,j{''})$ is feasible for $\mathcal{X}(\NoOfCategories-1+i,j)$ in Step \ref{X_1_max}. So, the Property \ref{X_lower_value_max_third} is satisfied by $\mathcal{X}(\NoOfCategories-1+i,j)$.\par

For any $t\in\left[\log_{2}\NoOfCategories-1\right]\cup \{0\}$, assume the hypothesis that all the nodes having distance $t$ from leaves satisfy Property \ref{X_lower_value_max_third}. We will prove by induction that for any node $i$ with distance $t+1$ from leaves, Property \ref{X_lower_value_max_third} is satisfied. Let $O{'}\subseteq \cup_{k\in \mathcal{T}(i)} V_k$  satisfies the fairness conditions for all categories in $\mathcal{T}(i)$.
Let $j{^*} \in X_{range}$ satisfies the following inequality

\begin{equation}\label{eq:bound_value_max_knapsack_1}
\left(1+\eps{'}\right)^{j{^*}}v_{\min} \leq \sum_{k\in \mathcal{T}(2i)}\sum_{r \in O{'}\cap V_k}v^{(k)}_{r} \leq\left(1+\eps{'}\right)^{j{^*}+1}v_{\min}.    
\end{equation}
If $j'=j{^*}-t$,
the induction hypothesis implies the following :
\begin{equation}\label{eq:weight_lower_bound_left_child}
\sum_{k\in \mathcal{T}(2i)} \sum_{r \in \mathcal{X}(2i,j{'})\cap V_{k}} w^{(k)}_{r} \leq \sum_{k\in \mathcal{T}(2i)} \sum_{r \in O{'} \cap V_{k}} w^{(k)}_{r}.
\end{equation}
Similarly, let $j{^{**}} \in X_{range}$ that satisfies the following inequality 

\begin{equation}\label{eq:bound_value_max_knapsack_2}
\left(1+\eps{'}\right)^{j{^{**}}}v_{\min} \leq \sum_{k\in \mathcal{T}(2i+1)}\sum_{r \in O{'}\cap V_k}v^{(k)}_{r} \leq\left(1+\eps{'}\right)^{j{^{**}}+1}v_{\min}.    
\end{equation}
If $j''=j{^{**}}-t$, the induction hypothesis implies the following \begin{equation}\label{eq:weight_lower_bound_right_child}
\sum_{k\in \mathcal{T}(2i+1)} \sum_{r \in \mathcal{X}(2i+1,j{''})\cap V_{k}} w^{(k)}_{r} \leq \sum_{k\in \mathcal{T}(2i+1)} \sum_{r \in O{'} \cap V_{k}} w^{(k)}_{r}.    
\end{equation}

We claim that the inequality $\left(1+\eps{'}\right)^{j}\leq \left(1+\eps{'}\right)^{j{'}}+ \left(1+\eps{'}\right)^{j{''}} $ is satisfied in Step \ref{X_i_max} for any $j\in X_{range}$, such that the total value of $O{'}$ is at least $v_{\min}(1+\eps{'})^{j+t+1}$. This is true because the maximum value of $O{'}$ is at most $v_{\min}\left(\left(1+\eps{'}\right)^{j{^*}+1}+\left(1+\eps{'}\right)^{j{^{**}}+1} \right)$ (by Inequality \ref{eq:bound_value_max_knapsack_1} and Inequality \ref{eq:bound_value_max_knapsack_2}), which is more than $v_{\min}\left(1+\eps{'}\right)^{j+t+1}$. So, the pair of subsets $\mathcal{X}(2i,j{'})$ and $\mathcal{X}(2i+1,j{''})$ is feasible in Step \ref{X_i_max} for all such $j$. By Equation \ref{eq:weight_lower_bound_left_child} and Equation \ref{eq:weight_lower_bound_right_child}, Property \ref{X_lower_value_max_third} of $\mathcal{X}$ is satisfied for $\mathcal{X}(i,j)$ and $O{'}$.\par	

Let $\OPT$ be the optimal value. Let $j$ be the number in $X_{range}$ such that

\begin{equation}\label{eq:optimal_bound_value_max}
\left(1+\eps{'}\right)^{j+\log_{2} \NoOfCategories}v_{\min}\leq \OPT \leq \left(1+\eps{'}\right)^{j+\log_{2} \NoOfCategories+1}v_{\min}. 
\end{equation}
 If we apply Property \ref{X_lower_value_max_third} of $\mathcal{X}$ to any optimal solution, we get that the total weight of the subset $\mathcal{X}(1,j)$ is less than or equal to the total weight of an optimal subset. So, the subset $\mathcal{X}(1,j)$ is feasible in Step \ref{Output_lower_value_max}. By Property \ref{X_lower_value_max_first}, the total value of $\mathcal{X}(1,j)$ is at least $\left(1-\frac{\eps}{2}\right)\left(1+\eps{'}\right)^{j}v_{\min}$. By Inequality \ref{eq:optimal_bound_value_max}, this value is at least 
 \begin{align*}
\left(1-\frac{\eps}{2}\right)\frac{\OPT}{\left(1+\eps{'}\right)^{\log_{2} \NoOfCategories+1}}&= \left(1-\frac{\eps}{2}\right)\frac{\OPT}{\left(1+\frac{3\eps}{8}\right)}\\
&> \left(1-\frac{\eps}{2}\right)\left(1-\frac{3\eps}{8}\right)\OPT\\
&> (1-\eps)\OPT.     
 \end{align*}
If the entry $\mathcal{X}(1,j)$ is finite, Property \ref{X_lower_value_max_second} of $\mathcal{X}$ implies that the total value of items from $V_i \cap \mathcal{X}(1,j)$ is in between $\left(1-\frac{\eps}{2} \right)\left(1+\frac{\eps}{8} \right)^{-1}l_{i}^{v}$ and $\left(1+\frac{\eps}{2} \right)\left(1+\frac{\eps}{8} \right)u_{i}^{v}$ for all $i\in [\NoOfCategories]$. The total value of items from $V_i \cap \mathcal{X}(1,j)$ is at least $\left(1-\frac{\eps}{2} \right)\left(1+\frac{\eps}{8} \right)^{-1}l_{i}^{v}> \left(1-\frac{\eps}{2} \right)\left(1-\frac{\eps}{8} \right) l_{i}^{v}> (1-\eps)l_{i}^{v}$. The total value of items from $V_i \cap \mathcal{X}(1,j)$ is at most $\left(1+\frac{\eps}{2} \right)\left(1+\frac{\eps}{8} \right)u_{i}^{v} < \left(1+\frac{6}{8}\eps \right)u_{i}^{v}<\left(1+\eps \right)u_{i}^{v}$. \par

\paragraph{Running time analysis:}  The size of the $\mathcal{W}$ is $O \left( \NoOfCategories \log_{1+\eps} \left(\frac{\NoOfItmes v_{\max}}{v_{\min}}\right)\right)$. We require $O\left(\frac{n^2 }{\eps}\right)$ time to fill each entry of the $\mathcal{W}$. So, the total time required to build the table $\mathcal{W}$ is $O \left( \frac{n^2\NoOfCategories \log_{1+\eps} \left(\frac{\NoOfItmes v_{\max}}{v_{\min}}\right)}{\eps}\right)$. The total time required to build the table $\mathcal{X}$ is $O\left(\NoOfCategories \log_{1+\eps{'}}^{3} \left(\frac{\NoOfItmes v_{\max}}{v_{\min}}\right)\right)=O\left(\NoOfCategories \log^3 \NoOfCategories \log_{1+\eps}^{3} \left(\frac{\NoOfItmes v_{\max}}{v_{\min}}\right)\right)$. So, the total running time of the algorithm is $O \left( \frac{n^2\NoOfCategories \log^3 \NoOfCategories \log_{1+\eps}^{3} \left(\frac{\NoOfItmes v_{\max}}{v_{\min}}\right)}{\eps}\right)$. 
\end{proof}

\subsection{Fairness based on bound on weight}\label{sec_lower_Weight_max}
\begin{problem}[$BW^{\max}$]\label{prob:weight_max_knapsack}
Given a set of items, each belonging to one of $\NoOfCategories$ categories, a lower bound $l_{i}^{w}$ and an upper bound $u_{i}^{w}$ for each category $i$, $\forall i\in[\NoOfCategories]$, the goal is to find a subset that maximizes the total value, such that the total weight of items from category $i$ is in between $l_{i}^{w}$ and $u_{i}^{w}$, $\forall i \in [\NoOfCategories]$ and the total weight of the subset is at most the capacity of the knapsack $B$. 
\end{problem}
We prove that it is $\NP$-hard to obtain the feasible solution of an instance of $BW^{\max}$ (Problem \ref{prob:weight_max_knapsack}). 

\begin{theorem}\label{hardness_result_lowebound_weight}
	There is no polynomial time algorithm which outputs a feasible solution of $BW^{\max}$ (Problem \ref{prob:weight_max_knapsack}), assuming $\PP \neq \NP$.
\end{theorem}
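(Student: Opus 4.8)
The plan is to mirror the reduction used in the proof of Theorem~\ref{hardness_result_lowebound_value}, reducing from subset sum (Problem~\ref{subsetSum2}). Given an instance of subset sum with a set $I$ of non-negative rationals satisfying $\sum_{a\in I}a=1$, I would build an instance of $BW^{\max}$ with a single category, $\NoOfCategories=1$: the item set $V_1$ contains one item per element $a\in I$, and the item corresponding to $a$ has both weight and value equal to $a$; I set the category weight bounds to $l_1^w=u_1^w=\tfrac12$ and the knapsack capacity to $B=\tfrac12$. Note this respects the requirement $u_1^w\geq l_1^w$ of Problem~\ref{prob:weight_max_knapsack} (with equality).

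Next I would establish the equivalence. A subset $S\subseteq V_1$ is feasible for the constructed $BW^{\max}$ instance if and only if the total weight of items from category~$1$ lies in $[l_1^w,u_1^w]=\{\tfrac12\}$ and the total weight of $S$ is at most $B=\tfrac12$; both conditions hold exactly when $\sum_{a\in S}a=\tfrac12$. Hence a feasible solution of the constructed instance exists — and is produced — precisely when the given subset sum instance is a yes-instance, in which case the returned set is a valid solution of subset sum. Consequently, a polynomial-time algorithm that outputs a feasible solution of $BW^{\max}$ (and fails to do so only when none exists, which we can verify in polynomial time by checking the weight constraints on the output) would solve subset sum in polynomial time, contradicting $\PP\neq\NP$.

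The construction is elementary, so I do not expect a genuine technical obstacle; the only point requiring care is that feasibility must pin the category weight down to an exact value rather than to a nontrivial interval, which is why $l_1^w$, $u_1^w$, and $B$ are all set to $\tfrac12$. The value function is irrelevant to the hardness of merely finding a feasible point, so one could equally assign every item value $1$; I keep value equal to weight only for uniformity with the reduction in Theorem~\ref{hardness_result_lowebound_value}.
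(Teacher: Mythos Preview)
Your reduction is correct and follows the same approach as the paper. The only difference is a cosmetic choice of parameters: the paper sets $u_1^w=1$ (so the exact value $\tfrac12$ is forced by the combination of the lower bound $l_1^w=\tfrac12$ and the knapsack capacity $B=\tfrac12$), whereas you set $u_1^w=\tfrac12$ and pin the weight directly via the fairness interval; both yield the same feasibility condition $\sum_{a\in S}a=\tfrac12$.
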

\begin{proof}
	Given an instance of subset sum (\Cref{subsetSum2}), we can construct an instance of $BW^{\max}$ (Problem \ref{prob:weight_max_knapsack}) in following way. Let $\NoOfCategories=1$, $l_{1}^{w}=\frac{1}{2}$, $u_{1}^{w}=1$ and $B=\frac{1}{2}$. The set $V_1$ contains items which correspond to the numbers in $I$. An item corresponding to some $a\in I$ has a value and a weight equal to $a$. This proves that if we have an algorithm that outputs a feasible solution of $BW^{\max}$ (Problem \ref{prob:weight_max_knapsack}) without violating any constraint, then we can solve subset sum (Problem \ref{subsetSum2}). But assuming $\PP\neq \NP$, this is not possible.	
\end{proof} 

Theorem \ref{hardness_result_lowebound_weight} implies that there does not exists polynomial time algorithm for $BW^{\max}$ (\Cref{prob:weight_max_knapsack}). We give here an algorithm for $BW^{\max}$ (\Cref{prob:weight_max_knapsack}) that might violates fairness constraints and knapsack constraint by small amount.

\begin{theorem}\label{the:weight_max_knapsack}
	For any $\eps>0$, there exists an algorithm for $BW^{\max}$ (\Cref{prob:weight_max_knapsack}) that outputs a solution $S$ whose objective value is at least the optimal value, and $(1-\eps)l_{i}^{w} \leq \sum_{j \in S \cap V_i}w_{j}^{(i)} \leq (1+\eps)u_{i}^{w} $, and $\sum_{i=1}^{l} \sum_{j \in S \cap V_i} w_{j}^{(i)} \leq (1+\eps)B$, $\forall i \in [\NoOfCategories]$. The running time of the algorithm is $O \left( \frac{n^2\NoOfCategories \log^3 \NoOfCategories \log_{1+\eps}^{3} \left (\frac{B}{w_{\min}}\right)}{\eps}\right)$, where $w_{\min}:= \min \left\{w_{j}^{(i)} \mid i \in [\NoOfCategories] \; \& \; j \in  V_i \; \& \; w_{j}^{(i)}>0 \right\}$. 
\end{theorem}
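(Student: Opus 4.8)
The plan is to mirror the structure of the proof of Theorem~\ref{the:alg_value_max_knapsack} ($BV^{\max}$), since the weight-bound problem is the natural ``dual'' of the value-bound problem: in $BV^{\max}$ we controlled the value spent in each category while minimizing total weight in the inner DP, so here I would instead treat each category's \emph{weight} as the quantity that must land in a prescribed geometric range, and maximize the \emph{value} collected subject to that. Concretely, first I would prove a sub-problem analogue of Theorem~\ref{the:sub_value_max_knapsack}: given a target weight $w>0$, an error parameter $\eps>0$, and a category's item set, compute a subset of \emph{maximum value} whose total weight lies in $[(1-O(\eps))w,(1+O(\eps))w]$. This is solved by the same trick: discard items of weight exceeding $(1+\eps)w$, round each weight down to a multiple of $\eps w/n$ so that the rounded total weight ranges over $O(n/\eps)$ integer values, and run a DP table $\mathcal{H}(i,w'')$ storing the maximum value of a subset of the first $i$ items with rounded weight exactly $w''$; reading off the best entry in the admissible window gives a subset whose true weight is within an additive $\eps w$ of the window endpoints, hence multiplicatively $(1\pm 3\eps)$, and whose value is at least that of any optimum for the exact target (since the optimum's rounded weight also lands in the window). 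Running time $O(n^2/\eps)$ per call.

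Next I would build the two-level DP exactly as in Algorithm~\ref{alg:fairness_based_value}. Choose $\eps'=(1+\Theta(\eps))^{1/(\log_2 m+1)}-1$ so that $(1+\eps')^{O(\log_2 m)}\le 1+\Theta(\eps)$. Build a table $\mathcal{W}(i,j)$ over $i\in[m]$ and $j$ indexing geometric weight levels $w_{\min}(1+\eps/8)^j$ up to $B$: the entry $\mathcal{W}(i,j)$ is the \emph{value} of the subset returned by the sub-problem algorithm applied to $V_i$ with target weight $w_{\min}(1+\eps/8)^j$, but only retained if that weight level is compatible with $[l_i^w,u_i^w]$ (i.e. $(1+\eps/8)^{j+1}w_{\min}\ge l_i^w$ and $(1+\eps/8)^{j-1}w_{\min}\le u_i^w$). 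Then combine categories along a perfect binary tree $\mathcal{T}$ with $m$ leaves via a table $\mathcal{X}(i,j)$ whose entry is the maximum total value of a subset of $\bigcup_{k\in\mathcal{T}(i)}V_k$ whose total weight is at most $(1+\eps')^j$, subject to each category in $\mathcal{T}(i)$ having its weight within the relaxed bounds; the recursion at an internal node takes $\mathcal{X}(2i,j')+\mathcal{X}(2i+1,j'')$ (values add) whenever $(1+\eps')^{j'}+(1+\eps')^{j''}\le(1+\eps')^j$ (weights add, rounded up to the next level). The final answer is $\mathcal{X}(1,\cdot)$ at the largest feasible $j$ with $(1+\eps')^j\le(1+\eps)B$, which is where the single $\pm\eps$ knapsack violation enters --- we only ever enforce the capacity up to rounding of one weight level, and the accumulated slack over the $\log_2 m$ levels is absorbed into the $(1+\eps)$ factor by the choice of $\eps'$.

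For correctness I would state and prove, by induction on the height in $\mathcal{T}$, the three invariants analogous to Properties~\ref{X_lower_value_max_first}--\ref{X_lower_value_max_third}: (a) a finite entry $\mathcal{X}(i,j)$ has total weight at most $(1+\eps/2)(1+\eps')^j$ (so after summing up the tree, at most $(1+\eps)B$); (b) each category $k\in\mathcal{T}(i)$ has weight in $[(1-\eps)l_k^w,(1+\eps)u_k^w]$, from Property~\ref{W_lower_value_max_1}-type bounds on $\mathcal{W}$ and the $(1+\eps/8)^{\pm1}$ compatibility tests; (c) for any feasible $O'$ over $\bigcup_{k\in\mathcal{T}(i)}V_k$ whose total weight is at most $(1+\eps')^{j-t}$ ($t$ = height), the value of $\mathcal{X}(i,j)$ is at least the value of $O'$. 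Applying (c) to an optimal solution at the root, with $j$ chosen so $(1+\eps')^{j}\approx\OPT$'s weight $\le B$, gives value $\ge\OPT$ while weight $\le(1+\eps)B$ and each category's weight within the stated bounds. Running time: $|\mathcal{W}|=O\big(m\log_{1+\eps}(B/w_{\min})\big)$ with $O(n^2/\eps)$ per entry, and $|\mathcal{X}|$ filled in $O\big(m\log^3_{1+\eps'}(B/w_{\min})\big)=O\big(m\log^3 m\,\log^3_{1+\eps}(B/w_{\min})\big)$, giving the claimed bound. The main obstacle I anticipate is bookkeeping the \emph{two} sources of multiplicative slack simultaneously --- the $(1\pm 3\eps)$ from the inner sub-routine on each category's weight and the per-level $(1+\eps')$ rounding in $\mathcal{X}$ --- and verifying they compose to exactly the advertised $(1-\eps)$ on value, $(1\pm\eps)$ on category weights, and $(1+\eps)$ on the capacity; this requires choosing the various internal constants ($\eps/8$, $\eps/6$, $3\eps/8$) so all the geometric-series estimates close, exactly as in the $BV^{\max}$ proof but now with the capacity constraint itself being one of the quantities that accumulates rounding error down the tree.
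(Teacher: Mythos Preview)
Your proposal is correct and follows essentially the same approach as the paper: the paper introduces the weight-target sub-problem (Problem~\ref{prob:weight_sub_max_knapsack}) and solves it by a DP on rounded weights (Theorem~\ref{the:sub_weight_max_knapsack}), then builds tables $\mathcal{Y}$ and $\mathcal{Z}$ (your $\mathcal{W}$ and $\mathcal{X}$) over geometric weight levels and combines them along a perfect binary tree, proving the same three invariants you list. One small correction: in your final paragraph you speak of a ``$(1-\eps)$ on value'' target, but since the inner DP stores exact values and only weights are rounded, the output value is at least $\OPT$ with no multiplicative loss---which is precisely what the theorem advertises and what you correctly derive in invariant~(c).
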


Towards proving this theorem, we first study the following sub-problem.

\begin{problem}\label{prob:weight_sub_max_knapsack}
	Given $w>0$, $\eps>0$ and a set $V{'} = [n]$ of items, with item $i \in V{'}$ having the value ${v}_{i}{'}$ and the weight ${w}_{i}{'}$, compute a subset $S \subseteq V{'}$ maximizing $\sum_{i \in S} {v}_{i}{'}$, such that $(1-\eps)w \leq \sum_{i \in S} {w}_{i}{'} \leq (1+\eps)w$.
\end{problem}
Above problem looks similar to the max-knapsack problem but it is different from it in the following way. The total weight of output subset in the max-knapsack problem is required to be less than or equal to the given bound, while the total weight of output subset in \Cref{prob:weight_sub_max_knapsack} is required to be in small range. We could use the algorithm for \Cref{prob:weight_sub_max_knapsack} to obtain bundles of items $V_i$, $\forall i \in [\NoOfCategories]$ , such that the total weight of different bundles are in different ranges.
\begin{theorem}\label{the:sub_weight_max_knapsack}
	For any $\eps>0$ and $w>0$, there exists an algorithm which outputs a subset $S$ having the total value at least the optimal value of Problem \ref{prob:weight_sub_max_knapsack}, and $(1-3\eps)w \leq \sum_{i \in S} w_{i}{'} \leq (1+3\eps)w$. The running time of the algorithm is $O\left(\frac{n^2}{\eps}\right)$.
\end{theorem}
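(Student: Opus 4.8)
The plan is to mirror the structure of the algorithm for Problem~\ref{prob:sub_value_max_knapsack} (Theorem~\ref{the:sub_value_max_knapsack}), swapping the roles of value and weight. Concretely, I would first discard all items with weight exceeding $(1+\eps)w$ (no such item can appear in a feasible solution). Then I would \emph{round the weights} rather than the values: for each remaining item $i$ set ${w}_{i}{''} := \lfloor \frac{n\,{w}_{i}{'}}{\eps w} \rfloor$, so that rounded weights are integers bounded by roughly $\frac{(1+\eps)n}{\eps}$, and the rounding error per item is at most $\frac{\eps w}{n}$. Next I would build a dynamic programming table $\mathcal{H}(i, w{''})$ over the first $i$ items and each achievable rounded total weight $w{''} \in [\lceil \frac{(1+2\eps)n}{\eps}\rceil]\cup\{0\}$, where $\mathcal{H}(i,w{''})$ now stores the \emph{maximum total (unrounded) value} of a subset of the first $i$ items whose rounded weight equals $w{''}$ (with $-\infty$ if no such subset exists). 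The recursion is the obvious one: $\mathcal{H}(i,w{''}) = \max\{\mathcal{H}(i-1,w{''}),\; \mathcal{H}(i-1,w{''}-w_{i}{''}) + {v}_{i}{'}\}$ when $w{''}\ge w_{i}{''}$, and $\mathcal{H}(i-1,w{''})$ otherwise, with base cases initialized as in Algorithm~\ref{alg:minimum_weight_given_v}. Finally, output the subset attaining $\max\{\mathcal{H}(n,w{''}) : \lfloor\frac{(1-2\eps)n}{\eps}\rfloor \le w{''} \le \lfloor\frac{(1+2\eps)n}{\eps}\rfloor\}$.

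For correctness I would argue in two parts, exactly as in the proof of Theorem~\ref{the:sub_value_max_knapsack}. First, a standard induction on $i$ shows $\mathcal{H}(i,w{''})$ indeed equals the max value over subsets of the first $i$ items with rounded weight exactly $w{''}$. Second, I must show the output subset is both feasible (its true weight lies in $[(1-3\eps)w,(1+3\eps)w]$) and has value at least that of the optimum $O$ of Problem~\ref{prob:weight_sub_max_knapsack}. Feasibility: if $S$ is the output then $\lfloor\frac{(1-2\eps)n}{\eps}\rfloor \le \sum_{i\in S} w_{i}{''} \le \lfloor\frac{(1+2\eps)n}{\eps}\rfloor$; using ${w}_{i}{''} \le \frac{n{w}_{i}{'}}{\eps w} < {w}_{i}{''}+1$ and $|S|\le n$, the lower bound gives $\sum_{i\in S}{w}_{i}{'} \ge (1-2\eps)w - \frac{\eps w}{n} \ge (1-3\eps)w$ and the upper bound gives $\sum_{i\in S}{w}_{i}{'} \le (1+2\eps)w + \eps w = (1+3\eps)w$. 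Optimality: since $(1-\eps)w\le \sum_{i\in O}{w}_{i}{'}\le(1+\eps)w$, the same rounding estimates show $\lfloor\frac{(1-2\eps)n}{\eps}\rfloor \le \sum_{i\in O}{w}_{i}{''} \le \lfloor\frac{(1+2\eps)n}{\eps}\rfloor$ (the integrality of $w_i''$ lets us take floors), so $O$ competes in the max defining the output; hence the value of $S$ is at least the value of $O$.

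The running time is immediate: $\mathcal{H}$ has $O(n^2/\eps)$ entries, each computed in $O(1)$ time, plus $O(n)$ for the preprocessing step and $O(n/\eps)$ for reading off the answer, so total $O(n^2/\eps)$.

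I do not expect any serious obstacle here: the argument is a direct value/weight transposition of the already-established Theorem~\ref{the:sub_value_max_knapsack}. The only point requiring a little care is the accounting of the floor operations — making sure that the additive errors $\pm\frac{\eps w}{n}$ per item (totalling at most $\pm\eps w$ over $\le n$ items) combine with the $\pm 2\eps w$ slack in the DP index range to yield exactly the $\pm 3\eps w$ claimed, and that $O$'s rounded weight genuinely lands in the searched range so that $S$ dominates it. (Note the stated output guarantee $\pm 4\eps w$ in the algorithm box versus the $\pm 3\eps w$ in the theorem statement: I would simply prove the tighter $\pm 3\eps w$ bound, which the analysis above yields, and this is what is actually used when Theorem~\ref{the:sub_weight_max_knapsack} is invoked with $\eps$ rescaled in the proof of Theorem~\ref{the:weight_max_knapsack}.)
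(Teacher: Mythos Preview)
Your proposal is correct and follows essentially the same approach as the paper: discard heavy items, round weights to integers of size $O(n/\eps)$, build the exact-weight DP storing maximum value, and scan a window of rounded weights around the target. The only cosmetic difference is that the paper rounds with $\lceil\cdot\rceil$ and uses ceilings for the search window, whereas you round with $\lfloor\cdot\rfloor$ and use floors (mirroring Theorem~\ref{the:sub_value_max_knapsack}); both choices yield the same $\pm 3\eps w$ guarantee and $O(n^2/\eps)$ running time.
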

\begin{proof}
	
	The algorithm for the theorem is described in Algorithm \ref{alg:maximum_value_given_w}. The following describes the proof of correctness of Algorithm \ref{alg:maximum_value_given_w}.\par
	
	\RestyleAlgo{boxruled}
	\begin{algorithm}[!t]
		\caption{Algorithm for Problem \ref{prob:weight_sub_max_knapsack}} \label{alg:maximum_value_given_w}
		\textbf{Input:} $w>0$, $\eps>0$ and a set $V{'} = [n]$ of items, with item $i \in V{'}$ having value $v_{i}{'}$ and weight $w_{i}{'}$\\
		\textbf{Output:}  A subset $S \subseteq V{'}$ having total value at least optimal weight of Problem \ref{prob:weight_sub_max_knapsack}, such that $(1-4\eps)w \leq \sum_{i \in S} w_{i}{'} \leq (1+4\eps)w$.\\
		\begin{enumerate}
			\item 	Remove all items from $V{'}$ which have weight more than $(1+\eps)w$.
			\item \label{rounding_sub_weight_max_knapsack} For each item $i \in V{'}$,
			define \[ w_{i}{''} := \left \lceil \frac{n {w}_{i}{'}}{\eps w}  \right \rceil . \]
			\item Let $\mathcal{H}(i,w{''})$ for $ w{''} \in \left[ \left \lceil \frac{(1+2\eps)n}{\eps} \right \rceil \right]\cup \{0\}$ and $i \in [n]$, be the dynamic programming table \\ constructed in the following way.
		
			\begin{enumerate}
			\item\label{lower_bound_weight_H_maxknapsack_H1} $\mathcal{H}(1,{w{''}}_{1})={v}_{1}{'}$. $\mathcal{H}(1,0)=0$. $\mathcal{H}(1,{w{''}})=-\infty$, $\forall w{''} \in \left[ \left \lceil \frac{(1+2\eps)n}{\eps} \right \rceil \right]\setminus \{ {w}_{1}{''}\}$.  
			\item \label{lower_bound_weight_H_maxknapsack_H}
			 $\forall i\in[n]\setminus\{1\}$, $\forall w{''} \in \left[ \left \lceil \frac{(1+2\eps)n}{\eps} \right \rceil \right] $,\\ If $w{''}< w_{i}^{''}$, then $\mathcal{H}(i,w{''})=\mathcal{H}(i-1,w{''})$. \\
			Else,
			\begin{equation*}
			\begin{split}
			\mathcal{H}(i,w{''}) := \max\{&\mathcal{H}(i-1,w{''}),\mathcal{H}(i-1,w{''}-{w}_{i}{''})+v_{i}{'} \; \mid \; w_{i}{''}\leq w{''}  \}. 
			\end{split}
			\end{equation*}		
			\end{enumerate}
			
			\item \label{bound_weight_output_step_subalg_max_knapsack} Output the subset $S$ as follows,
			\[ \max \left \{ \mathcal{H}(n,w{''}) \mid \left \lceil \frac{(1-2\eps)n}{\eps} \right \rceil \leq w{''} \leq \left \lceil \frac{(1+2\eps)n}{\eps} \right \rceil  \right\} . \]
			
		\end{enumerate}
		
	\end{algorithm}

	\paragraph{Proof of Correctness.}
	We claim that the entry $\mathcal{H}(i,w{''})$,$\forall i\in [n]$, $\forall w{''} \in \left[ \left \lceil \frac{(1+2\eps)n}{\eps} \right \rceil \right]\cup \{0\}$, indicates the value of the  maximum value subset from the first $i$ items of $V{'}$ having the total rounded weight equal to $w{''}$. Let $S{'}$ denote the subset satisfying above property for the entry $\mathcal{H}(i,w{''})$,$\forall i\in [n]\setminus\{1\}$, $\forall w{''} \in \left[ \left \lceil \frac{(1+2\eps)n}{\eps} \right \rceil \right]\cup \{0\}$. 
		\begin{itemize}
	    \item If $i\in S{'}$, then $\mathcal{H}(i,w{''})$ is the sum of the value of $i$ ($v_{i}{'}$), and the maximum value subset from first $i-1$ items having the total rounded weight $w{''}-w_{i}{''}$ ($\mathcal{H}(i-1,w{''}-w_{i}{''})$).
	 \item If $i \notin S{'}$, then $\mathcal{H}(i,w{''})$ is equal to the value of maximum value subset from first $i-1$ items having the total rounded weight $w{''}$, i.e. $\mathcal{H}(i,w{''})=\mathcal{H}(i-1,w{''})$.
	 \end{itemize}
	The recursion in Step \ref{lower_bound_weight_H_maxknapsack_H} captures both of above possibilities. Step \ref{lower_bound_weight_H_maxknapsack_H1} does necessary initialization for the recursion in Step \ref{lower_bound_weight_H_maxknapsack_H}.\par
\begin{subequations}
	 \begin{align}
	    \left \lceil \frac{(1-2\eps)n}{\eps} \right \rceil  \leq \sum_{i \in S} {w}_{i}{''} & \leq \left \lceil \frac{(1+2\eps)n}{\eps} \right \rceil \\
	\Rightarrow \frac{(1-2\eps)n}{\eps}  \leq \sum_{i \in S} {w}_{i}{''} & \leq  \frac{(1+2\eps)n}{\eps}+1. \label{S_range_weight}
	 \end{align}
	 \end{subequations}
	 Using the definition of ${w}_{i}{''}$ of Step \ref{rounding_sub_weight_max_knapsack}, we obtain from the left inequality of \eqref{S_range_weight},
	  \begin{align*}
         (1-2\eps)w-\frac{\eps w}{n} & \leq \sum_{i \in S} {w}_{i}{'}\\ \Rightarrow (1-3\eps)w  & \leq \sum_{i \in S} {w}_{i}{'} .
    \end{align*}
    	 From right inequality of \eqref{S_range_weight},
    \begin{align*}
        \sum_{i \in S} {w}_{i}{''} &\leq \frac{(1+2\eps)n}{\eps}+n \\
        \Rightarrow \sum_{i \in S} {w}_{i}{'} &\leq (1+3\eps)w. &\textrm{(Definition of ${w}_{i}{''}$)}
    \end{align*}

	Now, we will prove that $\left \lceil \frac{(1-2\eps)n}{\eps} \right \rceil  \leq \sum_{i \in O} {w}_{i}{''} \leq \left \lceil \frac{(1+2\eps)n}{\eps} \right \rceil$. Since $S$ is the maximum value subset in the previous range, the above claim will imply that the total value of $S$ is at least the total value of $O$. We know that $(1-\eps)w  \leq \sum_{i \in O} {w}_{i}{'}  \leq  (1+\eps) w$. Using the definition of ${w}_{i}{''}$ in Step \ref{rounding_sub_weight_max_knapsack}, we get 
	\begin{align*}
	\frac{n(1-\eps)}{\eps}  & \leq \sum_{i \in O} {w}_{i}{''}  \leq \frac{(1+\eps)n}{\eps}+n \\
	\frac{n(1-2\eps)}{\eps} & \leq \sum_{i \in O} {w}_{i}{''}  \leq \frac{(1+2\eps)n}{\eps}\\
	\left\lceil\frac{n(1-2\eps)}{\eps}\right\rceil & \leq \sum_{i \in O} {w}_{i}{''}  \leq \left\lceil\frac{(1+2\eps)n}{\eps}\right\rceil.
	\end{align*}
	The last inequality is true by the fact that ${w}_{i}{''}$ is an integer, $\forall i \in V{'}$.
	
		Note that the algorithm returns $-\infty$ if it does not find a subset in Step \ref{bound_weight_output_step_subalg_max_knapsack}. This is because of the initialization in Step \ref{lower_bound_weight_H_maxknapsack_H1}.
	\paragraph{Running time analysis.} The size of the table $\mathcal{H}$ is $O\left(\frac{n^2}{\eps}\right)$, and to fill each entry in the table $\mathcal{H}$, we require $O(1)$ time. So, the total running time of the algorithm is $O\left(\frac{n^2}{\eps}\right)$.

\end{proof}

\begin{proof}[Proof of \Cref{the:weight_max_knapsack}]
The algorithm for theorem is described in Algorithm \ref{alg:fairness_based_lower_weight}.
The algorithm creates bundles of items from $V_i$, $\forall i \in [\NoOfCategories]$, such that the total weight of each bundle is in different ranges using Theorem \ref{the:sub_weight_max_knapsack} (Step \ref{Y_lower_max}). It stores these bundles in the table $\mathcal{Y}$. After that the algorithm combines bundles from all categories to obtain the final solution using the dynamic programming table $\mathcal{Z}$ (Step \ref{Z_max}). The total weight of each bundle is represented by some power of $\left(1+\eps{'}\right)$ in the tables $\mathcal{Y}$ and $\mathcal{Z}$. The algorithm might over calculate at most $\left(1+\eps{'} \right)$ fraction of total weight in table $\mathcal{Y}$ (Step \ref{Y_lower_max}). The total fraction of weight over calculated after combining bundles from all categories in Step \ref{Z_i_max} is at most $\left(1+\eps{'} \right)^{O(\log_{2}\NoOfCategories)}$. This is at most $\left(1+\eps\right)$ because of the choice of $\eps{'}$ in Step \ref{eps'_weight_max}. We describe the formal proof of the algorithm below.\par
	
	\RestyleAlgo{boxruled}
	\begin{algorithm}[!t]
		\caption{Algorithm for the $BW^{\max}$ (\Cref{prob:weight_max_knapsack})} \label{alg:fairness_based_lower_weight}
		\textbf{Input:} The sets $V_{i}$ of items, $\forall i\in [\NoOfCategories]$. $0\leq l_{i}^{w}\leq u_{i}^{w}$ for $i\in[\NoOfCategories]$, the knapsack capacity $B $ and $\eps>0$. \\
		\textbf{Output:}  $S$ having the total value at least the optimal value of $BW^{\max}$ (\Cref{prob:weight_max_knapsack}), such that $(1-\eps)l_{i}^{w} \leq \sum_{r \in S \cap V_i} w_{r}^{(i)} \leq (1+\eps)u_{i}^{w}$, $\forall i \in [\NoOfCategories]$, and the total weight of $S$ is at most $(1+\eps)B$.\\
		\begin{enumerate}
			
			\item \label{eps'_weight_max}	Let $\eps{'}=\left(1+\frac{3\eps}{8}\right)^{\frac{1}{\log_{2}\NoOfCategories+2}}-1$. 
			\item \label{Y_lower_max} Let $\mathcal{Y}(i,j), \ \forall i \in [\NoOfCategories]$, $\forall j \in \left[ \left \lceil \log_{1+\frac{\eps}{8}} \left(\frac{B}{w_{\min}}\right) \right \rceil\right]\cup \{0\}$ be the table, where the entry $\mathcal{Y}(i,j)$ \\indicates the value of a subset of $V_i$ that is obtained by \Cref{the:sub_weight_max_knapsack} by  setting $V_i$ as $V{'}$, $w_{\min}\left(1+\frac{\eps}{8}\right)^j$ as $w$ and $\frac{\eps}{6}$ as $\eps$ in \Cref{the:sub_weight_max_knapsack}.
			
			\item \label{Z_max}Let $\mathcal{Z}(i,j)$, $\forall i\in[2\NoOfCategories-1]$ and $\forall j \in \left[ \left \lceil \log_{1+\eps{'}} \left(\frac{B}{w_{\min}}\right) \right \rceil+\log_2 \NoOfCategories \right]\cup \{0\}$, be the DP table \\constructed as follows.
			
			\begin{enumerate}
				
				\item \label{Z_1_max} $\forall i\in[\NoOfCategories]$				\begin{equation*}
				\begin{split}
				\mathcal{Z}(m-1+i,j):=\max&\left\{\mathcal{Y}(i,j{''}) \mid j{''} \in \left[ \left \lceil \log_{1+\frac{\eps}{8}} \left(\frac{B}{w_{\min}}\right) \right \rceil   \right]\cup \{0\}\right.\\
				&\left. \& \; \left(1+\frac{\eps}{8}\right)^{j{''}+1}w_{\min} \geq l_{i}^{w}\; \& \; \left(1+\frac{\eps}{8}\right)^{j{''}-1}w_{\min} \leq u_{i}^{w} \right. \\ 
				& \left. \left(1+\frac{\eps}{8} \right)^{j{''}} \leq \left(1+\eps{'}\right)^{j}\right\} .
				\end{split}
				\end{equation*}
				If the set satisfying above condition is empty, then set $\mathcal{Z}(1,j)$ to $-\infty$.

				\item \label{Z_i_max}  $\forall i\in[\NoOfCategories-1]$ and any $j{'},j{''} \in \left[ \left \lceil \log_{1+\eps{'}} \left(\frac{B}{w_{\min}}\right) \right \rceil + \log_2 \NoOfCategories \right]\cup \{0\}$, we have

				\begin{equation*}
				\begin{split}
				\mathcal{Z}(i,j):= \max & \left\{\mathcal{Z}(2i,j{'})+\mathcal{Y}(2i+1,j{''}) \mid\;   \left(1+\eps{'}\right)^{j} \geq \left(1+\eps{'}\right)^{j{'}}+\left(1+\eps{'}\right)^{j{''}} \right \} .
				\end{split}
				\end{equation*}	
				If the set satisfying above condition is empty, then set $\mathcal{Z}(i,j)$ to $-\infty$.
			\end{enumerate}

			\item \label{lower_weight_selection_max}
			Output the subset $S$ as follows,
			\[ \max \left\{ \mathcal{Z}(\NoOfCategories,j) \mid j\in  \left[ \left \lceil \log_{1+\eps{'}} \left(\frac{B}{w_{\min}}\right) \right \rceil+\log_2 \NoOfCategories \right]\cup \{0\}  \right\}. \]
		\end{enumerate}
		
	\end{algorithm}

\paragraph{Properties of $\mathcal{Y}$. }We claim that the entry  $\mathcal{Y}(i,j)$,  $\forall i\in[\NoOfCategories]$, $\forall j \in \left[ \left \lceil \log_{1+\frac{\eps}{8}} \left(\frac{B}{w_{\min}}\right) \right \rceil \right]\cup \{0\}$, indicates the value of a subset of $V_i$ that satisfies the two properties listed below. The entry of $\mathcal{Y}$ also corresponds to respective subset. The entry of $\mathcal{Y}$ could be $-\infty$, which indicates no subset. We use the notation $\mathcal{Y}(i,j)$ to indicate both the entry and the subset.

\begin{enumerate}
	\item \label{Y_lower_weight_max_1} If the entry $\mathcal{Y}(i,j)$ is finite, then the total weight of the corresponding subset is in between $\left(1-\frac{\eps}{2}\right)$ $\left(1+\frac{\eps}{8}\right)^j w_{\min}$ and $\left(1+\frac{\eps}{2}\right)\left(1+\frac{\eps}{8}\right)^{j}w_{\min}$.
	\item \label{Y_lower_weight_max_2} The total value of the subset corresponding to $\mathcal{Y}(i,j)$ is at least the total value of any subset of $V_i$ having the total weight in between $\left(1-\frac{\eps}{6}\right)\left(1+\frac{\eps}{8}\right)^j w_{\min}$ and $\left(1+\frac{\eps}{6}\right)\left(1+\frac{\eps}{8}\right)^{j}w_{\min}$. 
\end{enumerate}

The table $\mathcal{Y}$ is created in step \ref{Y_lower_max} of the Algorithm \ref{alg:fairness_based_lower_weight}. This step uses \Cref{the:sub_weight_max_knapsack} for creation of $\mathcal{Y}$. We get both the properties of $\mathcal{Y}$ because of the guarantee of \Cref{the:sub_weight_max_knapsack}.

Let $\mathcal{T}$ be a perfect binary tree with $\NoOfCategories$ leaf nodes. For simplicity, assume that $\NoOfCategories$ is power of $2$. Although, we can prove the same result by slight modification of the proof when $\NoOfCategories$ is not power of $2$. The total number of nodes in $\mathcal{T}$ will be $2\NoOfCategories-1$.  Each node in $\mathcal{T}$ could be represented by an index number from $1$ to $2\NoOfCategories-1$ with root at index $1$.
The node at an index $i$ has an left child at an index $2i$ and right child at an index $2i+1$, $\forall i\in[\NoOfCategories-1]$. Let the leaf node at an index $(\NoOfCategories-1)+i$ represent the category $i$, $\forall i\in [\NoOfCategories]$. Let $\mathcal{T}(i)$ denote the set of categories represented by the leaves of sub tree rooted at $i$. Specifically, $\mathcal{T}(\NoOfCategories-1+i)=\{i\}$, $\forall i\in[m]$. Also, $\mathcal{T}(i)=\mathcal{T}(2i)\cup\mathcal{T}(2i+1)$, $\forall i\in [\NoOfCategories-1]$.

\paragraph{Properties of $\mathcal{Z}$. }We claim that the entry $\mathcal{Z}(i,j)$,  $\forall i\in[2\NoOfCategories-1]$, $j \in \left[ \left \lceil \log_{1+\eps{'}} \left(\frac{B}{w_{\min}}\right) \right \rceil +\log_2 \NoOfCategories \right]\cup \{0\}$, indicates the value of the subset of $\cup_{k\in \mathcal{T}(i) } V_{k}$ that satisfies the following three properties. The entry of $\mathcal{Z}$ also corresponds to the  respective subset. The entry of $\mathcal{Z}$ could be $-\infty$, which indicates no subset. We use the notation $\mathcal{Z}(i,j)$ to indicate both the entry and the subset.
\begin{enumerate}
	\item \label{Z_lower_weight_max_first} If the entry $\mathcal{Z}(i,j)$ is finite, then $\sum_{k\in \mathcal{T}(i)} \sum_{r \in \mathcal{Z}(i,j)\cap V_{k}} w^{(k)}_{r} \leq \left(1+\frac{\eps}{2}\right)\left(1+\eps{'}\right)^{j}w_{\min}$.
	\item \label{Z_lower_weight_max_second} If the entry $\mathcal{Z}(i,j)$ is finite, then $\left(1-\frac{\eps}{2}\right)\left(1+\frac{\eps}{8}\right)^{-1}l_{k}^{w} \leq \sum_{r \in V_{k} \cap \mathcal{Z}(i,j)} w_{r}^{(k)} \leq \left(1+\frac{\eps}{2}\right)\left(1+\frac{\eps}{8}\right)u_{k}^{w}$, $\forall k \in \mathcal{T}(i)$.
	\item \label{Z_lower_weight_max_third} Let $i$ be a node of $\mathcal{T}$, $\forall i \in [2\NoOfCategories-1]$, having the distance $t$ from leaves, $ t\in\left[\log_{2}\NoOfCategories\right]\cup \{0\}$.  For all $O{'} \subseteq \cup_{k\in \mathcal{T}(i)} V_{k}$ having the total weight at most $\left(1+\eps{'}\right)^{j-t}w_{\min}$, and $ l_{k}^{w} \leq \sum_{r \in  O{'}\cap V_{k}}w_{r}^{(k)}\leq u_{k}^{w}$, $\forall k \in \mathcal{T}(i)$, the total value of the subset  $\mathcal{Z}(i,j)$ is at least the total value of $O{'}$.
\end{enumerate}

Steps \ref{Z_1_max} and \ref{Z_i_max} of the algorithm chooses the subset $\mathcal{Y}(i,j{''})$ that satisfies the inequalities $\left(1+\frac{\eps}{8}\right)^{j{''}+1}$ $w_{\min}\geq l_{i}^{w}$ and $\left(1+\frac{\eps}{8}\right)^{j{''}-1}w_{\min}\leq u_{i}^{w}$. By Property \ref{Y_lower_weight_max_1} of $\mathcal{Y}$, the total weight of $\mathcal{Y}(i,j{''})$ is in between $\left(1-\frac{\eps}{2}\right)\left(1+\frac{\eps}{8}\right)^{-1}l_{i}^{w}$ and $\left(1+\frac{\eps}{2}\right)\left(1+\frac{\eps}{8}\right)u_{i}^{w}$. This proves that the subset corresponding to finite entry $\mathcal{Z}(i,j)$ satisfies Property \ref{Z_lower_weight_max_second}. The subset corresponding to finite entry $\mathcal{Z}(\NoOfCategories-1+i,j)$, $\forall i \in [\NoOfCategories]$, $\forall j \in \left[ \left \lceil \log_{1+\eps{'}} \left(\frac{B}{w_{\min}}\right) \right \rceil +\log_2 \NoOfCategories \right]\cup \{0\} $, will satisfy Property \ref{Z_lower_weight_max_first}. This is true because of Property \ref{Y_lower_weight_max_1} of $\mathcal{Y}$ and the condition $\left( 1+\frac{\eps}{8} \right)^{j''} \leq \left( 1+\eps{'} \right)^{j}$ in Step \ref{Z_1_max} for selecting subset $\mathcal{Z}(i,j{''})$. Because of the condition $\left(1+\eps{'}\right)^{j}\geq (1+\eps{'})^{j{'}}+ (1+\eps{'})^{j{''}}$ in Step \ref{Z_i_max} of the algorithm and Property $\ref{Z_lower_weight_max_first}$ of $\mathcal{Z}$, the subset corresponding to finite entry $\mathcal{Z}(i,j)$, $\forall i\in [\NoOfCategories-1]$, $\forall j \in \left[ \left \lceil \log_{1+\eps{'}} \left(\frac{B}{w_{\min}}\right) \right \rceil +\log_2 \NoOfCategories \right]\cup \{0\}$, will satisfy Property \ref{Z_lower_weight_max_first}.\par

We prove that the nodes in $\mathcal{T}$ will satisfy Property \ref{Z_lower_weight_max_third} by induction. In the base case, we prove that Property \ref{Z_lower_weight_max_third} is satisfied for all leaves. Let $O{''}$ be any subset of $V_i$ that satisfies the fairness bounds such that the total weight of $O{''}$ is at most $w_{\min}\left(1+\eps{'} \right)^{j}$. Let $j{''}\in \left[ \left \lceil \log_{1+\frac{\eps}{8}} \left(\frac{B}{w_{\min}}\right) \right \rceil \right]\cup \{0\}$ such that the total weight of $O{''}$ is in between $w_{\min}\left(1+\frac{\eps}{8}\right)^{j{''}-1}$ and $w_{\min}\left(1+\frac{\eps}{8}\right)^{j{''}}$. By Property \ref{Y_lower_weight_max_2} of $\mathcal{Y}$, the total value of $\mathcal{Y}(i,j{''})$ is at least $O{''}$. Since $O{''}$ satisfies the fairness bounds, the conditions $w_{\min}\left(1+\frac{\eps}{8}\right)^{j{''}+1}\geq l_{i}^{w}$ and $w_{\min}\left(1+\frac{\eps}{8}\right)^{j{''}-1}\leq u_{i}^{w}$ in Step \ref{Z_1_max} are also satisfied. So, the subset $\mathcal{Y}(i,j{''})$ is feasible for $\mathcal{Z}(\NoOfCategories-1+i,j)$ in Step \ref{Z_1_max}. So, the Property \ref{Z_lower_weight_max_third} is satisfied by $\mathcal{Z}(\NoOfCategories-1+i,j)$ for all $i\in[\NoOfCategories]$.\par
For any $t\in\left[\log_{2}\NoOfCategories-1\right]\cup \{0\}$, assume the hypothesis that all the nodes having distance $t$ from leaves satisfy Property \ref{X_lower_value_max_third}. We will prove by induction that for any node $i$ with distance $t+1$ from leaves, Property \ref{X_lower_value_max_third} is satisfied. Let $O{'}\subseteq \cup_{k\in \mathcal{T}(i)} V_k$ that satisfies the fairness conditions for all categories in $\mathcal{T}(i)$.
Let $j{^*} \in  \left[ \left \lceil \log_{1+\eps{'}} \left(\frac{B}{w_{\min}}\right) \right \rceil +\log_2 \NoOfCategories \right]\cup \{0\}$ satisfies the following inequality

\begin{equation}\label{eq:bound_weight_max_knapsack_1}
\left(1+\eps{'}\right)^{j{^*-1}}w_{\min} \leq \sum_{k\in \mathcal{T}(2i)}\sum_{r \in O{'}\cap V_k}w^{(k)}_{r} \leq\left(1+\eps{'}\right)^{j{^*}}w_{\min}.    
\end{equation}
If $j'=j{^*}-t$,
the induction hypothesis implies the following :
\begin{equation}\label{eq:value_upper_bound_left_child}
\sum_{k\in \mathcal{T}(2i)} \sum_{r \in \mathcal{X}(2i,j{'})\cap V_{k}} v^{(k)}_{r} \geq \sum_{k\in \mathcal{T}(2i)} \sum_{r \in O{'} \cap V_{k}} w^{(k)}_{r}.
\end{equation}
Similarly, let $j{^{**}} \in  \left[ \left \lceil \log_{1+\eps{'}} \left(\frac{B}{w_{\min}}\right) \right \rceil +\log_2 \NoOfCategories \right]\cup \{0\}$ that satisfies the following inequality 

\begin{equation}\label{eq:bound_weight_max_knapsack_2}
\left(1+\eps{'}\right)^{j{^{**}}-1}w_{\min} \leq \sum_{k\in \mathcal{T}(2i+1)}\sum_{r \in O{'}\cap V_k}w^{(k)}_{r} \leq\left(1+\eps{'}\right)^{j{^{**}}}w_{\min}.    
\end{equation}
If $j''=j{^{**}}-t$, the induction hypothesis implies the following \begin{equation}\label{eq:value_upper_bound_right_child}
\sum_{k\in \mathcal{T}(2i+1)} \sum_{r \in \mathcal{X}(2i+1,j{''})\cap V_{k}} v^{(k)}_{r} \geq \sum_{k\in \mathcal{T}(2i+1)} \sum_{r \in O{'} \cap V_{k}}v^{(k)}_{r}.    
\end{equation}

We claim that the inequality $\left(1+\eps{'}\right)^{j}\geq \left(1+\eps{'}\right)^{j{'}}+ \left(1+\eps{'}\right)^{j{''}} $ is satisfied in Step \ref{Z_i_max} for any $j\in  \left[ \left \lceil \log_{1+\eps{'}} \left(\frac{B}{w_{\min}}\right) \right \rceil +\log_2 \NoOfCategories \right]\cup \{0\}$, such that the total weight of $O{'}$ is at most $w_{\min}(1+\eps{'})^{j-t-1}$. This is true because the minimum weight of $O{'}$ is at least $w_{\min}\left(\left(1+\eps{'}\right)^{j{^*}-1}+\left(1+\eps{'}\right)^{j{^{**}}-1} \right)$ (by Inequality \ref{eq:bound_weight_max_knapsack_1} and Inequality \ref{eq:bound_weight_max_knapsack_2}), which is a most $w_{\min}\left(1+\eps{'}\right)^{j-t-1}$. So, the pair of subsets $\mathcal{Z}(2i,j{'})$ and $\mathcal{Z}(2i+1,j{''})$ is feasible in Step \ref{X_i_max} for all such $j$. By Equation \ref{eq:value_upper_bound_left_child} and Equation \ref{eq:value_upper_bound_right_child}, Property \ref{X_lower_value_max_third} of $\mathcal{Z}$ is satisfied for $\mathcal{Z}(i,j)$ and $O{'}$.\par	

By property \ref{Z_lower_weight_max_third} of $\mathcal{Z}$, there exists an subset $\mathcal{Z}(1,j)$, for some $j\in  \left[ \left \lceil \log_{1+\eps{'}} \left(\frac{B}{w_{\min}}\right) \right \rceil+\log_2 \NoOfCategories \right]\cup \{ 0\}$, that satisfies the condition in Property \ref{Z_lower_weight_max_third} for some optimal solution. This proves that the total value of $S$ (Step \ref{lower_weight_selection_max}) is at least the optimal value. By property \ref{Z_lower_weight_max_first} of $\mathcal{Z}$, the total weight of $S$ (Step \ref{lower_weight_selection_max}) is at most $\left(1+\frac{\eps}{2} \right)(1+\eps{'})^{\log_2 \NoOfCategories+1}B$. By Step \ref{eps'_weight_max} of the algorithm, this number is $(1+\frac{\eps}{2} )(1+\frac{3\eps}{8})B < (1+\eps)B$. \par
	
If the entry $\mathcal{Z}(1,j)$ is finite, Property \ref{X_lower_value_max_second} of $\mathcal{Z}$ implies that the total weight of items from $V_i \cap \mathcal{Z}(1,j)$ is in between $\left(1-\frac{\eps}{2} \right)\left(1+\frac{\eps}{8} \right)^{-1}l_{i}^{w}$ and $\left(1+\frac{\eps}{2} \right)\left(1+\frac{\eps}{8} \right)u_{i}^{w}$ for all $i\in [\NoOfCategories]$. The total weight of items from $V_i \cap \mathcal{Z}(1,j)$ is at least $\left(1-\frac{\eps}{2} \right)\left(1+\frac{\eps}{8} \right)^{-1}l_{i}^{w}> \left(1-\frac{\eps}{2} \right)\left(1-\frac{\eps}{8} \right) l_{i}^{w}> (1-\eps)l_{i}^{w}$. The total weight of items from $V_i \cap \mathcal{Z}(1,j)$ is at most $\left(1+\frac{\eps}{2} \right)\left(1+\frac{\eps}{8} \right)u_{i}^{w} < \left(1+\frac{6}{8}\eps \right)u_{i}^{w}<\left(1+\eps \right)u_{i}^{w}$. \par

	\paragraph{Running time Analysis:} The size of the table $\mathcal{Y}$ is $O \left(\NoOfCategories \log_{(1+\eps)} \left(\frac{B}{w_{\min}}\right)\right)$, and we require $O(\frac{n^2}{\eps})$ time to fill each entry. So, the total time required to build the table $\mathcal{Y}$ is $O \left( \frac{n^2 \NoOfCategories \log_{(1+\eps)} \left(\frac{B}{w_{\min}}\right)}{\eps}\right)$. The total time required to build the table $\mathcal{Z}$ from the table $\mathcal{Y}$ is $O\left(\NoOfCategories \log^3 \NoOfCategories \log_{(1+\eps)}^{3} \left(\frac{B}{w_{\min}}\right)\right)$. So, the total running time is $O \left( \frac{n^2 \NoOfCategories \log^3 \NoOfCategories \log_{(1+\eps)}^{3} \left(\frac{B}{w_{\min}}\right)}{\eps}\right)$.
\end{proof}

\section{Fair min-knapsack}\label{Fair min-knapsack}
The classical min-knapsack problem is to find the packing of minimum weight having the total value at least the given lower bound.\par

We will consider the same fairness notions in the min-knapsack as we have considered for the max-knapsack case. We give algorithms for these notions of fairness in the min-knapsack case in this section. 
\subsection{Fairness based on number of items}

\begin{problem}[$BN^{min}$]\label{prob:number_min}
Given a set of items, each belonging to  one of $\NoOfCategories$ categories, and numbers $l_{i}^{n}$ and $u_{i}^{n}$ for category $i$, $\forall i \in [\NoOfCategories]$, the problem is to find a subset that minimizes the total weight, such that the number of items from category $i$ is between $l_{i}^{n}$ and $u_{i}^{n}$, $\forall i\in[\NoOfCategories]$, and the total value of the subset is at least the given value lower bound $L$. 
\end{problem}
We prove the following theorem.
\begin{theorem}\label{the:number_min}
For any $\eps>0$, there exists a $(1+\eps)$-approximation algorithm for $BN^{min}$ (Problem \ref{prob:number_min}) with running time $O\left(\frac{(1+\eps)^2 n^3\NoOfCategories^3}{{\eps}^2} \log_{(1+\eps)} \left( \frac{\NoOfItmes w_{max}}{w_{min}}\right)\right)$,  $w_{max}:=\max\left\{ w_{j}^{(i)} \mid i\in[\NoOfCategories] \; \& \; j \in V_i\; \& \;  w_{j}^{(i)}>0 \right\}$ and $w_{min}:=\min\left\{w_{j}^{(i)} \mid i\in[\NoOfCategories] \; \& \; j \in V_i \; \& \; w_{j}^{(i)}>0 \right\}$. 
\end{theorem}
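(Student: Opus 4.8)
The plan is to solve $BN^{\min}$ (Problem~\ref{prob:number_min}) by the same two–level cardinality–constrained dynamic program used for $BN^{\max}$ (Theorem~\ref{the:alg_number_max}), but with the roles of weight and value exchanged: the value lower bound $L$ now plays the part that the capacity $B$ played there. Since Table~\ref{Final_res_table} promises \emph{no} violation of this constraint (the entry is $(1+\eps,0,0)$), we cannot round the values; instead we keep all values exact and round (scale) the \emph{weights}. The right scaling granularity is proportional to the optimum weight $W^{\ast}$, which we do not know, so the outer layer of the algorithm guesses it — and this is exactly what produces the single $\log_{1+\eps}(\cdot)$ factor (a lone logarithm, rather than the $\log^{3}$ of $BV^{\max}/BW^{\max}$, because a linear chain over the $\NoOfCategories$ categories suffices here and no depth-$\log\NoOfCategories$ ``combine tree'' is needed).

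First I would enumerate a guess $\tilde W\in\{w_{\min},(1+\eps)w_{\min},(1+\eps)^{2}w_{\min},\dots\}$, stopping once $\tilde W$ exceeds $\NoOfItmes w_{\max}\ge\sum_{i,j}w_{j}^{(i)}$; this gives $O\!\big(\log_{1+\eps}(\NoOfItmes w_{\max}/w_{\min})\big)$ guesses, and (when the instance is feasible and $W^{\ast}>0$, the case $W^{\ast}=0$ being checked directly using only weight-$0$ items) some guess satisfies $W^{\ast}\le\tilde W\le(1+\eps)W^{\ast}$, since any feasible solution uses at most $\NoOfItmes$ items so $W^{\ast}\le\NoOfItmes w_{\max}$, and $W^{\ast}\ge w_{\min}$. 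For a fixed $\tilde W$: discard every item with $w_{j}^{(i)}>\tilde W$ (no weight-$\le\tilde W$ solution can use it); set $\delta:=\eps\tilde W/\NoOfItmes$; round weights \emph{up}, $\bar w_{j}^{(i)}:=\lceil w_{j}^{(i)}/\delta\rceil\le\lceil\NoOfItmes/\eps\rceil$; and let $R:=\lceil\NoOfItmes(1+\eps)/\eps\rceil$ be the largest total rounded weight we will track.

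Then, for this $\tilde W$, build $\mathcal{A}(i,j,w',t)=$ the maximum total (unrounded) value of a subset of the first $j$ items of $V_{i}$ of cardinality exactly $t$ and total rounded weight exactly $w'$, for $w'\in\{0,\dots,R\}$ and $t\in\{0,\dots,|V_{i}|\}$, by the usual include/exclude recursion with the obvious base cases (value $-\infty$ if no such subset exists). Next build $\mathcal{B}(i,w')=$ the maximum total value of a subset of $\cup_{k=1}^{i}V_{k}$ of total rounded weight exactly $w'$ that respects the cardinality bounds of categories $1,\dots,i$, by $\mathcal{B}(i,w')=\max\{\mathcal{B}(i-1,w'-w'_{r})+\mathcal{A}(i,|V_{i}|,w'_{r},t)\mid l_{i}^{n}\le t\le u_{i}^{n},\ 0\le w'_{r}\le w'\}$. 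Because weights were rounded up, the \emph{true} weight of the subset recorded in $\mathcal{B}(\NoOfCategories,w')$ is at most $w'\delta$; so for each guess $\tilde W$ we keep the subset attaining $\min\{w'\delta\mid\mathcal{B}(\NoOfCategories,w')\ge L\}$, and the final answer is the best such subset over all guesses (``infeasible'' if none qualifies).

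For correctness I would argue two points. (i) Every subset we could output is genuinely feasible: its value is exactly the quantity $\mathcal{B}$ stored, hence $\ge L$ with no violation; the cardinality bounds are enforced cell-by-cell inside $\mathcal{B}$; and its true weight is at most the $w'\delta$ we minimized. (ii) For the guess $\tilde W^{\ast}\in[W^{\ast},(1+\eps)W^{\ast}]$ no optimal item is discarded, and the optimal solution has rounded weight $\sum_{j\in\OPT}\lceil w_{j}^{(i)}/\delta\rceil\le W^{\ast}/\delta+|\OPT|\le\NoOfItmes/\eps+\NoOfItmes=R$, so it is an admissible $\mathcal{B}$-entry with value $\ge L$; hence $\mathcal{B}(\NoOfCategories,w')\ge L$ for some $w'\le R$, and the bound we report is at most $R\delta\le\NoOfItmes(1+\eps)/\eps\cdot\eps\tilde W^{\ast}/\NoOfItmes=(1+\eps)\tilde W^{\ast}\le(1+\eps)^{2}W^{\ast}$ — running the scheme with parameter $\eps/3$ turns this into the stated $(1+\eps)$-approximation. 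For the running time, $|\mathcal{A}|=O((1+\eps)\NoOfCategories n^{2}\NoOfItmes/\eps)$ filled in $O(1)$ per cell, while $|\mathcal{B}|=O((1+\eps)\NoOfCategories\NoOfItmes/\eps)$ with $O((1+\eps)n\NoOfItmes/\eps)$ per cell (iterating $w'_{r}$ and $t$); using $\NoOfItmes\le n\NoOfCategories$ and multiplying by the $O(\log_{1+\eps}(\NoOfItmes w_{\max}/w_{\min}))$ guesses yields $O\!\big(\frac{(1+\eps)^{2}n^{3}\NoOfCategories^{3}}{\eps^{2}}\log_{1+\eps}\frac{\NoOfItmes w_{\max}}{w_{\min}}\big)$, as claimed. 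The main obstacle is exactly this no-slack requirement on the value bound: it rules out the value-rounding trick of $BN^{\max}$, forces the weight-scaling-with-an-outer-guess structure, and makes it necessary to verify that the rounded-weight axis is wide enough ($R\approx\NoOfItmes(1+\eps)/\eps$) to still contain the rescaled optimum — which is where the extra $(1+\eps)$ factors in the running time come from.
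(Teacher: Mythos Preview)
Your proposal is correct and follows essentially the same approach as the paper: guess the optimal weight up to a factor $(1+\eps)$, round weights up relative to that guess, and run the same two-level cardinality-constrained DP (the paper's $\mathcal{A}$ and $\mathcal{B}$ tables in Algorithm~\ref{alg:min_fairness_based_number}) with values kept exact so that the constraint $\ge L$ is never violated. The only cosmetic differences are that you explicitly discard items heavier than the guess and handle the $W^{\ast}=0$ edge case, and you bound the output weight by the slightly looser $(1+\eps)^{2}W^{\ast}$ before rescaling $\eps$, whereas the paper's Lemma~\ref{lema:number_min} obtains $(1+\eps)\OPT$ directly.
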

\begin{proof}
The algorithm approximately guesses the optimal weight in the beginning (Step \ref{guess_item_number_min}). According to the guess, it rounds the weights of all items so that the rounded weights lie in a small range (Step \ref{rounding_number_min}). Then it creates bundles of items from $V_i$, $\forall i \in [\NoOfCategories]$, having different rounded weights and cardinality (Step \ref{A_number_min}). It uses the dynamic programming table $\mathcal{A}$ for this. After that the algorithm combines the bundles from all categories to obtain the final solution using the dynamic programming table $\mathcal{B}$ (Step \ref{B_number_min}). The algorithm is described in Algorithm \ref{alg:min_fairness_based_number}. The following describes the proof of correctness of Algorithm \ref{alg:min_fairness_based_number}.\par	
	
	\RestyleAlgo{boxruled}
	\begin{algorithm}
		\caption{Algorithm for $BN^{min}$ (Problem \ref{prob:number_min})} \label{alg:min_fairness_based_number}
		\textbf{Input:} The sets $V_{i}$ of items, $\forall i\in [\NoOfCategories]$. $0 \leq l_{i}^{n}\leq u_{i}^{n}$ for $i\in[\NoOfCategories]$, and $\eps>0$, and value lower bound $L$ \\
		\textbf{Output:}  $S$ having the total weight at most $1+\eps$ times the weight of the optimal solution of Problem \ref{prob:number_min}, such that $l_{i}^{n} \leq \mid S \cap V_i\mid \leq u_{i}^{n}$, $\forall i \in [\NoOfCategories]$ and the total value of of $S$ is at least the given bound $L$.\\
		\begin{enumerate}
			\item \label{guess_item_number_min} If $\OPT$ is the weight of the optimal solution, then we can guess $b_{opt}$ such that $b_{opt} \leq \OPT \leq (1+\eps)b_{opt}$,
			\item \label{rounding_number_min}$\forall i\in [\NoOfCategories]$ and $\forall j \in  V_i$, let ${w}_{j}^{(i)}{'}:= \left \lceil \frac{w_{j}^{(i)}\NoOfItmes}{\eps b_{opt}} \right\rceil.$
			
			\item \label{A_number_min}Let $\mathcal{A}(i,j,w,t)$, $\forall i \in [\NoOfCategories]$, $\forall j \in   V_i $, $\forall t \in V_i\cup \{0\}$, $\forall w \in \left[ \left \lceil \frac{\NoOfItmes (1+2\eps)}{\eps}\right \rceil \right] \cup \{0\}$, be the dynamic programming table constructed in the following way,
			
			\begin{enumerate}
			\label{Ai_number_max} 
			
			\item \label{A1_number_min}$\forall i\in [\NoOfCategories]$,\\ $\mathcal{A}(i,1,{w}^{(i)}_{1}{'},1)=v^{(i)}_{1}$, $\mathcal{A}(i,1,0,0)=0$. $\mathcal{A}(i,1,w,.)=-\infty$, $\forall w \in \left[ \left \lceil \frac{\NoOfItmes (1+2\eps)}{\eps}\right \rceil \right] \setminus \{w_{1}^{(i)'}\}$.
			
			\item \label{Ai_number_min} $\forall i\in [\NoOfCategories]$ and $\forall j \in V_i\setminus\{1\}$, $\forall t\in V_i \cup \{0\}$, $\forall w\in \left[ \left \lceil \frac{\NoOfItmes (1+2\eps)}{\eps}\right \rceil \right] \cup \{0\}$,
			
				If $w<w_{j}^{(i)}$ or $t=0$, then $\mathcal{A}(i,j,w,t)= \mathcal{A}(i,j-1,w,t) $.\\
				Else ,
			\begin{equation*}
			\begin{split}
			\mathcal{A}(i,j,w,t):=\max&\left \{ \mathcal{A}(i,j-1,w-{w}_{j}^{(i)}{'},t-1)+v_{j}^{(i)}, \mathcal{A}(i,j-1,w,t) \mid \right. \\ 
			& \left. 0 \leq {w}_{j}^{(i)}{'} \leq w \;
			\&
			\;  {w}_{j}^{(i)}{'}\in \left[\left \lceil \frac{\NoOfItmes(1+2\eps)}{\eps}\right \rceil\right] \cup \{0\}\right\}. 
			\end{split}
			\end{equation*}

			\end{enumerate}

			\item \label{B_number_min}  Let $\mathcal{B}(i,w)$, $\forall i\in[\NoOfCategories]$, $\forall w \in \left[\left \lceil\frac{\NoOfItmes (1+2\eps)}{\eps}\right \rceil\right]\cup \{0\}$ be another dynamic programming table.
			
			\begin{enumerate}
			\item \label{B1_number_min}
			
			\[\mathcal{B}(1,w):=\max\{\mathcal{A}(1,\mid V_1\mid,w,t)\; |\;l_{1}^{n}\leq t\leq u_{1}^{n}\}.\]
			
			\item \label{BI_number_min}For $i \in [\NoOfCategories]\setminus\{1\}$,
			
			\begin{equation*}
			\begin{split}
			\mathcal{B}(i,w):= \max & \left\{ \mathcal{B}(i-1,w_r)+\mathcal{A}(i,\mid V_i \mid,w-w_r,t) \mid l_{i}^{n}\leq t\leq u_{i}^{n} \right. \\
			& \left. w_r \leq w \; \& \; w_r \in \left[\left \lceil\frac{\NoOfItmes(1+2\eps)}{\eps}\right \rceil\right] \cup \{0\} \right\}.
			\end{split}
			\end{equation*}

			\end{enumerate}

			\item \label{output_min_number}	Output the subset $S$ as follows,
			\[ \argmin_{w} \{ \mathcal{B}(\NoOfCategories,w) \mid \mathcal{B}(\NoOfCategories,w) \geq L \}. \]

		\end{enumerate}
		
	\end{algorithm}
	
	\paragraph{Property of $\mathcal{A}$. }We claim that $\mathcal{A}(i,j,w,t)$, $\forall i \in [\NoOfCategories]$, $\forall t\in V_i \cup \{0\}$, $j\in V_i$, $\forall w \in \left[ \left \lceil  \frac{\NoOfItmes\left(1+2\eps \right)}{\eps} \right \rceil \right] \cup \{0\}$, denotes the value of a maximum value subset of cardinality $t$ from first $j$ items of $V_i$ having total rounded weight $w$. Let $S{'}$ be the subset of $V_i$ satisfying above property for the entry $\mathcal{A}(i,j,w,t)$, $\forall i \in [\NoOfCategories]$, $\forall j \in V_i\setminus\{1\}$, $\forall t \in V_i\cup \{0\}$, $\forall w \in \left[ \left \lceil  \frac{\NoOfItmes\left(1+2\eps \right)}{\eps} \right \rceil \right] \cup \{0\}$.
	
	\begin{itemize}
	    \item If $j\in S{'}$, then $\mathcal{A}(i,j,w,t)$ is the sum of value of $j$ ($v_{j}^{(i)}$) and the value of maximum value subset from first $j-1$ items of $V_i$ having cardinality $t-1$ and the rounded weight $w-w_{j}^{(i)}{'}$ ($\mathcal{A}(i,j-1,w-w_{j}^{(i)}{'},t-1)$), which is equal to $\mathcal{A}(i,j-1,w-w_{j}^{(i)}{'},t-1)+v_{j}^{(i)}$.
	    \item If $j\notin S{'}$, then $\mathcal{A}(i,j,w,t)$ is equal to the value of maximum value subset from first $j-1$ items of $V_i$ having cardinality $t$ and the rounded weight $w$, which is equal to $\mathcal{A}(i,j-1,w,t)$.  
	\end{itemize}
	 The recursion in Step \ref{Ai_number_min} captures both of above possibilities. The Step \ref{A1_number_min} initializes base case for the recursion in Step \ref{Ai_number_min}. The entry of $\mathcal{A}$ also corresponds to respective subset. The entry of $\mathcal{A}$ could be $-\infty$, which indicates no subset. We use the notation $\mathcal{A}(i,j,w,t)$ to indicate both the entry and the subset.
	 
	\paragraph{Property of $\mathcal{B}$. }We claim that $\mathcal{B}(i,w)$, $\forall i \in [\NoOfCategories]$, $\forall w  \in \left[ \left \lceil  \frac{\NoOfItmes\left(1+2\eps \right)}{\eps} \right \rceil \right] \cup \{0\}$, denotes the value of a maximum value subset of $\cup_{j=1}^{i} V_j$ having the total rounded weight $w$, such that the fairness constraints are satisfies for all categories up to $i$. Let $S{'}$ be the subset of $\cup_{j=1}^{i} V_j$ satisfying above property for the entry $\mathcal{B}(i,w)$, $\forall i \in [\NoOfCategories]\setminus\{1\}$, $\forall w  \in \left[ \left \lceil  \frac{\NoOfItmes\left(1+2\eps \right)}{\eps} \right \rceil \right] \cup \{0\}$. If $\sum_{j\in S{'}\cap V_i}w_{j}^{(i)}{'}=w_r$, then $\mathcal{B}(i,w)$ is sum of values of maximum value subset of $\cup_{j=1}^{i-1}V_j$ having the total rounded weight $w-w_r$ that satisfies fairness condition for all categories up to $i-1$ ($\mathcal{B}(i-1,w-w_r)$), and the maximum value subset of $V_i$ having the total rounded weight $w_r$ that satisfies fairness condition for category $i$ ($\mathcal{A}(i,\mid V_i\mid,w_r,t)$ such that $l_{i}^{n}\leq t \leq u_{i}^{n}$). The recursion in Step \ref{BI_number_min} captures this for all possible $w_r$. The entry of $\mathcal{B}$ also corresponds to respective subset. The entry of $\mathcal{B}$ could be $-\infty$, which indicates no subset. We use the notation $\mathcal{B}(i,w)$ to indicate both the entry and the subset.  \par

	If $\OPT$ is the weight of the optimal solution of $BN^{min}$ (\Cref{prob:number_min}), then we can guess $b_{opt}$ such that $b_{opt} \leq OPT \leq (1+\eps)b_{opt}$, in time $O\left(\log_{1+\eps} \left( \frac{\NoOfItmes w_{max}}{w_{min}}\right)\right)$. This is true because of the inequality $w_{min} \leq OPT \leq \NoOfItmes w_{max}$. $S$ (Step \ref{output_min_number}) will be fair for all categories because of the property of $\mathcal{B}$. Also, $S$ will satisfy the knapsack value lower bound constraint because of the condition in the Step \ref{output_min_number}. \Cref{lema:number_min} proves this theorem. \par
	
	\paragraph{Running time analysis:} There are $O\left(\log_{1+\eps} \left( \frac{\NoOfItmes w_{max}}{w_{min}}\right)\right)$ possible values of $b_{opt}$ in Step \ref{guess_item_number_min}. The size of the table $\mathcal{A}$ is $O\left(\frac{(1+\eps)n\NoOfItmes^2}{\eps}\right)$, and we require $O\left(1\right)$ time to fill the each entry in it. The size of the table $\mathcal{B}$ is $O\left(\frac{(1+\eps)\NoOfCategories\NoOfItmes}{\eps}\right)$. We require $O\left(\frac{(1+\eps)n\NoOfItmes}{\eps}\right)$ time to fill each entry of the table $\mathcal{B}$. So, the total time required by the algorithm is $O\left(\frac{(1+\eps)^2 n^3\NoOfCategories^3}{\eps^2} \log_{1+\eps} \left( \frac{\NoOfItmes w_{max}}{w_{min}}\right)\right)$.
\end{proof}

\begin{theorem}\label{lema:number_min}
	If $\OPT$ is the weight corresponding to the optimal solution of $BN^{min}$ (Problem \ref{prob:number_min}) and $b_{opt}$ satisfies the inequality $b_{opt} \leq \OPT \leq (1+\eps)b_{opt}$, then the total weight of set returned in Step \ref{output_min_number} of Algorithm \ref{alg:min_fairness_based_number} is at most $(1+\eps)\OPT$.
\end{theorem}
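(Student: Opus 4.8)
The plan is to mirror the rounding argument of Theorem \ref{lema:number_max}, adapted to the covering setting (ceilings instead of floors, with the roles of value and weight exchanged). Let $O \subseteq \cup_{i=1}^{\NoOfCategories} V_i$ be an optimal solution of $BN^{min}$, so $O$ has true weight $\OPT$, value at least $L$, and satisfies all fairness bounds; let $S$ be the subset returned in Step \ref{output_min_number}. Write $W_O := \sum_{i=1}^{\NoOfCategories}\sum_{j \in V_i \cap O} {w}_{j}^{(i)}{'}$ and $W_S := \sum_{i=1}^{\NoOfCategories}\sum_{j \in V_i \cap S} {w}_{j}^{(i)}{'}$ for the rounded weights.

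The first step is to check that $W_O$ is a legal column index of $\mathcal{B}$. From the ceiling in Step \ref{rounding_number_min} we have $\frac{\NoOfItmes w_{j}^{(i)}}{\eps b_{opt}} \le {w}_{j}^{(i)}{'} \le \frac{\NoOfItmes w_{j}^{(i)}}{\eps b_{opt}} + 1$, so summing over $O$ and using $|O| \le \NoOfItmes$ together with $\OPT \le (1+\eps) b_{opt}$,
\[
W_O \;\le\; \frac{\NoOfItmes}{\eps b_{opt}} \OPT + \NoOfItmes \;\le\; \frac{\NoOfItmes(1+\eps)}{\eps} + \NoOfItmes \;=\; \frac{\NoOfItmes(1+2\eps)}{\eps},
\]
which (being an integer) is at most $\lceil \NoOfItmes(1+2\eps)/\eps\rceil$. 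Since $O$ respects the fairness bounds, has rounded weight exactly $W_O$, and value at least $L$, the property of $\mathcal{B}$ established above gives $\mathcal{B}(\NoOfCategories, W_O) \ge L$. Hence $W_O$ lies in the set minimized over in Step \ref{output_min_number}, so $W_S \le W_O$, and the returned subset $S$ is fair and has value at least $L$.

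The second step converts this back to true weight. Using ${w}_{j}^{(i)}{'} \ge \frac{\NoOfItmes w_{j}^{(i)}}{\eps b_{opt}}$, i.e. $w_{j}^{(i)} \le \frac{\eps b_{opt}}{\NoOfItmes}\,{w}_{j}^{(i)}{'}$, we obtain $\sum_{i,j \in S} w_{j}^{(i)} \le \frac{\eps b_{opt}}{\NoOfItmes} W_S \le \frac{\eps b_{opt}}{\NoOfItmes} W_O$. Substituting the bound $W_O \le \frac{\NoOfItmes}{\eps b_{opt}} \OPT + \NoOfItmes$ from the first step,
\[
\sum_{i=1}^{\NoOfCategories}\sum_{j \in V_i \cap S} w_{j}^{(i)} \;\le\; \OPT + \eps b_{opt} \;\le\; \OPT + \eps\,\OPT \;=\; (1+\eps)\OPT,
\]
where the last inequality uses the hypothesis $b_{opt} \le \OPT$. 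This is exactly the claimed bound.

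The only delicate point I foresee is the interaction between the table size and the guess: both directions of $b_{opt} \le \OPT \le (1+\eps) b_{opt}$ are needed — the upper direction to guarantee that $W_O$ does not overflow the range $\bigl[\lceil \NoOfItmes(1+2\eps)/\eps\rceil\bigr]$, so that Step \ref{output_min_number} can actually see a column at least as good as $W_O$, and the lower direction to turn the additive slack $\eps b_{opt}$ into $\eps\OPT$. Everything else is the same rounding estimate as in Theorem \ref{lema:number_max}, only with $\lceil\cdot\rceil$ in place of $\lfloor\cdot\rfloor$ and the roles of value and weight interchanged.
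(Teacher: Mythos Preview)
Your proof is correct and follows essentially the same route as the paper: both verify that the rounded weight $W_O$ of the optimum lies within the table's range (using $\OPT \le (1+\eps)b_{opt}$), deduce $W_S \le W_O$ from the optimality of Step~\ref{output_min_number}, and then convert back to true weights via $w_j^{(i)} \le \frac{\eps b_{opt}}{\NoOfItmes}{w}_j^{(i)}{'}$ to obtain $\sum_S w_j^{(i)} \le \OPT + \eps b_{opt} \le (1+\eps)\OPT$. The only cosmetic difference is that the paper runs the chain starting from $\sum_O w_j^{(i)}$ and working down, whereas you start from $\sum_S w_j^{(i)}$ and work up; the inequalities used are identical.
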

\begin{proof}
	Let $O \subseteq \cup_{i=1}^{\NoOfCategories} V_{i}$ be the set of items in the optimal solution, and $S \subseteq \cup_{i=1}^{\NoOfCategories} V_{i}$ be the set of items selected by Step \ref{output_min_number} of Algorihtm \ref{alg:min_fairness_based_number}. The total value of $O$ and $S$ is at least $L$ (Step \ref{output_min_number}). Because of the rounding in Step \ref{rounding_number_min} and the inequality $\OPT \leq (1+\eps)b_{opt}$, the total rounded weight of $O$ is in range $\left[ \left \lceil  \frac{\NoOfItmes\left(1+2\eps \right)}{\eps} \right \rceil \right] \cup \{0\}$. Since $S$ is the subset with total value at least $L$ having minimum rounded weight in the previous range (Step \ref{output_min_number}), the total rounded weight of $S$ is at most the total rounded weight of $O$.
	\begin{equation}\label{eq:rounded_weight_S_O_min}
	\sum_{i=1}^{\NoOfCategories}  \sum_{j \in V_i \cap O} {w}_{j}^{(i)}{'} \geq \sum_{i=1}^{\NoOfCategories} \sum_{j \in V_i \cap S} {w}_{j}^{(i)}{'}.
	\end{equation}    
	
	As per the Step \ref{rounding_number_min} of Algorithm \ref{alg:min_fairness_based_number},  $\forall i\in [\NoOfCategories]$ and $j\in V_i$,
	\begin{equation}\label{eq:rounded_eq_min}
	\frac{\eps b_{opt} ({w}_{j}^{(i)}{'}-1)}{\NoOfItmes} = \frac{\eps b_{opt} {w}_{j}^{(i)}{'}}{\NoOfItmes}-\frac{\eps b_{opt}}{\NoOfItmes} \leq w_{j}^{(i)} \leq \frac{\eps b_{opt} ({w}_{j}^{(i)}{'})}{\NoOfItmes}.
	\end{equation}
	
	So we get,
	
	\begin{align*}
	\sum_{i=1}^{\NoOfCategories}  \sum_{j\in V_i \cap O} w_{j}^{(i)} & \geq \frac{\eps b_{opt}}{\NoOfItmes} \left(\sum_{i=1}^{\NoOfCategories} \sum_{j \in V_i \cap O} ({w}_{j}^{(i)}{'}-1) \right) &\textrm{(Inequality \ref{eq:rounded_eq_min})}\\
	& \geq \frac{\eps b_{opt}}{\NoOfItmes} \left(\sum_{i=1}^{\NoOfCategories} \sum_{j \in V_i \cap O} {w}_{j}^{(i)}{'} \right)- \eps b_{opt}&\textrm{$\left(\sum_{i=1}^{l} \sum_{j \in V_i \cap S} 1 \leq \NoOfItmes \right)$} \\
	& \geq \frac{\eps b_{opt}}{\NoOfItmes} \left(\sum_{i=1}^{\NoOfCategories} \sum_{j \in V_i \cap S} {w}_{j}^{(i)}{'} \right)-\eps b_{opt} &\textrm{(Inequality \ref{eq:rounded_weight_S_O_min})}\\
	&\geq \left(\sum_{i=1}^{\NoOfCategories} \sum_{j \in V_i \cap S} {w}_{j}^{(i)} \right)-\eps b_{opt} &\textrm{(Inequality \ref{eq:rounded_eq_min})}
	\end{align*}
	 If the guess of $b_{opt}$ is correct, then we have $b_{opt} \leq \OPT$. So,
	\[ \left(\sum_{i=1}^{\NoOfCategories} \sum_{j \in V_i \cap S} w_{j}^{(i)} \right) \leq (1+\eps)\OPT. \]

\end{proof}

\subsection{Fairness based on bound on value}

\begin{problem}[$BV^{min}$]\label{prob:value_min}
Given a set of items, each belonging to  one of $\NoOfCategories$ categories, and numbers $l_{i}^{v}$ and $u_{i}^{v}$ for category $i$, $\forall i\in[\NoOfCategories]$, the problem is to find a subset that minimizes the total weight, such that the total value of items from category $i$ is between $l_{i}^{v}$ and $u_{i}^{v}$, $\forall i\in[\NoOfCategories]$, and the total value of the subset is at least the given value lower bound $L$.  
\end{problem}

We prove that it is $\NP$-hard to obtain a feasible solution of an instance of $BV^{min}$ (\Cref{prob:value_min}).

\begin{theorem}\label{hardness_result_bound_value_min}
	There is no polynomial time algorithm which outputs a feasible solution of $BV^{min}$ (\Cref{prob:value_min}), assuming $\PP \neq \NP$.
\end{theorem}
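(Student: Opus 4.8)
The plan is to reduce from subset sum (\Cref{subsetSum2}), exactly in the spirit of the proofs of \Cref{hardness_result_lowebound_value} and \Cref{hardness_result_lowebound_weight}. Given an instance of subset sum with set $I$ satisfying $\sum_{a \in I} a = 1$, I would build an instance of $BV^{min}$ with a single category ($\NoOfCategories = 1$): for each $a \in I$ create one item in $V_1$ whose weight and value both equal $a$. The difference from the max-knapsack setting is that min-knapsack has no capacity constraint to cap the total weight (and hence, since weight $=$ value here, the total value); the only knapsack-side constraint is the value lower bound $L$, which cannot by itself impose an upper bound. Hence I would pin the admissible value window of the unique category directly, setting $l_1^v = \tfrac12$ and $u_1^v = \tfrac12$, and take $L = \tfrac12$.

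With this construction, since there is only one category, the total value of any candidate subset $S \subseteq V_1$ equals both the value from category $1$ and the total value of the subset. The fairness constraint forces this common quantity into $[\tfrac12,\tfrac12] = \{\tfrac12\}$, and the constraint $\sum_{j \in S} v_j \ge L = \tfrac12$ is then automatically consistent. Thus a subset is feasible for the constructed $BV^{min}$ instance if and only if $\sum_{a \in S} a = \tfrac12$, i.e.\ if and only if it is a solution of the subset sum instance.

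Consequently, any polynomial-time algorithm that, on every input, either outputs a feasible solution of $BV^{min}$ or correctly reports that none exists would decide in polynomial time whether the subset sum instance is a yes-instance, contradicting $\PP \neq \NP$. I do not expect a genuine obstacle here: the only point requiring care is the observation that in the minimization setting the upper bound on the category value must be supplied through $u_1^v$ itself rather than through a weight budget; everything else is a verbatim analogue of the earlier hardness reductions. (If a ``loose'' upper bound is preferred, one may instead take $l_1^v = 0$, $u_1^v = \tfrac12$, $L = \tfrac12$, which gives the same equivalence.)
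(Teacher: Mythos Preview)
Your reduction is correct and essentially the same as the paper's: a single category with weight${}={}$value${}=a$ for each $a\in I$, and bounds chosen so that feasibility forces the total to equal $\tfrac12$. The paper in fact uses precisely your parenthetical ``loose'' variant $l_1^v=0$, $u_1^v=\tfrac12$, $L=\tfrac12$; your primary choice $l_1^v=u_1^v=\tfrac12$ works just as well.
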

\begin{proof}
	Given an instance of subset sum (Problem \ref{subsetSum2}), we can construct an instance of $BV^{min}$ (\Cref{prob:value_min}) in the following way. Let $\NoOfCategories=1$, $l_{1}^{v}=0$, $u_{1}^{v}=\frac{1}{2}$ and $L=\frac{1}{2}$. The set $V_1$ contains items which correspond to the numbers in $I$. An item corresponding to some $a\in I$ has a value and a weight equal to $a$. This proves that if we have an algorithm that outputs a feasible solution of $BV^{min}$ (\Cref{prob:value_min}), then we can solve subset sum (\Cref{subsetSum2}). But assuming $\PP\neq \NP$, this is not possible.	
\end{proof}

Theorem \ref{hardness_result_bound_value_min} implies that there does not exists polynomial time algorithm for $BV^{min}$ (\Cref{prob:value_min}). We give here an algorithm for $BV^{min}$ (\Cref{prob:value_min}) that might violates fairness constraints and knapsack constraint by a small amount.

\begin{theorem}\label{the:value_min}
		For any $\eps>0$, there exists an algorithm for $BV^{min}$ (\Cref{prob:value_min}) that outputs a set $S$ having the total weight at most the optimal weight of $BV^{min}$ (\Cref{prob:value_min}), such that $(1-\eps)l_{k}^{v} \leq \sum_{r \in S \cap V_k} v^{(k)}_r \leq (1+\eps)u_{k}^{v}$, $\forall k \in [\NoOfCategories]$, and the total value of items in $S$ is at least $\left( 1-\eps\right)L$. The running time of the algorithm is $O \left( \frac{n^2\NoOfCategories \log^3 \NoOfCategories \log_{1+\eps}^{3}\left (\frac{\NoOfItmes v_{max}}{v_{min}}\right)}{\eps}\right)$, where
	$v_{min}:=\min \left\{ v_{j}^{(i)} \mid i \in [\NoOfCategories] \; \& \; j \in  V_i \; \& \; v_{j}^{(i)}>0  \right\}$. and $v_{max}:=\max \left\{ v^{(i)}_{j} \mid i \in [\NoOfCategories] \; \& \; j \in V_i   \right\}$.
\end{theorem}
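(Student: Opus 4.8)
The plan is to follow the proof of \Cref{the:alg_value_max_knapsack} almost line for line, since $BV^{\min}$ differs from $BV^{\max}$ only in the objective (minimise total weight instead of maximise total value) and in the shape of the global constraint (total value at least $L$ instead of total weight at most $B$). The algorithm I would use is exactly Steps~1--3 of Algorithm~\ref{alg:fairness_based_value} — fix $\eps'=\left(1+\tfrac{3\eps}{8}\right)^{1/(\log_2\NoOfCategories+1)}-1$; for each category $i$ build the table $\mathcal{W}(i,j)$ of minimum-weight bundles of $V_i$ with total value roughly $v_{\min}\left(1+\tfrac{\eps}{8}\right)^{j}$ using \Cref{the:sub_value_max_knapsack}; and combine these bundles bottom-up along a perfect binary tree on the $\NoOfCategories$ categories into the table $\mathcal{X}(i,j)$, the weight of a minimum-weight subset of $\bigcup_{k\in\mathcal{T}(i)}V_k$ of total value roughly $v_{\min}(1+\eps')^{j}$ that respects the per-category value bounds for every $k\in\mathcal{T}(i)$ — together with a different final selection step. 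Properties~\ref{X_lower_value_max_first}--\ref{X_lower_value_max_third} of $\mathcal{X}$ (and the two properties of $\mathcal{W}$) and their proofs transfer verbatim, because the sub-routine \Cref{the:sub_value_max_knapsack} and the divide-and-conquer combination are oblivious to how the overall answer is read off at the root.

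The new selection step returns the subset of minimum weight among $\bigl\{\mathcal{X}(1,j)\ \big|\ j\in X_{range},\ \bigl(1-\tfrac{\eps}{2}\bigr)v_{\min}(1+\eps')^{j}\ge(1-\eps)L\bigr\}$. By Property~\ref{X_lower_value_max_first} every subset in this set has total value at least $(1-\eps)L$, and by Property~\ref{X_lower_value_max_second} every such subset satisfies $(1-\eps)l_k^v\le\sum_{r\in S\cap V_k}v_r^{(k)}\le(1+\eps)u_k^v$ for all $k$, via the same estimates $\bigl(1-\tfrac{\eps}{2}\bigr)\bigl(1+\tfrac{\eps}{8}\bigr)^{-1}>1-\eps$ and $\bigl(1+\tfrac{\eps}{2}\bigr)\bigl(1+\tfrac{\eps}{8}\bigr)<1+\eps$ used in the $BV^{\max}$ proof. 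To bound the returned weight by $\OPT$, as in \Cref{the:alg_value_max_knapsack} one first reduces to the case $L\ge v_{\min}$ (replace $L$ by $\max\{L,v_{\min}\}$: a nonempty feasible set already has value at least $v_{\min}$, and the empty-set case is trivial), which makes the integer $j^{*}$ determined by $v_{\min}(1+\eps')^{j^{*}+\log_2\NoOfCategories}\le L\le v_{\min}(1+\eps')^{j^{*}+\log_2\NoOfCategories+1}$ lie in $X_{range}$. Applying Property~\ref{X_lower_value_max_third} at the root (height $\log_2\NoOfCategories$) to an optimal solution $O$ — whose value is at least $L\ge v_{\min}(1+\eps')^{j^{*}+\log_2\NoOfCategories}$ and which meets the per-category bounds — shows $\mathcal{X}(1,j^{*})$ has weight at most the weight of $O$, i.e. at most $\OPT$; moreover $\mathcal{X}(1,j^{*})$ has value at least $\bigl(1-\tfrac{\eps}{2}\bigr)v_{\min}(1+\eps')^{j^{*}}\ge\bigl(1-\tfrac{\eps}{2}\bigr)L/(1+\eps')^{\log_2\NoOfCategories+1}=\bigl(1-\tfrac{\eps}{2}\bigr)L/\bigl(1+\tfrac{3\eps}{8}\bigr)>(1-\eps)L$, so $j^{*}$ survives the filter and the returned weight is at most $\OPT$.

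The running time is unchanged from \Cref{the:alg_value_max_knapsack}: building $\mathcal{W}$ costs $O\!\left(\tfrac{n^{2}\NoOfCategories\log_{1+\eps}(\NoOfItmes v_{\max}/v_{\min})}{\eps}\right)$ and building $\mathcal{X}$ costs $O\!\left(\NoOfCategories\log^{3}\NoOfCategories\,\log_{1+\eps}^{3}(\NoOfItmes v_{\max}/v_{\min})\right)$, so the total is $O\!\left(\tfrac{n^{2}\NoOfCategories\log^{3}\NoOfCategories\,\log_{1+\eps}^{3}(\NoOfItmes v_{\max}/v_{\min})}{\eps}\right)$. I do not expect a genuine obstacle: all the technical content — geometric value buckets, the binary-tree combination of categories, and the choice of $\eps'$ that keeps the compounded multiplicative loss below $1+\tfrac{3\eps}{8}$ — is already in the $BV^{\max}$ argument. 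The only place that needs care is handling the small-$L$ edge case and checking that $j^{*}\in X_{range}$ and passes the value filter, which is the short computation above.
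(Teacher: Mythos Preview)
Your proposal is correct and follows essentially the same approach as the paper: it reuses Steps~1--3 of Algorithm~\ref{alg:fairness_based_value} verbatim, invokes the same Properties~\ref{X_lower_value_max_first}--\ref{X_lower_value_max_third} of $\mathcal{X}$, and changes only the root-level selection step. The paper's output filter is $v_{\min}(1+\eps')^{j+\log_2\NoOfCategories}\ge L$ rather than your $\bigl(1-\tfrac{\eps}{2}\bigr)v_{\min}(1+\eps')^{j}\ge(1-\eps)L$, but both thresholds lead to the same conclusions via the same estimates, and your explicit handling of the $L<v_{\min}$ edge case is a small clarification the paper leaves implicit.
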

\begin{proof}

The algorithm creates bundles of items from $V_i$, $\forall i \in [\NoOfCategories]$, such that the total value of each bundle is in different ranges using Theorem \ref{the:sub_value_max_knapsack} (Step \ref{W_min}). It stores these bundles in the table $\mathcal{W}$. After that the algorithm combines bundles from all categories in divide and conquer fashion to obtain the final solution using the dynamic programming table $\mathcal{X}$ (Step \ref{X_min}). The total value of each bundle is represented by some power of $\left(1+\eps{'}\right)$ in the tables $\mathcal{W}$ and $\mathcal{X}$. So, the algorithm might lose at most $\left(1+\eps{'} \right)$ fraction of total value in one iteration of Step \ref{X_min}. The total fraction of value lost in the calculation after combining the bundles from all the categories in Step \ref{X_i_min} is at most $\left(1+\eps{'} \right)^{O(\log_{2}\NoOfCategories)}$. This is at most $\left(1+\eps\right)$ because of the choice of $\eps{'}$ in Step \ref{round_eps_value_max}. We describe the formal proof of the algorithm below.\par

		\RestyleAlgo{boxruled}
	\begin{algorithm}[!t]
		\caption{Algorithm for the $BV^{min}$ (\Cref{prob:value_min})} \label{alg:fairness_based_value_min}
		\textbf{Input:} The sets $V_{i}$ of items, $\forall i\in [\NoOfCategories]$, $0 \leq l_{i}^{v}\leq u_{i}^{v}$, $\forall i\in[\NoOfCategories]$, the value lower bound $L$ and $\eps>0$. \\
		\textbf{Output:} The subset $S$ of items, having the total weight at most the optimal weight of $BV^{min}$ (\Cref{prob:value_min}), such that $(1-\eps)l_{i}^{v} \leq \sum_{r \in S \cap V_i} \leq (1+\eps)u_{i}^{v} $, $\forall i \in [\NoOfCategories]$, and the total value of $S$ is at least $\left(1-\eps\right)L$.\\
		\begin{enumerate}
			\item \label{eps'_value_min}	Let $\eps{'}=\left(1+\frac{3\eps}{8}\right)^{\frac{1}{\log_2\NoOfCategories+1}}-1$.  Also let $W_{range}:=\left[ \left \lceil \log_{1+\frac{\eps}{8}} \left(\frac{\NoOfItmes v_{max}}{v_{min}}\right) \right \rceil \right]\cup \{0\}$  and
 			$X_{range}:=\left[ \left \lceil \log_{1+\eps{'}} \left(\frac{\NoOfItmes v_{max}}{v_{min}}\right) \right \rceil \right] \cup \{-\log_2 \NoOfCategories,-\log_2 \NoOfCategories+1,...-2,-1,0\}$. 
			
			\item \label{W_min} Let $\mathcal{W}(i,j), \forall i \in [\NoOfCategories]$, $\forall j \in W_{range}$ be the table where the entry $\mathcal{W}(i,j)$ indicates the weight of \\a subset of $V_i$ that is obtained by \Cref{the:sub_value_max_knapsack} by setting $V_i$ as $V{'}$, $v_{min}\left(1+\frac{\eps}{8}\right)^j$ as $v$ and $\frac{\eps}{6}$ as \\$\eps$ in \Cref{the:sub_value_max_knapsack}.
		\item \label{X_min}Let $\mathcal{X}(i,j)$, $\forall i\in [2\NoOfCategories-1]$, $\forall j \in X_{range}$, be the DP table constructed as follows.
			
			\begin{enumerate}
			\item \label{X_1_min}
			$\forall i \in [\NoOfCategories]$, $\forall j\in X_{range}$,
			\begin{equation*}
			\begin{split}
			\mathcal{X}(\NoOfCategories-1+i,j):=\min & \left \{\mathcal{W}(i,j{''}) \mid  \;
		   \left(1+\frac{\eps}{8}\right)^{j{''}}\geq \left(1+\eps{'}\right)^{j}  \; \& \; v_{min}\left(1+\frac{\eps}{8}\right)^{j{''}+1}  \right.\\ 
		   &\left.  \geq l_{i}^{v} \; \& \; v_{min}\left(1+\frac{\eps}{8}\right)^{j{''}-1}\leq u_{i}^{v}\; \&  \;  j{''} \in W_{range}   \right\} .
			\end{split}
			\end{equation*}
			
			 If the set satisfying above condition is empty, then set $\mathcal{X}(i,j)$ to $\infty$.
			\item \label{X_i_min} $\forall i\in[\NoOfCategories-1]$, $\forall j{'},j{''} \in X_{range}$, we have
			
			\begin{equation*}
			\begin{split}
			\mathcal{X}(i,j) := \min & \left\{ \mathcal{X}(2i,j{'})+\mathcal{X}(2i+1,j{''}) \mid \; \left(1+\eps{'}\right)^{j} \leq \left(1+\eps{'}\right)^{j{'}}+\left(1+\eps{'}\right)^{j{''}} \right\} .
			\end{split}
			\end{equation*}
			If the set satisfying above condition is empty, then set $\mathcal{X}(i,j)$ to $\infty$.
			
			\end{enumerate}

		\item \label{Output_value_min} Output the subset $S$ as follows, 
		\[ \min \left\{ \mathcal{X}(1,j) \mid v_{min}\left(1+\eps{'}\right)^{j+\log_2 \NoOfCategories} \geq L \right\} . \]	
		\end{enumerate}
		
	\end{algorithm}
	
The tables $\mathcal{W}$ (Step \ref{W_min}) and $\mathcal{X}$ (Step \ref{X_min}) are same as the tables created by the algorithm (Algorithm \ref{alg:fairness_based_value}) in \Cref{the:alg_value_max_knapsack}. As proved in \Cref{the:alg_value_max_knapsack}, the entries in these tables satisfy following properties.

\paragraph{Properties of $\mathcal{W}$.}We claim that the entry $\mathcal{W}(i,j)$,  $\forall i\in[\NoOfCategories]$, $j \in W_{range}$, indicates the weight of the subset of $V_i$ that satisfies the two properties listed below. The entry of $\mathcal{W}$ also corresponds to respective subset. The entry of $\mathcal{W}$ could be $\infty$, which indicates no subset. We use the notation $\mathcal{W}(i,j)$ to indicate both the entry and the subset. 
\begin{enumerate}
	\item \label{W_lower_value_min_1} If the entry $\mathcal{W}(i,j)$ is finite, then the total value of the corresponding subset is in between $\left(1-\frac{\eps}{2}\right)$ $\left(1+\frac{\eps}{8}\right)^j v_{min}$ and $\left(1+\frac{\eps}{2}\right)\left(1+\frac{\eps}{8}\right)^{j}v_{min}$.
	\item \label{W_lower_value_min_2} The total weight of the subset corresponding to $\mathcal{W}(i,j)$ is at most the total weight of any subset of $V_i$ having the total value in between $v_{min}\left(1-\frac{\eps}{6}\right)\left(1+\frac{\eps}{8}\right)^{j}$ and $v_{min}\left(1+\frac{\eps}{6}\right)$ $\left(1+\frac{\eps}{8}\right)^{j}$ (Since $\frac{\eps}{8}< \frac{\eps}{6}$, the total weight of $\mathcal{W}(i,j)$ will be less than or equal to the total weight of any subset  of $V_i$ having the total value in between $\left(1+\frac{\eps}{8}\right)^j v_{min}$ and $\left(1+\frac{\eps}{8}\right)^{j+1}v_{min}$).
\end{enumerate}

The table $\mathcal{W}$ is created in Step \ref{W_max} of the algorithm. This step uses Theorem \ref{the:sub_value_max_knapsack}. We get both above properties because of the guarantee of Theorem \ref{the:sub_value_max_knapsack}.

Let $\mathcal{T}$ be a perfect binary tree with $\NoOfCategories$ leaf nodes. For simplicity, assume that $\NoOfCategories$ is power of $2$. Although, we can prove the same result by slight modification of the proof when $\NoOfCategories$ is not power of $2$. The total number of nodes in $\mathcal{T}$ will be $2\NoOfCategories-1$.  Each node in $\mathcal{T}$ could be represented by an index number from $1$ to $2\NoOfCategories-1$ with root at index $1$.
The node at an index $i$ has an left child at an index $2i$ and right child at an index $2i+1$, $\forall i\in[\NoOfCategories-1]$. Let the leaf node at an index $(\NoOfCategories-1)+i$ represent the category $i$, $\forall i\in [\NoOfCategories]$. Let $\mathcal{T}(i)$ denote the set of categories represented by the node $i$ of $\mathcal{T}$. Specifically, $\mathcal{T}(\NoOfCategories-1+i)=\{i\}$, $\forall i\in[m]$. Also, $\mathcal{T}(i)=\mathcal{T}(2i)\cup\mathcal{T}(2i+1)$, $\forall i\in [\NoOfCategories-1]$.
\paragraph{Properties of $\mathcal{X}$.}We claim that the entry $\mathcal{X}(i,j)$,  $\forall i\in[2\NoOfCategories-1]$, $\forall j \in X_{range}$, indicates the weight of a subset of $\cup_{k\in \mathcal{T}(i)} V_{k}$ that satisfies three properties mentioned below. The entry of $\mathcal{X}$ also corresponds to the  respective subset. The entry of $\mathcal{X}$ could be $\infty$, which indicates no subset. We use the notation $\mathcal{X}(i,j)$ to indicate both the entry and the subset. 
\begin{enumerate}
	\item \label{X_lower_value_min_first} If the entry $\mathcal{X}(i,j)$ is finite, then
	\[\sum_{k\in \mathcal{T}(i)} \sum_{r \in \mathcal{X}(i,j)\cap V_{k}} v^{(k)}_{r} \geq \left(1-\frac{\eps}{2}\right)\left(1+\eps{'}\right)^{j}v_{min}\].
	\item \label{X_lower_value_min_second}
	If the entry $\mathcal{X}(i,j)$ is finite, then $\forall k \in \mathcal{T}(i)$, \[\left(1-\frac{\eps}{2}\right)\left(1+\frac{\eps}{8}\right)^{-1}l_{k}^{v}\leq  \sum_{r \in  \mathcal{X}(i,j)\cap V_k} v_{r}^{(k)}  \leq \left(1+\frac{\eps}{2}\right)\left(1+\frac{\eps}{8}\right)u_{k}^{v}.\]
	
	\item \label{X_lower_value_min_third} Let $i$ be a node of $\mathcal{T}$, $\forall i \in [2\NoOfCategories-1]$, having the distance $t$ from leaves, $ t\in\left[\log_{2}\NoOfCategories\right]\cup \{0\}$.  For all $O{'} \subseteq \cup_{k\in \mathcal{T}(i)} V_{k}$ having the total value at least $\left(1+\eps{'}\right)^{j+t}v_{min}$, and $ l_{k}^{v} \leq \sum_{r \in  O{'}\cap V_{k}}v_{r}^{(k)}\leq u_{k}^{v}$, $\forall k \in \mathcal{T}(i)$, the total weight of the subset  $\mathcal{X}(i,j)$ is at most the total weight of $O{'}$. 
\end{enumerate}
The total value of any optimal subset is at least $L$. Because of the condition $v_{min}\left(1+\eps\right)^{j+\log_2 \NoOfCategories} \geq L$ in Step \ref{Output_value_min}, we can apply Property \ref{X_lower_value_min_third} of $\mathcal{X}$ to any optimal subset and $S$. This proves that the the total weight of $S$ is at most the total weight of an optimal solution. By Property \ref{X_lower_value_min_first} of $\mathcal{X}$ and the condition $v_{min}\left(1+\eps\right)^{j+\log_2 \NoOfCategories} \geq L$ in Step \ref{Output_value_min}, the total value of $S$ is at least
$\left(1-\frac{\eps}{2} \right)\left(1+\eps{'}\right)^{-\log_2 \NoOfCategories}L$. By Step \ref{eps'_value_min} of the algorithm, this number is at least 
\[\left(1-\frac{\eps}{2} \right)\left(1+\frac{3\eps}{8}\right)^{-1} L >  \left(1-\frac{\eps}{2} \right)\left(1-\frac{3\eps}{8}\right) L > (1-\eps)L. \] \par
If the entry $\mathcal{X}(1,j)$ is finite, Property \ref{X_lower_value_min_second} of $\mathcal{X}$ implies that the total value of items from $V_i \cap \mathcal{X}(1,j)$ is in between $\left(1-\frac{\eps}{2} \right)\left(1+\frac{\eps}{8} \right)^{-1}l_{i}^{v}$ and $\left(1+\frac{\eps}{2} \right)\left(1+\frac{\eps}{8} \right)u_{i}^{v}$ for all $i\in [\NoOfCategories]$. The total value of items from $V_i \cap \mathcal{X}(1,j)$ is at least $\left(1-\frac{\eps}{2} \right)\left(1+\frac{\eps}{8} \right)^{-1}l_{i}^{v}> \left(1-\frac{\eps}{2} \right)\left(1-\frac{\eps}{8} \right) l_{i}^{v}> (1-\eps)l_{i}^{v}$. The total value of items from $V_i \cap \mathcal{X}(1,j)$ is at most $\left(1+\frac{\eps}{2} \right)\left(1+\frac{\eps}{8} \right)u_{i}^{v} < \left(1+\frac{6}{8}\eps \right)u_{i}^{v}<\left(1+\eps \right)u_{i}^{v}$.
The algorithm is similar to the algorithm (Algorithm \ref{alg:fairness_based_value}) in \Cref{the:alg_value_max_knapsack}. The only difference is the output step (Step \ref{Output_value_min}). The running time analysis of Algorithm \ref{alg:fairness_based_value} is also applicable to this algorithm.
\end{proof}

\subsection{Fairness based on bound on weight}
\begin{problem}[$BW^{min}$]\label{prob:weight_min}
Given a set of items, each belonging to one of $\NoOfCategories$ categories, and  a lower bound $l_{i}^{w}$ and an upper bound $u_{i}^{w}$ such that $0\leq l_{i}^{w} \leq u_{i}^{w}$ for each category $i\in[\NoOfCategories]$, the goal is to find a subset that minimizes the total weight, such that the total weight of items from each category $i$ is in between the given bounds $l_{i}^{w}$ and $u_{i}^{w}$, $\forall i\in[m]$, and the total value of the subset is at least the given bound $L$. 
\end{problem}

We prove that it is $\NP$-hard to obtain the feasible solution of an instance of $BW^{min}$ (\Cref{prob:weight_min}). 

\begin{theorem}\label{hardness_result_bound_weight_min}
	There is no polynomial time algorithm which outputs a feasible solution of $BW^{min}$ (\Cref{prob:weight_min}), assuming $\PP \neq \NP$.
\end{theorem}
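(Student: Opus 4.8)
The plan is to imitate the reduction used in the proofs of \Cref{hardness_result_lowebound_value}, \Cref{hardness_result_lowebound_weight}, and \Cref{hardness_result_bound_value_min}, reducing from subset sum (\Cref{subsetSum2}). Given an instance with set $I$ of non-negative rationals with $\sum_{a\in I}a=1$, I would construct an instance of $BW^{min}$ (\Cref{prob:weight_min}) with a single category, $\NoOfCategories=1$, containing one item per element $a\in I$; the item corresponding to $a$ has weight equal to $a$ and value equal to $a$. The fairness bounds for the only category are set to $l_{1}^{w}=u_{1}^{w}=\frac{1}{2}$, and the value lower bound is $L=\frac{1}{2}$.

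The next step is to establish the feasibility correspondence. Because there is only one category, the weight-fairness constraint forces the total weight of any feasible subset $S$ to be exactly $\frac{1}{2}$; since weight equals value for every item, this also gives total value $\frac{1}{2}\ge L$, so the value lower bound is automatically satisfied and imposes nothing extra. Hence the constructed $BW^{min}$ instance admits a feasible solution if and only if there exists $S\subseteq I$ with $\sum_{a\in S}a=\frac{1}{2}$, i.e. if and only if the subset sum instance is a yes-instance. Consequently, a polynomial-time algorithm that always outputs a feasible solution of $BW^{min}$ whenever one exists would, together with trivial polynomial-time verification of the output, decide subset sum in polynomial time, contradicting $\PP\neq\NP$.

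I expect no genuine obstacle here: the argument is a verbatim adaptation of the earlier hardness proofs. The only point requiring a little care is that, unlike the max-knapsack variants where the knapsack capacity $B$ supplied the matching upper bound on the relevant quantity, the min-knapsack problem has no capacity constraint, so the exact target $\frac{1}{2}$ must be pinned down entirely through the pair $l_{1}^{w},u_{1}^{w}$ by setting them equal; everything else is routine.
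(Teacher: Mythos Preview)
Your proposal is correct and follows essentially the same subset-sum reduction as the paper's proof. Your choice $l_{1}^{w}=u_{1}^{w}=\tfrac{1}{2}$ is in fact the right one: the paper prints $u_{1}^{w}=1$, which appears to be a typo carried over from the $BW^{\max}$ reduction, since with $u_{1}^{w}=1$ the full set $I$ would always be feasible and the reduction would fail.
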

\begin{proof}
	Given an instance of subset sum (\Cref{subsetSum2}), we can construct an instance of $BW^{min}$ (\Cref{prob:weight_min}) in following way. Let $\NoOfCategories=1$, $l_{1}^{w}=\frac{1}{2}$, $u_{1}^{w}=1$ and $L=\frac{1}{2}$. The set $V_1$ contains items which correspond to the numbers in $I$. An item corresponding to some $a\in I$ has a value and a weight equal to $a$. This proves that if we have an algorithm that outputs a feasible solution of $BW^{min}$ (\Cref{prob:weight_min}), then we can solve subset sum (\Cref{subsetSum2}). But assuming $\PP\neq \NP$, this is not possible.	
\end{proof} 

Theorem \ref{hardness_result_bound_value_min} implies that there does not exists polynomial time algorithm for $BW^{min}$ (\Cref{prob:weight_min}). We give here an algorithm for $BW^{min}$ (\Cref{prob:weight_min}) that might violates fairness constraints for a category by a small amount.

\begin{theorem}\label{the:weight_min}
For any $\eps>0$, there exists an algorithm for $BW^{min}$ (\Cref{prob:weight_min}) that outputs a solution $S$ whose total weight is at most $1+\eps$ times of the total weight of an optimal solution, and $ (1-\eps) l_{i}^{w} \leq \sum_{j \in V_i \cap S}w^{(i)}_{j} \leq (1+\eps)u_{i}^{w} $, $\forall i \in [\NoOfCategories]$, and the total value of $S$ is at least $L$. The running time of the algorithm is  $O \left( \frac{n^2\NoOfCategories \log^3 \NoOfCategories \log_{(1+\eps)}^{3} \left(\frac{W}{w_{min}}\right)}{\eps}\right)$. 
Here $w_{min}:= \min\{w_{j}^{(i)} \mid i \in [\NoOfCategories]\; \& \; j \in  V_i \; \& \; w_{j}^{(i)}>0  \}$ and $W=\sum_{i=1}^{\NoOfCategories} \sum_{j \in V_j}w^{(i)}_{j}$. 
\end{theorem}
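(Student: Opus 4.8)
The plan is to mirror the algorithm and analysis of Theorem~\ref{the:weight_max_knapsack} for $BW^{\max}$, reusing Algorithm~\ref{alg:fairness_based_lower_weight} almost verbatim with two modifications; this is the same reduction strategy by which \Cref{the:value_min} reuses the algorithm of \Cref{the:alg_value_max_knapsack}. Since $BW^{\min}$ (\Cref{prob:weight_min}) has no explicit capacity, I would replace the capacity $B$ by $W=\sum_{i=1}^{\NoOfCategories}\sum_{j\in V_j}w^{(i)}_{j}$ throughout the ranges of the tables $\mathcal{Y}$ and $\mathcal{Z}$, keeping $\eps{'}=(1+\tfrac{3\eps}{8})^{1/(\log_2\NoOfCategories+2)}-1$ as in Step~\ref{eps'_weight_max}; this is harmless since every feasible solution has total weight at most $W$. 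Second, I would change the output step so that, rather than returning the maximum-value entry over the whole range, it returns the subset stored in $\mathcal{Z}(1,j)$ for the \emph{smallest} weight bucket among the indices $j$ with $\mathcal{Z}(1,j)\ge L$, i.e.\ $\min\{(1+\eps{'})^{j}w_{\min}\mid \mathcal{Z}(1,j)\ge L\}$ with the associated subset. The table $\mathcal{Y}$ is still built by invoking \Cref{the:sub_weight_max_knapsack} (\Cref{prob:weight_sub_max_knapsack}) with $V_i$ as $V{'}$, $w_{\min}(1+\tfrac{\eps}{8})^{j}$ as $w$, and $\tfrac{\eps}{6}$ as $\eps$, and $\mathcal{Z}$ is still the binary-tree combination of these bundles, so $\mathcal{Y}$ and $\mathcal{Z}$ satisfy exactly Properties~\ref{Y_lower_weight_max_1}--\ref{Y_lower_weight_max_2} and \ref{Z_lower_weight_max_first}--\ref{Z_lower_weight_max_third} established in the proof of Theorem~\ref{the:weight_max_knapsack} (with $B$ replaced by $W$); I would quote those rather than re-prove them.

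For correctness, let $O$ be an optimal solution of $BW^{\min}$, of total weight $\OPT$, so $O$ obeys all per-category weight bounds and has total value at least $L$. Let $j^{*}$ be the smallest index with $\OPT\le (1+\eps{'})^{\,j^{*}-\log_2\NoOfCategories}w_{\min}$; since $\OPT\le W$, this $j^{*}$ lies in the index range. Applying Property~\ref{Z_lower_weight_max_third} at the root (whose distance from the leaves is $\log_2\NoOfCategories$) to $O$ shows that $\mathcal{Z}(1,j^{*})$ has value at least the value of $O$, hence at least $L$, so $j^{*}$ is feasible for the output step; because the entries of $\mathcal{Z}$ carry exact values (no rounding is applied to values in $BW$ problems), the returned subset has value at least $L$, with no violation of this constraint. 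By Property~\ref{Z_lower_weight_max_first} the returned subset has total weight at most $(1+\tfrac{\eps}{2})(1+\eps{'})^{j^{*}}w_{\min}$, and minimality of $j^{*}$ gives $(1+\eps{'})^{\,j^{*}-1-\log_2\NoOfCategories}w_{\min}<\OPT$, so $(1+\eps{'})^{j^{*}}w_{\min}<(1+\eps{'})^{\log_2\NoOfCategories+1}\OPT<(1+\tfrac{3\eps}{8})\OPT$ by the choice of $\eps{'}$; hence the output weight is below $(1+\tfrac{\eps}{2})(1+\tfrac{3\eps}{8})\OPT<(1+\eps)\OPT$. The fairness bounds are immediate from Property~\ref{Z_lower_weight_max_second}: each $\sum_{r\in V_k\cap S}w^{(k)}_{r}$ lies in $[(1-\tfrac{\eps}{2})(1+\tfrac{\eps}{8})^{-1}l^{w}_{k},(1+\tfrac{\eps}{2})(1+\tfrac{\eps}{8})u^{w}_{k}]\subseteq[(1-\eps)l^{w}_{k},(1+\eps)u^{w}_{k}]$. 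The running time is that of the $BW^{\max}$ algorithm with $B$ replaced by $W$, namely $O\!\left(\frac{n^{2}\NoOfCategories\log^{3}\NoOfCategories\log_{1+\eps}^{3}(W/w_{\min})}{\eps}\right)$.

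The main obstacle, exactly as in Theorem~\ref{the:weight_max_knapsack}, is controlling the multiplicative error that accumulates while the bundles from the $\NoOfCategories$ categories are merged through the $\Theta(\log\NoOfCategories)$ levels of the binary tree $\mathcal{T}$: each level contributes a factor $(1+\eps{'})$ to the weight estimate, and one must check that $\eps{'}$ is chosen so that $(1+\eps{'})^{\log_2\NoOfCategories+1}\le 1+\tfrac{3\eps}{8}$ while still being large enough that $\log_{1+\eps{'}}(\cdot)=O(\log_{1+\eps}(\cdot))$ for the running-time claim — the same trade-off already balanced there. A secondary point I would take care over is that the smallest feasible bucket $j^{*}$ is genuinely close to $\OPT$ (so the weight blow-up is only $1+\eps$), which is precisely where the optimality and feasibility of $O$ enter; and that rebasing from capacity $B$ to total weight $W$ does not affect Property~\ref{Z_lower_weight_max_third}, which concerns bounded-weight subsets $O{'}$ and is independent of any capacity parameter. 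Finally, since $BW^{\min}$ also admits the subset-sum reduction of \Cref{hardness_result_bound_weight_min}, the $\pm\eps$ slack in the weight-fairness constraints is unavoidable, so an algorithm of this exact flavor is essentially the best achievable.
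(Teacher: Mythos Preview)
Your proposal is correct and follows essentially the same approach as the paper: reuse the tables $\mathcal{Y}$ and $\mathcal{Z}$ from the $BW^{\max}$ algorithm with $B$ replaced by $W$, change the output step to select the smallest bucket $j$ with $\mathcal{Z}(1,j)\ge L$, and invoke Properties~\ref{Z_lower_weight_max_first}--\ref{Z_lower_weight_max_third} to bound the weight by $(1+\eps)\OPT$ and the per-category weights by $(1\pm\eps)$ of the fairness bounds. The only cosmetic difference is that the paper sets $\eps{'}=(1+\tfrac{3\eps}{8})^{1/(\log_2\NoOfCategories+1)}-1$ rather than keeping the exponent $\log_2\NoOfCategories+2$ from $BW^{\max}$, but your choice is more conservative and the bound $(1+\eps{'})^{\log_2\NoOfCategories+1}\le 1+\tfrac{3\eps}{8}$ still holds.
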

\begin{proof}
The algorithm for theorem is described in Algorithm \ref{alg:min_fairness_based_weight}.
The algorithm creates bundles of items from $V_i$, $\forall i \in [\NoOfCategories]$, such that the total weight of each bundle is in different ranges using Theorem \ref{the:sub_weight_max_knapsack} (Step \ref{Y_lower_max}). It stores these bundles in the table $\mathcal{Y}$. After that the algorithm combines bundles from all categories to obtain the final solution using the dynamic programming table $\mathcal{Z}$ (Step \ref{Z_max}). The total weight of each bundle is represented by some power of $\left(1+\eps{'}\right)$ in the tables $\mathcal{Y}$ and $\mathcal{Z}$. The algorithm might over calculate at most $\left(1+\eps{'} \right)$ fraction of total weight in table $\mathcal{Y}$ (Step \ref{Y_lower_max}). The total fraction of weight over calculated after combining bundles from all categories in Step \ref{Z_i_max} is at most $\left(1+\eps{'} \right)^{O(\log_{2}\NoOfCategories)}$. This is at most $\left(1+\eps\right)$ because of the choice of $\eps{'}$ in Step \ref{eps'_weight_max}. We describe the formal proof of the algorithm below.\par

The tables $\mathcal{Y}$ (Step \ref{Y_lower_min}) and $\mathcal{Z}$ (Step \ref{Z_min}) are same as the tables created by the algorithm (Algorithm \ref{alg:fairness_based_lower_weight}) in \Cref{the:weight_max_knapsack}. As proved in \Cref{the:weight_max_knapsack}, the entries in these tables satisfy following properties. 

\RestyleAlgo{boxruled}
\begin{algorithm}[!t]
	\caption{Algorithm for the $BW^{min}$ (\Cref{prob:weight_min})} \label{alg:min_fairness_based_weight}
	\textbf{Input:} The sets $V_{i}$ of items, $\forall i\in [\NoOfCategories]$. $0\leq l_{i}^{w}\leq u_{i}^{w}$ for $i\in[\NoOfCategories]$. The value lower bound $L$ and $\eps>0$. \\
	\textbf{Output:} The subset $S$ having the total weight at most $1+\eps$ times the weight of the optimal solution of $BW^{min}$ (\Cref{prob:weight_min}), such that $(1-\eps)l_{i}^{w} \leq \sum_{r \in S \cap V_i} \leq (1+\eps)u_{i}^{w}  $, $\forall i \in [\NoOfCategories]$, and the total value of $S$ is at least $L$.\\
	\begin{enumerate}
	\item \label{eps'_weight_min}	Let $\eps{'}=\left(1+\frac{3\eps}{8}\right)^{\frac{1}{\log_{2}\NoOfCategories+1}}-1$. 
			\item \label{Y_lower_min} Let $\mathcal{Y}(i,j), \ \forall i \in [\NoOfCategories]$, $\forall j \in \left[ \left \lceil \log_{1+\frac{\eps}{8}} \left(\frac{W}{w_{min}}\right) \right \rceil \right]\cup \{0\}$ be the table, where the entry $\mathcal{Y}(i,j)$ indicates the value of a subset of $V_i$ that is obtained by \Cref{the:sub_weight_max_knapsack} by  setting $V_i$ as $V{'}$, $w_{min}\left(1+\frac{\eps}{8}\right)^j$ as $w$ and $\frac{\eps}{6}$ as $\eps$ in \Cref{the:sub_weight_max_knapsack}.
			
			\item \label{Z_min}Let $\mathcal{Z}(i,j)$, $\forall i\in[2\NoOfCategories-1]$ and $\forall j \in \left[ \left \lceil \log_{1+\eps{'}} \left(\frac{W}{w_{min}}\right) \right \rceil+\log_2 \NoOfCategories \right]\cup \{0\}$, be the DP table \\constructed as follows.
			
			\begin{enumerate}
				
				\item \label{Z_1_min} $\forall i\in[\NoOfCategories]$				\begin{equation*}
				\begin{split}
				\mathcal{Z}(m-1+i,j):=\max&\left\{\mathcal{Y}(i,j{''}) \mid j{''} \in \left[ \left \lceil \log_{1+\frac{\eps}{8}} \left(\frac{W}{w_{min}}\right) \right \rceil \right]\cup \{0\} \; \& \right.\\
				&\left.  \left(1+\eps{'}\right)^{j{''}+1}w_{min} \geq l_{i}^{w}\; \& \; \left(1+\eps{'}\right)^{j{''}-1}w_{min} \leq u_{i}^{w} \right. \\ 
				& \left. \left(1+\frac{\eps}{8} \right)^{j{''}} \leq \left(1+\eps{'}\right)^{j}\right\} .
				\end{split}
				\end{equation*}
				If the set satisfying above condition is empty, then set $\mathcal{Z}(1,j)$ to $-\infty$.

				\item \label{Z_i_min}  $\forall i\in[\NoOfCategories-1]$ and any $j{'},j{''} \in \left[ \left \lceil \log_{1+\eps{'}} \left(\frac{W}{w_{min}}\right) \right \rceil +\log_2 \NoOfCategories \right]\cup \{0\}$, we have

				\begin{equation*}
				\begin{split}
				\mathcal{Z}(i,j):= \max & \left\{\mathcal{Z}(2i,j{'})+\mathcal{Y}(2i+1,j{''}) \mid\;   \left(1+\eps{'}\right)^{j} \geq \left(1+\eps{'}\right)^{j{'}}+\left(1+\eps{'}\right)^{j{''}} \right \} .
				\end{split}
				\end{equation*}	
				If the set satisfying above condition is empty, then set $\mathcal{Z}(i,j)$ to $-\infty$.
			\end{enumerate}	
				
		\item \label{bound_weight_selection_min}
		Output the subset $S$ as follows,
		\[ \argmin_{j} \left\{ \mathcal{Z}(1,j) \mid j\in  \left[ \left \lceil \log_{1+\eps{'}} \left(\frac{W}{w_{min}}\right) \right \rceil+\log_2 \NoOfCategories \right]\cup \{0\} \; \& \; \mathcal{Z}(1,j)\geq L \right\} . \]
	\end{enumerate}
	
\end{algorithm}

\paragraph{Properties of $\mathcal{Y}$. }We claim that the entry  $\mathcal{Y}(i,j)$,  $\forall i\in[\NoOfCategories]$, $\forall j \in \left[ \left \lceil \log_{1+\frac{\eps}{8}} \left(\frac{W}{w_{min}}\right) \right \rceil \right]\cup \{0\}$, indicates the value of a subset of $V_i$ that satisfies the two properties listed below. The entry of $\mathcal{Y}$ also corresponds to respective subset. The entry of $\mathcal{Y}$ could be $-\infty$, which indicates no subset. We use the notation $\mathcal{Y}(i,j)$ to indicate both the entry and the subset.

\begin{enumerate}
	\item \label{Y_lower_weight_min_1} If the entry $\mathcal{Y}(i,j)$ is finite, then the total weight of the corresponding subset is in between $\left(1-\frac{\eps}{2}\right)$ $\left(1+\frac{\eps}{8}\right)^j w_{min}$ and $\left(1+\frac{\eps}{2}\right)\left(1+\frac{\eps}{8}\right)^{j}w_{min}$.
	\item \label{Y_lower_weight_min_2} The total value of the subset corresponding to $\mathcal{Y}(i,j)$ is at least the total value of any subset of $V_i$ having the total weight in between $\left(1-\frac{\eps}{6}\right)\left(1+\frac{\eps}{8}\right)^j w_{min}$ and $\left(1+\frac{\eps}{6}\right)\left(1+\frac{\eps}{8}\right)^{j}w_{min}$. 
\end{enumerate}

The table $\mathcal{Y}$ is created in step \ref{Y_lower_max} of the Algorithm \ref{alg:fairness_based_lower_weight}. This step uses \Cref{the:sub_weight_max_knapsack} for creation of $\mathcal{Y}$. We get both the properties of $\mathcal{Y}$ because of the guarantee of \Cref{the:sub_weight_max_knapsack}.

Let $\mathcal{T}$ be a perfect binary tree with $\NoOfCategories$ leaf nodes. For simplicity, assume that $\NoOfCategories$ is power of $2$. Although, we can prove the same result by slight modification of the proof when $\NoOfCategories$ is not power of $2$. The total number of nodes in $\mathcal{T}$ will be $2\NoOfCategories-1$.  Each node in $\mathcal{T}$ could be represented by an index number from $1$ to $2\NoOfCategories-1$ with root at index $1$.
The node at an index $i$ has an left child at an index $2i$ and right child at an index $2i+1$, $\forall i\in[\NoOfCategories-1]$. Let the leaf node at an index $(\NoOfCategories-1)+i$ represent the category $i$, $\forall i\in [\NoOfCategories]$. Let $\mathcal{T}(i)$ denote the set of categories represented by the node $i$ of $\mathcal{T}$. Specifically, $\mathcal{T}(\NoOfCategories-1+i)=\{i\}$, $\forall i\in[m]$. Also, $\mathcal{T}(i)=\mathcal{T}(2i)\cup\mathcal{T}(2i+1)$, $\forall i\in [\NoOfCategories-1]$.

\paragraph{Properties of $\mathcal{Z}$. }We claim that the entry $\mathcal{Z}(i,j)$,  $\forall i\in[2\NoOfCategories-1]$, $j \in \left[ \left \lceil \log_{1+\eps{'}} \left(\frac{W}{w_{min}}\right) \right \rceil +\log_2 \NoOfCategories \right]\cup \{0\}$, indicates the value of the subset of $\cup_{k\in \mathcal{T}(i)} V_{k}$ that satisfies the following three properties. The entry of $\mathcal{Z}$ also corresponds to the  respective subset. The entry of $\mathcal{Z}$ could be $-\infty$, which indicates no subset. We use the notation $\mathcal{Z}(i,j)$ to indicate both the entry and the subset.
\begin{enumerate}
	\item \label{Z_lower_weight_min_first} If the entry $\mathcal{Z}(i,j)$ is finite, then $\sum_{k\in \mathcal{T}(i)} \sum_{r \in \mathcal{Z}(i,j)\cap V_{k}} w^{(k)}_{r} \leq \left(1+\frac{\eps}{2}\right)\left(1+\eps{'}\right)^{j}w_{min}$.
	\item \label{Z_lower_weight_min_second} If the entry $\mathcal{Z}(i,j)$ is finite, then $\left(1-\frac{\eps}{2}\right)\left(1+\frac{\eps}{8}\right)^{-1}l_{k}^{w} \leq \sum_{r \in V_{k} \cap \mathcal{Z}(i,j)} w_{r}^{(k)} \leq \left(1+\frac{\eps}{2}\right)\left(1+\frac{\eps}{8}\right)u_{k}^{w}$, $\forall k \in \mathcal{T}(i)$.
	\item \label{Z_lower_weight_min_third} Let $i$ be a node of $\mathcal{T}$, $\forall i \in [2\NoOfCategories-1]$, having the distance $t$ from leaves, $ t\in\left[\log_{2}\NoOfCategories\right]\cup \{0\}$.  For all $O{'} \subseteq \cup_{k\in \mathcal{T}(i)} V_{k}$ having the total weight at most $\left(1+\eps{'}\right)^{j-t}w_{min}$, and $ l_{k}^{w} \leq \sum_{r \in  O{'}\cap V_{k}}w_{r}^{(k)}\leq u_{k}^{w}$, $\forall k \in \mathcal{T}(i)$, the total value of the subset  $\mathcal{Z}(i,j)$ is at least the total value of $O{'}$.
\end{enumerate}

Let $\OPT$ be the weight of an optimal solution and $j\in \left[ \left \lceil \log_{1+\eps{'}} \left(\frac{W}{w_{min}}\right) \right \rceil +\log_2 \NoOfCategories \right]\cup \{0\}$ be the number such that,

\begin{equation}\label{eq:optimal_weight_j_weight_min}
    w_{min}\left(1+\eps{'}\right)^{j-\log_2\NoOfCategories-1}\leq \OPT \leq w_{min}\left(1+\eps{'}\right)^{j-\log_2\NoOfCategories}. 
\end{equation}

So we can apply Property \ref{Z_lower_weight_min_third} of $\mathcal{Z}$ to an optimal set and $\mathcal{Z}(1,j)$. So, the total value of $\mathcal{Z}(1,j)$ is at least the total value of an optimal set, which is more than $L$. So, the subset $\mathcal{Z}(1,j)$ is feasible in Step \ref{bound_weight_selection_min} of the algorithm. If $S$ in Step \ref{bound_weight_selection_min} corresponds to some $j{'}\in \left[ \left \lceil \log_{1+\eps{'}} \left(\frac{W}{w_{min}}\right) \right \rceil +\log_2 \NoOfCategories \right]\cup \{0\}$, then $j{'} \leq j$. The total weight of $S$ at most 

	\begin{align*}
&\left(1+\frac{\eps}{2}\right)\left(1+\eps{'}\right)^{j{'}}w_{min} &\textrm{(Property \ref{Z_lower_weight_max_first} of $\mathcal{Z}$)}\\
&\leq \left(1+\frac{\eps}{2}\right)\left(1+\eps{'}\right)^{j}w_{min} &\textrm{($j{'} \leq j$)}\\
&\leq \left(1+\frac{\eps}{2}\eps\right)\left(1+\eps{'}\right)^{\log_2 \NoOfCategories+1}\OPT &\textrm{(Inequality \ref{eq:optimal_weight_j_weight_min})}\\
&\leq \left(1+\frac{\eps}{2}\right)\left(1+\frac{3\eps}{8}\right)\OPT &\textrm{(Step \ref{eps'_weight_min} of the algorithm)}\\
&<\left(1+\eps\right)\OPT.
	\end{align*}

If the entry $\mathcal{Z}(1,j)$ is finite, Property \ref{X_lower_value_max_second} of $\mathcal{Z}$ implies that the total weight of items from $V_i \cap \mathcal{Z}(1,j)$ is in between $\left(1-\frac{\eps}{2} \right)\left(1+\frac{\eps}{8} \right)^{-1}l_{i}^{w}$ and $\left(1+\frac{\eps}{2} \right)\left(1+\frac{\eps}{8} \right)u_{i}^{w}$ for all $i\in [\NoOfCategories]$. The total weight of items from $V_i \cap \mathcal{Z}(1,j)$ is at least $\left(1-\frac{\eps}{2} \right)\left(1+\frac{\eps}{8} \right)^{-1}l_{i}^{w}> \left(1-\frac{\eps}{2} \right)\left(1-\frac{\eps}{8} \right) l_{i}^{w}> (1-\eps)l_{i}^{w}$. The total weight of items from $V_i \cap \mathcal{Z}(1,j)$ is at most $\left(1+\frac{\eps}{2} \right)\left(1+\frac{\eps}{8} \right)u_{i}^{w} < \left(1+\frac{6}{8}\eps \right)u_{i}^{w}<\left(1+\eps \right)u_{i}^{w}$. \par

The algorithm is similar to the algorithm (Algorithm \ref{alg:fairness_based_lower_weight}) in \Cref{the:weight_max_knapsack}. The only differences are the output step (Step \ref{bound_weight_selection_min}) and the size of tables $\mathcal{Y}$ and $\mathcal{Z}$. So, the running time analysis of algorithm is similar to the running time of Algorithm \ref{alg:fairness_based_lower_weight}.

\end{proof}

\section{Conclusion}
In this paper, we studied various fairness notions for knapsack problems. Studying fairness notions in related problems such as multiple knapsack problem \cite{Jansen}, multidimensional knapsack problem \cite{Frieze}, submodular knapsack problem \cite{Lee}, is an interesting open problem. \par

\paragraph{Acknowledgements.} AL was supported in part by SERB Award ECR/2017/003296 and a Pratiksha Trust Young Investigator Award.

\bibliographystyle{alpha}
\bibliography{bibfile}

\end{document}